\let \eucal \mathscr
\newenvironment{proofidea}{%
  \par\noindent\textit{Proof sketch.}\hspace*{1em}%
}{\hfill$\square$\par}
\newcommand{\id}{\mathds{1}}
\newcommand{\ii}{\mathrm{i}}
\newcommand{\cA}{\mathcal{A}}
\newcommand{\cB}{\mathcal{B}}
\newcommand{\cC}{\mathcal{C}}
\newcommand{\cE}{\mathcal{E}}
\newcommand{\cH}{\mathcal{H}}
\newcommand{\cI}{\mathcal{I}}
\newcommand{\cM}{\mathcal{M}}
\newcommand{\cP}{\mathcal{P}}
\newcommand{\cS}{\mathcal{S}}
\newcommand{\cL}{\mathcal{L}}
\newcommand{\cU}{\mathcal{U}}
\newcommand{\cX}{\mathcal{X}}
\newcommand{\cT}{\mathcal{T}}
\newcommand{\be}{\begin{equation}}
\newcommand{\ee}{\end{equation}}
\newcommand{\cN}{\mathcal{N}}
\newcommand{\ChJa}{\text{Choi-Jamio{\l}kowski }}
\DeclareMathOperator*{\bigtimes}{\scalebox{1.5}{$\times$}}
\newcommand{\xmark}{\ding{55}}%
\newcommand{\proc}[1]{\mathbf{#1}}
\newcommand{\fcj}{F_\text{CJ}}
\newcommand{\channelF}{\mathsf{F}}
\newcommand{\channelD}{\mathsf{D}}
\newcommand{\fid}[2]{{\tr \sqrt{\sqrt{#1} #2 \sqrt{#1}}}}
\newcommand{\tightto}{\!\to\!}
\renewcommand{\d}{\mathrm{d}}
\newcommand{\qfi}{\mathrm{QFI}}
\newcommand{\mqfi}{\mathsf{QFI}}
\newtheorem{theorem}{Theorem}
\newtheorem{corollary}{Corollary}[theorem]
\newtheorem{lemma}[theorem]{Lemma}
\newtheorem{proposition}[theorem]{Proposition}
\newcommand{\yg}[1]{{\color{blue}{#1}}}
\newcommand{\eqnref}[1]{Eq.~(\ref{#1})}
\newcommand{\eqnsref}[2]{Eqs.~(\ref{#1}) and (\ref{#2})}
\newcommand{\secref}[1]{Sec.~\ref{#1}}
\newcommand{\appref}[1]{App.~\ref{#1}}
\begin{document}

\title{Cloning Quantum Channels}

\author{Pavel Sekatski}
\affiliation{Department of Applied Physics, University of Geneva, Switzerland}
\author{Yelena Guryanova}
\affiliation{QuantumBasel, Schorenweg 44b, 4144 Arlesheim, Switzerland}
\affiliation{Center for Quantum Computing and Quantum Coherence (QC2), University of Basel, Petersplatz 1, Basel, 4001, Switzerland}
\author{Naga Bhavya Teja Kothakonda}
\affiliation{F\'isica Te\'orica: Informaci\'o i Fen\`omens Qu\'antics, Departament de Fisica, Universitat Aut\`onoma de Barcelona, E-08193, Bellaterra
(Barcelona), Spain} 
\author{Michalis Skotiniotis}
\affiliation{Departamento de Electromagnetismo y F\'isica de la Materia, Universidad de Granada, 18010 Granada, Spain}
\affiliation{Institute Carlos I for Theoretical and Computational Physics, Universidad de Granada, 18010 Granada, Spain}
\begin{abstract}

We consider the problem of deterministically cloning quantum channels with respect to the best attainable rate and the highest quality, so-called \textit{optimal cloning}. We demonstrate that cloning quantum states is, in-fact, equivalent to cloning the trash-and-replace channel and therefore the former is a special case of the more general problem. By appealing to higher-order quantum operations (quantum processes) we construct a unified framework to deal with the most general cloning tasks and establish necessary conditions for a family of channels to exhibit super-replication—a quadratic cloning rate with vanishing error. We find that noisy phase-gate channels satisfy these conditions, and we construct the explicit super-replicating process for the task. Conversely, we find that the criteria are not met by the full set of noisy unitary gates; classical noise channels; or amplitude damping channels, whose respective cloning rates are at most linear. In this paradigm, we not only derive new results, but also refigure known ones.
We 
derive a strong converse for state cloning,  and
for unitary channels we construct an alternative super-replication process to that of D\"ur {\it et al.}~\cite{Dur2015} and Chiribella {\it et al.}~\cite{Chiribella2015} based on a measure-and-prepare process, which allows us to establish a direct connection between optimal channel cloning and Bayesian channel estimation. Finally we give an SDP  algorithm to search for optimal cloning processes and study the advantage of coherent vs measure-and-prepare protocols on concrete examples.

\end{abstract}
\maketitle

\section{Introduction}

The \emph{no-cloning theorem}~\cite{Park1970,Wootters1982,Dieks1982} is one of the cornerstones of quantum 
information theory, fundamentally distinguishing it from its classical counterpart. Given $N$ copies of a quantum 
system prepared in an arbitrary unknown 
state it is impossible to generate $M>N$ identical copies. 
The no-cloning theorem is deeply connected to the no-signaling 
principle~\cite{Herbert1982,Gisin1998,Ghosh1999,Gedik2013,Sekatski2015} and underpins several fundamental 
primitives in quantum cryptography~\cite{Cerf2002} and computing~\cite{Duan1998}.

Whilst exact cloning is impossible, one may still ask what is the best that can be achieved quantum mechanically. 
Given access to $N$ identical quantum resources, an optimal cloning map produces an output approximating $M>N$ 
ideal copies of the same resource. The performance of such a map depends crucially on how the quality of the 
clones is assessed---typically through suitable distance measures or operationally motivated quantifiers such as 
the fidelity.  For quantum states, optimal cloning maps have been derived in various 
settings~\cite{Buzek1996,Gisin1997,Werner1998, Bruss1998,Duan1998, Pati1999,Chefles1999, Bruss2000, DAriano2003}.

A closely related task to cloning is \emph{replication}.  A replication map 
takes $N$ quantum resources and produces an output that approximates $M>N$ ideal 
copies with a bounded error
$\epsilon\ll1$. In this setting the central quantity of 
interest is the \emph{replication rate}: the scaling of $M$ in the large-$N$ 
limit. Deterministic replication maps for quantum states were shown to achieve a 
linear rate~\cite{Chiribella2013}, whereas for certain subsets of states, for instance
so-called clock states $\{\ket{\psi_t}=e^{\ii t H}\ket{\psi} |t\in \mathds{R}\}$, a 
quadratic replication rate can be achieved using probabilistic replication 
maps~\cite{Chiribella2013}. A map exhibiting a quadratic replication rate is 
called \emph{super-replicating}.

Beyond quantum states, cloning and replication concepts were, more recently, extended to the case of unitary 
gates~\cite{Chiribella2008a,Dur2015, Chiribella2015}. Optimal probabilistic $1\rightarrow 2$ cloning of unitary 
gates was established in~\cite{Chiribella2008a} and, unlike the case of quantum states, deterministic 
super-replication of unitary gates in arbitrary dimensions was demonstrated in~\cite{Dur2015, Chiribella2015}.  
In this work we extend cloning and replication to the most general operations allowed by quantum theory: 
completely positive trace-preserving (CPTP) maps, or quantum channels for short.     

The paper is organised as follows.  After establishing some necessary notation, \secref{sec:background} reviews 
the requisite mathematical background for cloning and replicating of quantum channels, introducing the relevant 
distance measures that we will use throughout the remainder of the work. The section concludes by showing that state 
cloning is equivalent to cloning a corresponding family of trash-and-replace channels 
(Proposition~\ref{prop:cloning-trash}), providing a unifying framework for cloning and replication of all quantum 
resources (states, gates, and channels). \secref{sec:upper_bound} connects quantum channel cloning with 
binary channel discrimination and metrology, which allows us to establish general bounds on channel 
cloning (Proposition~\ref{prop:cloning_bound_discrimination}) as well as necessary conditions for 
super-replication (Proposition~\ref{prop:necessary_conds_superrep}). In turn, these results give us the tools
to prove 
that deterministic replication is limited to a linear rate for three classes of resources:
quantum states (Corollary~\ref{cor:state_rep}); 
unitary gates under the diamond fidelity (Corollary~\ref{cor:unitaries_diamond}); classical noise channels 
(Corollary~\ref{cor:classical noise}). We also show that super-replication of the set of all unitary gates of 
dimension $d$ vanishes, even in the presence of minimal noise (Corollary~\ref{cor:no robustness}), restricting 
deterministic processes to linear scaling.

\secref{sec: clon proc} deals with the construction of optimal cloning and replication processes for 
particular families
of quantum channels. In \secref{sec:sdp} we show that the problem of finding the optimal 
cloning/replicating process can be approximated using a semidefinite program (SDP). In \secref{sec:measureandprep} 
we consider measure-and-prepare processes and establish a quantitative connection with Bayesian channel 
estimation. Somewhat surprisingly we find that measure-and-prepare processes are capable of deterministically 
super-replicating all unitary gates acting on qubits, in stark contrast to the {\emph coherent} processes 
in~\cite{Dur2015,Chiribella2015}. \secref{sec:examples} focuses on cloning and replication of specific families 
of qubit channels, including noisy phase gates (\secref{sec:noisy_unitaries}) (with noise acting before or 
after the application of the gate); Pauli-noise channels (\secref{sec:BF}); and amplitude-damping channels 
(\secref{sec:AD}). We summarize and conclude in Section~\ref{sec:concl}.

\section{Background}\label{sec:background}

In this section we review the necessary background behind cloning and replication of 
quantum states and channels.  In \secref{sec:Notation} we provide a brief review 
of the notation that we will use throughout this work, and in \secref{sec:Cloning} we 
formulate the task of cloning in its most general mathematical form, 
introducing the relevant figures of merit that will be used.

\begin{figure*}
    \centering
    \includegraphics[width=0.75\linewidth]{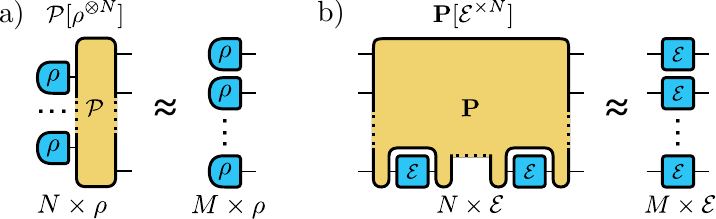}
    \caption{{\bf a)} States: the optimal cloning map  $\cP$ for quantum states receives $N$ copies of an unknown 
    quantum state $\rho$ and produces an output state $\cP[\rho^{\otimes N}]$ that is maximally close to 
    $\rho^{\otimes M}$ (see Eq.~\ref{eq:wc_fid}).  {\bf b)} Channels: The optimal cloning process $\proc{P}$ for 
    quantum channels uses $N$ copies (queries) of an unknown channel $\cE$ to produce a global channel $\proc{ P}
    [\cE^{\times N}]$ which is maximally close to $\cE^{\otimes M}$  (see Eq.~\ref{eq:cloneChan}). Notice that 
    cloning of channels is not equivalent to cloning of their \ChJa states, nor their unitary dilations.}
    \label{fig:enter-label}
\end{figure*}

\subsection{Mathematical Preliminaries and Notation}\label{sec:Notation}

Let us begin by establishing some important definitions and notation.
Quantities (particularly distances) written in sans-serif font will denote the optimized value of the function they pertain to. 
The set of all linear operators acting on a Hilbert space $\cH$ is denoted by $\cL(\cH)$. Density operators 
in $\cL(\cH)$ are written using Greek letters; uppercase Greek letters denote rank-one density operators, 
i.e., $\Psi=\ketbra{\psi}{\psi}, \ket\psi\in\cH$, while lowercase Greek letters will denote general density 
operators. Positive operator valued measures (POVMs) are denoted as $\{{\rm E}_\ell\, \vert {\rm E}_\ell \geq 0\, 
,\sum_\ell {\rm E}_\ell=\id\}$. Quantum channels are completely positive and trace-preserving (CPTP) linear maps 
that take operators from $\cL(\cH_1)$ to $\cL(\cH_2)$, and map density operators to density operators.  
Quantum channels are indicated  with calligraphic upper case Latin characters, i.e., 
\(\cE:\cL(\cH_1)\tightto \cL(\cH_2)\), apart from the identity channel 
which is written
${\rm id}$. So as to avoid 
notation clutter we shall use the short hand $\cL(\cH_1\tightto\cH_2):=\cL(\cH_1)\tightto\cL(\cH_2)$.  
Quantum instruments, denoted as $\{\cI_\ell \}$, are collections of CP maps such that $\sum_\ell \cI_\ell$ is 
also TP. 

Central to the following discussion are superchannels~\cite{Chiribella2008,Chiribella2008c, Chiribella2009}, also 
known as processes or higher-order operations 
(terms we will use interchangeably). These are linear maps 
that take quantum channels from $\cL(\cH_1\tightto\cH_2)$ to $\cL(\cH_{\rm in}\tightto\cH_{\rm out})$. 
A process has three properties: it is trace preserving preserving (TPP) (transforms trace preserving maps to trace
preserving maps); it completely positive preserving (CPP)
(transforms completely positive maps to completely positive maps); it is completely completely positive preserving (CCPP) (the process remains CPP if the channels that it acts on are trivially extended).

We shall denote quantum processes using 
boldface uppercase Latin characters $\proc{P}:\cL(\cH_1\tightto\cH_2)\tightto\cL(\cH_{\rm in}\tightto
\cH_{\rm out})$.  A specific case $({\rm \dim}(\cH_{\rm in})=1)$ are state producing process that map CPTP maps 
to states and will be denoted with the uppercase Latin character 
$\proc{S}:\cL(\cH_1\tightto\cH_2)\tightto\cL(\cH_{\rm out})$. We shall also consider processes which map channels 
to quantum instruments, denoted as $\{\proc{M}_\ell \}$, where each 
$\proc{M}_\ell:\cL(\cH_1\tightto\cH_2)\tightto \cL(\cH_{\rm in}\tightto\cH_{\rm out})$ maps CPTP maps to CP maps 
and $\proc{M}=\sum_\ell \proc{M}_\ell$ returns a TP map. Finally, a specific case  of these superchannels 
$({\rm \dim}(\cH_{\rm in})={\rm \dim}(\cH_{\rm out})=1)$ are measurement processes (also known as channel POVMs), 
$\{\proc{E}_\ell\, \vert \proc{E}_\ell :\cL(\cH_1\tightto\cH_2)\tightto \mathds{R}\}$ which map channels to real 
numbers---the probability of the classical output $\ell$\footnote{All measurement processes can be e.g. realized 
$ \proc{E}_{\ell}[\cdot] = {\rm tr}\, {\rm E}_{\ell}\, \proc{S}[\cdot]$ as a state producing process $\proc{S}$ 
followed by a measurement of the output system represented by a POVM $\{{\rm E}_{\ell}\}$.}.

Particular care must be taken when considering processes that take $N>1$ channels as inputs $\proc{P}: \cL(\cH_1 \tightto
\cH_2)^{\otimes N}\tightto\cL(\cH_{\rm in }\tightto\cH_{\rm out})$. Such 
processes can be divided into two distinct classes: causally ordered and non-causally ordered. The former is comprised of  
processes that operate on $N$ channels in sequence or in parallel and admit a circuit 
implementation~\cite{Chiribella2008c, Gour2019}, while the latter is a more general class and consists of processes that operate on the 
$N$ channels in an indefinite causal order (see \cite{Oreshkov2012,Chiribella2013a} for examples). For more 
details on the characterization of quantum processes we refer the reader to the recent 
review~\cite{Milz2024characterising}.

For both classes we employ the notation $\proc{P}[\cE^{\times N}]$ 
to indicate that the process may use the $N$ input channels in an arbitrary fashion, i.e., 
with
physical access 
to $N$ queries of the channel $\cE$ and not simply a single query of the channel 
$\cE^{\otimes N}$.

Finally, for the purposes of this paper, we shall assign particular understanding to \textit{measure-and-prepare} processes. 
We define these to be a measurement process $ \{\proc{E}_\ell\}$ followed by the implementation of a channel $\cP_\ell$ conditional on the observed classical outcome $\ell$, i.e., $\proc{P}_{\rm M\&P}[\cE^{\times N}] = \sum_\ell \proc{E}_\ell[\cE^{\times N}] \,\cP_\ell.$ In our work, processes that are not measure-and-prepare are termed {\it coherent}. See e.g.~\cite{chen2020entanglement} for a more fine-grained characterization.

\subsection{Quantum Cloning and Replication}\label{sec:Cloning}

Consider a family of quantum states $\eucal{S}\subset\cL(\cH)$.  An $N\tightto M$ quantum cloning map takes as 
input $N$ copies of some state $\rho\in\eucal{S}$ and produces $M>N$ copies.  Hence, an $N\tightto M$ cloning map 
is described by a CPTP map $\cP: \cL(\cH^{\otimes N}\tightto\cH^{\otimes M})$, illustrated in 
Fig.~\ref{fig:enter-label}a.  The performance of the cloning map strongly depends on how one chooses to quantify 
the quality of the copies.  By and large the most widely used figure of merit is the \emph{fidelity}, which 
for two quantum states, $\rho,\sigma\in\cL(\cH)$, is defined~\cite{NielsenChuang00}
	\be\label{eq:fidhalf}
     		F(\rho, \sigma):=\tr |\sqrt{\rho}\sqrt{\sigma}|=\fid{\rho}{\,\sigma}\, .
	\ee

We can quantify the performance of a cloning map based on the \emph{global fidelity} between the 
output $\cP[\rho^{\otimes N}]$ and the ideal target state $\rho^{\otimes M}$~\cite{Werner1998, 
Chefles1999,Bruss2000}, or based on the \emph{per-copy fidelity} between any one of the outputs 
$\sigma_i=\tr_{\neg i}\, \cP[\rho^{\otimes N}]$ and $\rho$~\cite{Buzek1996,Gisin1997,Bruss1998,Bruss2000}.  
Cloning maps with high global fidelity automatically yield high per-copy fidelity (but not vice versa), 
and for this reason we focus on cloning tasks where the figure of merit is the global fidelity from hereon in.


The precise prescription for the optimal cloning map depends strongly on the figure of merit used.  For 
$N\tightto M$ state cloning, optimal cloning maps have been constructed for the \emph{worst case} 
fidelity~\cite{Werner1998} 
	\be
		\channelF^{\eucal{S}}(N,M) : = \inf_{\rho\in\eucal{S}}\, 
		F\left(\cP\left[\rho^{\otimes N}\right], \rho^{\otimes M}
		\right)  \, ,
	\label{eq:wc_fid}
	\ee
as well as for the \emph{average fidelity}~\cite{Gisin1997, Bruss2000, DAriano2003} 
$\max_{\cP}\int_{\eucal{S}} \, p(\rho)\,F\left(\cP\left[\rho^{\otimes N}\right], \rho^{\otimes M}\right)\, 
\mathrm{d}\rho$ where $p(\rho)$ denotes any prior knowledge regarding the state $\rho\in\eucal{S}$ to be cloned. 
Observe that the average fidelity is always greater or equal to the worst case one. Furthermore, in the former, 
the optimal map depends on the choice of the distribution $p(\rho)$ therefore, in this work, we focus on the 
worst case fidelity. Whatever figure of merit is chosen the task of designing the optimal cloning map reduces to 
finding the CPTP map $\cP\in\cL(\cH^{\otimes N}\tightto\cH^{\otimes M})$ that optimizes the figure of merit.  
Notice that the optimization problem depends only on $N$, $M$, and $\eucal{S}$. Hereafter we shall suppress the 
dependence on $N$ and $M$ and simply write $\channelF$ to denote the optimized cloning fidelity.  Optimal quantum 
cloning maps have been constructed for the set of all pure states of dimension 
$d$~\cite{Buzek1996,Gisin1997,Werner1998}, as well as for the so-called phase covariant set of pure states for 
dimensions $d=2,3$~\cite{Bruss2000,DAriano2003} (the set of states that produce a uniform probability 
distribution when measured in the computational basis). 

Restricting the set of input states that the map has to clone is akin to relaxing the problem. Another 
relaxation of the cloning task is to allow for \emph{probabilistic} cloning maps. Such maps are known to produce 
higher fidelity copies than their deterministic counterparts, albeit at the cost of producing no clones with a 
finite probability~\cite{Duan1998, Pati1999, Chefles1999, Chiribella2013}. The more general case of cloning mixed 
states (also known as \emph{broadcasting}) is less studied, but nevertheless some results have been 
obtained~\cite{Barnum1996, DAriano2005, Chiribella2007}; for further reading one may consult reviews on 
cloning here~\cite{Scarani2005,Luo2010,Fan2014}.

Aside from cloning states, it is also of interest to consider whether it is possible to clone dynamical quantum 
resources such as unitary gates or quantum channels. Consider a family of quantum channels $\eucal{C}$  and a 
cloning \textit{process} which attempts to transform $N$ copies of $\cE\in \eucal{C}$ into $M>N$ approximate 
copies. This transformation is solicited by a process $\proc{P}:  \cL(\cH^{\otimes N}\tightto\cH^{\otimes 
N})\tightto\cL(\cH^{\otimes M}\tightto\cH^{\otimes M})$ that uses $N$ copies of $\cE$ to produce a CPTP map 
$\proc{P}[\cE^{\times N}]$ acting on $M$ quantum systems (see Fig.~\ref{fig:enter-label}). Recall that such 
processes can be causally or non-causally ordered and,  
regardless of the classification,
can be thought of as a quantum algorithm 
whose inputs and outputs consist of $M$ quantum systems, and which uses $N$ copies of the unknown channel as 
oracle queries.

Just as for state cloning, the optimal superchannel for cloning quantum channels depends strongly on the figure 
of merit one uses.  There is no clear cut notion of distance between two quantum channels with most measures 
defined indirectly via the effects such channels have on quantum states. Here
we adopt two measures; the first is the Choi-Jamio{\l}kowski fidelity which, for two 
channels $\cA,\cB\in\cL(\cH_1\tightto\cH_2)$, is defined~\cite{Raginsky2001}
 	\begin{equation}
    		F_{\rm CJ} (\cA, \cB) := F (\mathtt{CJ}[\cA], \mathtt{CJ}[\cB]),
	\label{eq:fidcj}
	\end{equation} 
where ${\mathtt{CJ}}[\cA] := ({\rm id} \otimes \cA) \left [\ketbra{\Phi^+}\right]$ with $\ket{\Phi^+}=
\nicefrac{1}{\sqrt{d}}\, \sum_{i=0}^{d-1} \ket{ii}$. Observe that if either of the channels is unitary, 
\eqnref{eq:fidcj} corresponds to the \emph{entanglement} (or process) fidelity~\cite{Schumacher1996,Chuang1997}, 
and that in this case there is a one-to-one relationship between the process fidelity and the \emph{average} 
fidelity~\cite{Horodecki1999, Nielsen2002}. The second figure of merit we shall use is the diamond (or worst-case) fidelity 
	\begin{equation}
		F_{\diamond} (\cA, \cB) := \min_{\Psi} F\big(({\rm id}\otimes \cA)[ \Psi],( {\rm id} \otimes \cB) 
        [\Psi]]\big),
	\label{eq: f diamond}
	\end{equation}
where the minimization is taken over all pure states $\Psi\in\cL(\cH_{1'}\otimes\cH_{1})$ of the extended system. 
By definition it holds that $\fcj (\cA, \cB)\geq F_{\diamond} (\cA, \cB)$. Analogous to the case of state 
cloning, given $(N, M)$ the optimal superchannel is the one that optimizes the chosen figure of merit  
	\begin{equation}
    		\channelF^{\eucal{C}}_\bullet(N,M) = \underset{\proc{P}}{\max}\, \underset{\cE \in\eucal{C}}
            {\inf} \, F_\bullet\big({\proc{P}}
		[\cE^{\times N}], \cE^{\otimes M} \big).
	\label{eq:cloneChan}
	\end{equation}
Note the immediate property $\channelF^{\eucal{C}}_{\rm CJ}\geq \channelF^{\eucal{C}}_{\diamond}$. From now on we 
will drop the explicit dependence on $N$ and $M$ from \eqnref{eq:cloneChan} to avoid notational clutter. 

With these definitions in place the following proposition establishes a connection between cloning quantum states 
and cloning of the corresponding \emph{trash-and-replace} channel. 
\begin{proposition}[Equivalence of cloning states and cloning trash-and-replace channels] 
\label{prop:cloning-trash}
Let $\eucal{S}\subset \cL(\cH)$ be a set of states  and $\eucal{C} =\{\mathcal{T}_\rho:\cL(\cH\tightto\cH)\, 
\vert \mathcal{T}_\rho[\,\cdot\,] 
= \rho,\, \rho \in \eucal{S}\}$ the set of corresponding trash-and-replace channels.  It holds that 
	\be
    		F^{\eucal{C}}_{\rm CJ} =F_{\diamond}^\eucal{C} = F^\eucal{S} \,,
	\label{eq:prop1}
	\ee    
for which any $N\tightto M$ cloning process $\cP:\cL(\cH^{\otimes N}\tightto\cH^{\otimes M})$ and cloning 
superchannel $\proc{P}:\cL(\cH^{\otimes N}\tightto\cH^{\otimes N})\tightto\cL(\cH^{\otimes M}\tightto\cH^{\otimes 
M})$ are related by $\proc{P}[\mathcal{T}_\rho ^{\times N}]=\mathcal{T}_{\cP[\rho^{\otimes N}]}$.
\end{proposition}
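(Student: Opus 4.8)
\begin{proofidea}
The plan is to prove the three equalities in \eqnref{eq:prop1} by a sandwich argument resting on the elementary inequality $F^{\eucal{C}}_{\rm CJ}\geq F^{\eucal{C}}_{\diamond}$. Two facts do the work. \emph{(i)} A one-line computation gives $\mathtt{CJ}[\mathcal{T}_\rho]=\frac{\id}{d}\otimes\rho$; since $\mathcal{T}_\rho^{\otimes M}=\mathcal{T}_{\rho^{\otimes M}}$ is again a trash-and-replace channel (now on $\cH^{\otimes M}$), multiplicativity of the fidelity over tensor factors gives, for all states $\sigma,\sigma'\in\cL(\cH^{\otimes M})$,
\be
  F_{\rm CJ}(\mathcal{T}_\sigma,\mathcal{T}_{\sigma'})=F_{\diamond}(\mathcal{T}_\sigma,\mathcal{T}_{\sigma'})=F(\sigma,\sigma')\,,
\ee
the middle equality because $\mathcal{T}_\sigma$ maps every test state $\Psi$ to $\tr_2[\Psi]\otimes\sigma$, rendering the minimisation in $F_\diamond$ vacuous. \emph{(ii)} Since $\mathtt{CJ}[\mathcal{T}_{\rho^{\otimes M}}]=\frac{\id}{d^M}\otimes\rho^{\otimes M}$, monotonicity of the fidelity under the partial trace gives, for \emph{any} channel $\cA\in\cL(\cH^{\otimes M}\tightto\cH^{\otimes M})$,
\be
  F_{\rm CJ}(\cA,\mathcal{T}_{\rho^{\otimes M}})=F\!\left(\mathtt{CJ}[\cA],\tfrac{\id}{d^M}\otimes\rho^{\otimes M}\right)\leq F\!\left(\cA\!\left[\tfrac{\id}{d^M}\right],\rho^{\otimes M}\right),
\ee
obtained by tracing out the reference system in both arguments.

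For $F^{\eucal{C}}_{\diamond}\geq F^{\eucal{S}}$ I would take an arbitrary $N\tightto M$ state cloning map $\cP$ and build the parallel---hence causally ordered and manifestly valid---superchannel $\proc{P}$ that discards its $M$-qudit global input, feeds $N$ fixed ancillas through the $N$ queries of the unknown channel, and applies $\cP$ to the returned systems. On a trash-and-replace input the returned systems equal $\rho^{\otimes N}$ irrespective of the ancillas, so $\proc{P}[\mathcal{T}_\rho^{\times N}]=\mathcal{T}_{\cP[\rho^{\otimes N}]}$. By fact \emph{(i)}, $F_\bullet\!\big(\proc{P}[\mathcal{T}_\rho^{\times N}],\mathcal{T}_\rho^{\otimes M}\big)=F\!\big(\cP[\rho^{\otimes N}],\rho^{\otimes M}\big)$ for $\bullet\in\{{\rm CJ},\diamond\}$; taking the infimum over $\rho\in\eucal{S}$ and maximising over $\cP$ yields $F^{\eucal{C}}_{\diamond}\geq F^{\eucal{S}}$ (and the same bound for ${\rm CJ}$), and exhibits the stated correspondence $\proc{P}[\mathcal{T}_\rho^{\times N}]=\mathcal{T}_{\cP[\rho^{\otimes N}]}$.

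For the reverse bound $F^{\eucal{C}}_{\rm CJ}\leq F^{\eucal{S}}$ I would start from an \emph{arbitrary} cloning superchannel $\proc{P}$ and use the factorisation $\mathcal{T}_\rho=\mathcal{R}_\rho\circ\mathcal{D}$, where $\mathcal{D}$ is the discard channel and $\mathcal{R}_\rho$ prepares $\rho$. Regrouping each $\mathcal{D}$ with $\proc{P}$ caps the input leg of every query slot, turning $\proc{P}$ into a process whose query slots are all state slots; such a process is simply a genuine CPTP map $\mathcal{Q}\in\cL(\cH^{\otimes M}\otimes\cH^{\otimes N}\tightto\cH^{\otimes M})$, so that $\proc{P}[\mathcal{T}_\rho^{\times N}][\tau]=\mathcal{Q}[\tau\otimes\rho^{\otimes N}]$. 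Then $\cP[\omega]:=\mathcal{Q}\!\left[\tfrac{\id}{d^M}\otimes\omega\right]$ is a CPTP cloning map, and facts \emph{(i)}--\emph{(ii)} give
\be
  F_{\rm CJ}\!\left(\proc{P}[\mathcal{T}_\rho^{\times N}],\mathcal{T}_{\rho^{\otimes M}}\right)\leq F\!\left(\proc{P}[\mathcal{T}_\rho^{\times N}]\!\left[\tfrac{\id}{d^M}\right],\rho^{\otimes M}\right)=F\!\left(\cP[\rho^{\otimes N}],\rho^{\otimes M}\right),
\ee
whence, taking the infimum over $\rho$ and maximising over $\proc{P}$, $F^{\eucal{C}}_{\rm CJ}\leq F^{\eucal{S}}$.

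Chaining with $F^{\eucal{S}}\leq F^{\eucal{C}}_{\diamond}\leq F^{\eucal{C}}_{\rm CJ}$ forces the three quantities to coincide, and shows that an optimal channel-cloning process may always be taken of the parallel form above---which is exactly the claimed bijection with optimal state cloning maps. I expect the single delicate point to be the structural reduction just invoked: that capping the query-slot inputs of a general, possibly non-causally ordered, superchannel with the discard channel yields a legitimate process, and that a process whose query slots are all state slots is nothing but an ordinary channel. For causally ordered $\proc{P}$ this is transparent from the circuit picture; in general it is cleanest to verify it at the level of Choi operators and the link product, checking that the process normalisation constraints collapse to the single channel constraint once the capped legs are traced out.
\end{proofidea}
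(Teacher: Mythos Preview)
Your proposal is correct and follows essentially the same two-step sandwich argument as the paper: construct a parallel process from a state-cloning map to get $F^{\eucal{S}}\leq F^{\eucal{C}}_{\diamond}$, then use monotonicity of the fidelity under tracing out the reference to get $F^{\eucal{C}}_{\rm CJ}\leq F^{\eucal{S}}$. The only cosmetic difference is in how the structural reduction (your ``delicate point'') is handled: where you factor $\mathcal{T}_\rho=\mathcal{R}_\rho\circ\mathcal{D}$ and cap the input legs, the paper instead writes $\mathcal{T}_\rho[\cdot_S]=\tr_{S'}\mathrm{SWAP}_{SS'}[\cdot_S\otimes\rho_{S'}]$, which makes the reduction to a fixed CPTP map $\cE_{\proc{P}}$ immediate and manifestly valid even for non-causally-ordered $\proc{P}$ without appealing to Choi/link-product bookkeeping.
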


\begin{proofidea} The proof consists of two parts (see~\appref{app:states_chan_proof} for details). First, Lemma~\ref{lem: process dec} demonstrates that the three affine sets resulting from the contraction of a parallel, sequential or non-causally ordered process with any number of trash-and-replace channels are, in fact, identical and recover the full set of CPTP maps $\cE_{\proc{P}}[\,\cdot \otimes \rho^{\otimes N}]:=\proc{P}[\cT_\rho^{\times N}]
[\,\cdot\,]:\cL(\cH^{\otimes M}\tightto\cH^{\otimes M})$. Second, we show that for any process $\proc{P}$ 
that tries to clone $N$ trash-and-replace channels ($\cT_\rho^{\times N}$) there exists a corresponding state cloning map $\cP$.  
Conversely, for any state cloning map there exists a cloning process 
that attempts to clone the trash-and-replace 
channel. The proof is completed by showing that equivalence of all fidelities for the particular choice $\proc{P}
[\mathcal{T}_\rho ^{\times N}]=\mathcal{T}_{\cP[\rho^{\otimes N}]}$. We emphasize that the result holds for all processes: parallel, sequential and non-causally ordered and that the later two classes offer no advantage in any task performed on trash-and-
replace channels.
\end{proofidea}
\vspace{12pt}

A closely related task to cloning is \emph{replication}.  A replication protocol uses $N$ copies of a quantum 
resource to produce the maximum number $M>N$ copies with fidelity at least $1-\epsilon$, $\epsilon>0$. 
Formally, a replication process for quantum channels corresponds to the following optimization problem:
	\begin{align}\label{eq:super_replication}
 		\begin{split}
			\mathsf{M}^{\eucal{C}}_\bullet(N,\epsilon) = \, 
            \underset{\proc{P}}{\textrm{max}} \quad & M   \\ 
			\textrm{ s.t. } &\;
    			\underset{\cE \in\eucal{C}}{\textrm{inf}} \, 
                \channelF^{\eucal{C}}_\bullet \big(\proc{P}[\cE^{\times N}], 
                \cE^{\otimes M} \big) \geq 1-\epsilon\, .
		\end{split}
	\end{align}
Specifically, what is of interest in a replication process is the \emph{replication rate}, 
	\begin{equation}\label{eq:replication_rate}
		\mathsf{R}^{\eucal{C}}_\bullet(\epsilon)= \sup \, r \quad 
		\textrm{ s.t.} \quad \lim_{N\tightto \infty} 
        \frac{\mathsf{M}^{\eucal{C}}_\bullet(N,\epsilon)}{N^r}>0\, ,
	\end{equation}
which quantifies how the number of high-fidelity clones scales in the large $N$ limit. If, for any set of quantum 
resources, a replication protocol exists whose rate is $\mathsf{R}(\epsilon)=2,\,\forall\epsilon>0$ 
then we say that these resources can be \emph{super-replicated}.

For the set of pure states $\eucal{S}$, it follows from the expression of the optimal universal cloning 
map~\cite{Werner1998}, that for \emph{all states} the replication rate is at most linear, i.e., 
$\mathsf{R}(\epsilon)=1,\,\forall\epsilon$. In~\cite{Chiribella2013} this was shown to be the maximal rate for 
deterministic cloning of quantum states, and in section \secref{sec:nosuperrep_states} we will 
present an alternative derivation. In contrast, it is known that the set of phase covariant states can be 
\emph{probabilistically} replicated with a rate $\mathsf{R}(\epsilon)=2, \,\forall\epsilon$ with the probability 
of successful replication decreasing exponentially with the number of replicated copies 
$M$~\cite{Chiribella2013}.  Quite remarkably, the set of unitary gates forming a representation of the 
unitary groups $\mathrm{U}(1)$ and $\mathrm{SU}(2)$ can be deterministically replicated at a rate 
$\mathsf{R}(\epsilon)=2,\,\forall\epsilon>0$~\cite{Dur2015, Chiribella2015}.

Here we consider deterministic cloning and replication of a general set of quantum channels beyond the case of 
trash-and-replace and unitary channels. In particular, for continuously parametrized sets of channels we study 
the asymptotic replication rate, and ask in what instances super-replication is possible.  Note that whilst the 
fidelity is as good a quantifier of performance as any other, it is not a distance  as it fails to satisfy the 
triangle inequality. However, a proper distance measure can be obtained from the fidelity quite simply as 
	\be
    		D_{\bullet} := \arccos F_\bullet \, \in \, [0,\nicefrac{\pi}
            {2}]\, .
	\label{eq:distance}
	\ee
In particular, for states $D(\rho,\sigma)=\arccos F(\rho,\sigma)$ (Eq.~\eqref{eq:fidhalf}) is the Bures (or 
quantum) angle~\cite{bures1969extension,uhlmann1976transition}. We stress that the distance $D_{\diamond}
(\cA,\cB)$ between two channels as defined above is not to be confused with the \emph{diamond distance} 
\be\label{eq: dd}
d_\diamond(\cA,\cB):=\frac{1}{2}\|\cA-\cB\|_\diamond
\ee
induced by the diamond norm, although, the two are related by the Fuchs-van der Graaf 
inequality~\cite{Fuchs1999}, $1-\cos D_\diamond\leq  d_{\diamond}\leq \sin D_\diamond$.  Equivalently to 
\eqnref{eq:cloneChan}, we denote the optimized (minimal) cloning distance as $\channelD^{\eucal{C}}_{\bullet}$ or 
$\mathsf{d}^{\eucal{C}}_{\diamond}$.

\section{\label{sec:upper_bound}Upper bounds on cloning and replication rates}

Trying to solve the optimization problem in~\eqnref{eq:cloneChan} directly is, in general, not an easy task (see 
Section ~\ref{sec: clon proc}). In this section we derive upper bounds on the channel cloning rates and 
investigate when super-replication is possible by exploring a connection to two closely related tasks: quantum 
channel discrimination (\secref{sec:channel_discrimination}) and quantum metrology 
(\secref{sec:metrology}). In \secref{sec:nosuperrep_states} we use these bounds to prove that deterministic super-
replication of any continuous set of quantum states is impossible, whilst in \secref{sec:no_superrep_unitaries} 
we show that super-replication of the full set of quantum gates becomes impossible in the presence of any
noise, however weak. 

\subsection{\label{sec:channel_discrimination}Bounds from binary channel discrimination}

Let $\cE_0,\, \cE_1\in\eucal{C}$ be two channels that we wish to replicate and $\proc{P}:\cL(\cH^{\otimes 
N}\tightto\cH^{\otimes N})\tightto\cL(\cH^{\otimes M}\tightto\cH^{\otimes M})$ be the optimal cloning 
superchannel for $\eucal{C}$. Since $D_{\bullet}(\proc{P}[\cE_i^{\times N}], \cE_i^{\otimes M})\leq 
\channelD_{\bullet}^{\eucal{C}}$ must hold by definition, using the triangle inequality
one immediately obtains
the following \emph{geometric} lower bound on the optimal cloning distance $\channelD_\bullet^{\eucal{C}}$  
	\be
		\channelD^{\eucal{C}}_\bullet\geq \frac{1}{2}\left(D_{\bullet}(\cE_0^{\otimes M}, \cE_1^{\otimes M}) 
        - D_\bullet\left(\proc{P}[\cE_0^{\times N}], \proc{P}
		[\cE_1^{\times N}]\right)\right).
	\label{eq:triangle_ineq}
	\ee
Observe that the second term in \eqnref{eq:triangle_ineq} can be upper bounded by the {\it optimal channel 
discrimination distance}
	\begin{align}\nonumber
		 D_\bullet\left(\proc{P}[\cE_0^{\times N}], \proc{P}[\cE_1^{\times N}]\right) &\leq \sup_{\proc{S}} 
         D(\proc{S}[\cE_0^{\times N}], \proc{S} [\cE_1^{\times N}]) \\
        &:=\channelD^{(N)}\left(\cE_0,\cE_1\right)\, , 
	\label{eq:optimized_discrimination_distance}
	\end{align}
where $\proc{S}:\cL(\cH^{\otimes N}\tightto\cH^{\otimes N})\tightto\cL(\cH_{\rm out})$ is any state 
producing-process that maps $N$ copies of the channel $\cE_i$ to a quantum state. Indeed, it follows that 
$D_{\rm CJ}\left(\proc{P}[\cE_0^{\times N}], \proc{P}[\cE_1^{\times N}]\right)\leq D_{\diamond}\left(\proc{P}
[\cE_0^{\times N}], \proc{P}[\cE_1^{\times N}]\right)=D\left(\proc{S}[\cE_0^{\times N}], \proc{S}[\cE_1^{\times 
N}]\right)$, where the last equality holds when choosing the state producing process $\proc{S}[\cE_i^{\times N}]:=
({\rm id}\otimes \proc{P}[\cE_i^{\times N}])[ \Psi]$ (see Eq.~\eqref{eq: f diamond}).

The maximal distance $\channelD^{(N)}\left(\cE_0,\cE_1\right)$ has a clear operational meaning.  
It quantifies the binary channel discrimination task, albeit with the customary diamond distance\footnote{Note 
that in the single copy case the optimization over state producing process $\proc{S}$ reduces to applying the 
channel onto half of a bipartite state. Then one recovers the standard definitions
$ \mathsf{d}_{\diamond}^{(1)}(\cE_0,\cE_1) = d_{\diamond}(\cE_0,\cE_1)=\frac{1}{2}\sup_\Psi \|
(\text{id}\otimes(\cE_0-\cE_1))[\Psi]\|$ and $\mathsf{D}_{\diamond}^{(1)}(\cE_0,\cE_1) =   D_{\diamond}
(\cE_0,\cE_1)=\sup_\Psi  D\big(({\rm id}\otimes \cE_0)[ \Psi],( {\rm id} \otimes \cE_1) [\Psi]\big)$,
where the later is equivalent to Eq.~\eqref{eq: f diamond}. Furthermore, note that 
$\channelD^{(N)}\left(\cE_0,\cE_1\right)\geq D_\diamond(\cE_0^{\otimes N},\cE_1^{\otimes N}) = 
\sup_{\proc{S}^{\rm par}} D(\proc{S}^{\rm par}[\cE_0^{\times N}], \proc{S}^{\rm par} [\cE_1^{\times N}])$ where 
the optimization is taken over {\it parallel} state-producing processes. Hence, in general the quantities 
$D(\proc{S}[\cE_0^{\times N}], \proc{S}[\cE_1^{\times N}])$ and $D_\diamond(\cE_0^{\otimes N},\cE_1^{\otimes N})$ 
are incomparable.}
\be\label{eq: dist diamon norm}
 \mathsf{d}_{\diamond}^{(N)}(\cE_0,\cE_1):=\frac{1}{2}\, \sup_{\proc{S}}
\Vert\proc{S}[\cE_0^{\times N}]-\proc{S}[\cE_1^{\times N}]\Vert,
\ee
replaced by the Bures angle of Eq.~\eqref{eq:distance}. Moreover, by the Fuchs-van der Graaf 
inequality~\cite{Fuchs1999} the two are related as
	\be
		1-\cos\channelD^{(N)}\left(\cE_0,\cE_1\right)\leq  \mathsf{d}^{(N)}_{\diamond}(\cE_0,\cE_1)\leq \sin \channelD^{(N)}(\cE_0,\cE_1)\,. 
	\label{eq:Fuchs_vanderGraaf}
	\ee

Combining Eqs.~\eqref{eq:triangle_ineq} and \eqref{eq:optimized_discrimination_distance} gives the following 
proposition concerning the optimal cloning of a set of channels. 
\begin{proposition}[Optimal distance for cloning and replication]
\label{prop:cloning_bound_discrimination}
Consider the optimal $N\tightto M$ cloning of a set of channels $\eucal{C}$.  The optimal cloning distance as 
defined by \eqnref{eq:distance} satisfies 
	\be
		\channelD^{\eucal{C}}_\bullet\geq \frac{1}{2}\left(D_{\bullet}(\cE_0^{\otimes M}, \cE_1^{\otimes M}) 
        - \channelD^{(N)}\left(\cE_0,\cE_1\right)\right),
	\label{eq:optimal_cloning_distance}
	\ee
for all $\cE_0,\cE_1\in\eucal{C}$, where $\channelD^{(N)}\left(\cE_0,\cE_1\right)$ is the optimal $N$-copy 
channel discrimination distance for the channels $\cE_0$ and $ \cE_1$ defined in 
Eq.~\eqref{eq:optimized_discrimination_distance}.
\end{proposition}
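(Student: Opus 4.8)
The plan is to assemble the two ingredients already isolated in the text --- the triangle inequality for the Bures-type distance $D_\bullet$, and the reduction of the distinguishability of the two cloned outputs to a state-producing process --- into the single inequality~\eqref{eq:optimal_cloning_distance}. The argument is a chaining argument, so the only real work is checking that each link is licensed by a definition stated above.

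First I would fix an optimal $N\tightto M$ cloning superchannel $\proc{P}$ for $\eucal{C}$, i.e.\ one attaining $\channelD^{\eucal{C}}_\bullet$ in the distance version of Eq.~\eqref{eq:cloneChan}, and pick an arbitrary pair $\cE_0,\cE_1\in\eucal{C}$. Since $\channelD^{\eucal{C}}_\bullet$ is defined with an infimum over the set, by definition $D_\bullet(\proc{P}[\cE_i^{\times N}],\cE_i^{\otimes M})\le\channelD^{\eucal{C}}_\bullet$ for $i=0,1$. Next I would use that $D_\bullet=\arccos F_\bullet$ of Eq.~\eqref{eq:distance} is a genuine metric on channels: for $\bullet=\mathrm{CJ}$ it is just the Bures angle between the Choi states, while for $\bullet=\diamond$ it equals $\sup_\Psi D\big(({\rm id}\otimes\cA)[\Psi],({\rm id}\otimes\cB)[\Psi]\big)$, a supremum of Bures angles over a fixed set of inputs, which inherits the triangle inequality. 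Applying it along the chain $\cE_0^{\otimes M}\to\proc{P}[\cE_0^{\times N}]\to\proc{P}[\cE_1^{\times N}]\to\cE_1^{\otimes M}$ and bounding the two outer legs by $\channelD^{\eucal{C}}_\bullet$ produces Eq.~\eqref{eq:triangle_ineq}.

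The remaining step is to bound the middle leg $D_\bullet(\proc{P}[\cE_0^{\times N}],\proc{P}[\cE_1^{\times N}])$ by the $N$-copy channel discrimination distance $\channelD^{(N)}(\cE_0,\cE_1)$ of Eq.~\eqref{eq:optimized_discrimination_distance}. For $\bullet=\diamond$ I would take a bipartite pure state $\Psi^\star$ attaining the diamond fidelity between the two output channels and observe that ``attach a reference, run the output channel on half of $\Psi^\star$, keep everything'' is itself a state-producing process; composing it with $\proc{P}$ gives a bona fide state-producing process $\proc{S}$ acting on $N$ queries, so $D_\diamond(\proc{P}[\cE_0^{\times N}],\proc{P}[\cE_1^{\times N}])=D(\proc{S}[\cE_0^{\times N}],\proc{S}[\cE_1^{\times N}])\le\sup_{\proc{S}}D(\proc{S}[\cE_0^{\times N}],\proc{S}[\cE_1^{\times N}])=\channelD^{(N)}(\cE_0,\cE_1)$. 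For $\bullet=\mathrm{CJ}$ I would simply invoke $D_{\mathrm{CJ}}\le D_\diamond$ (equivalently $F_{\mathrm{CJ}}\ge F_\diamond$), so the same bound applies. Substituting into Eq.~\eqref{eq:triangle_ineq}, and recalling that $\cE_0,\cE_1$ were arbitrary, yields Eq.~\eqref{eq:optimal_cloning_distance}.

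I do not expect a genuine obstacle: the proof is essentially bookkeeping around the triangle inequality. The one point deserving care is the middle step --- one must verify that post-composing the optimal cloning process $\proc{P}$ with ``prepare a reference and run the candidate channel'' is again an admissible $N$-query state-producing process, so that it is a legitimate competitor in the supremum defining $\channelD^{(N)}$, and that the optimization directions line up so that one obtains an \emph{upper} bound (not a lower bound) on the distance between the two cloned outputs. In the CJ case one must also note that the middle leg is compared against the $\diamond$-type quantity $\channelD^{(N)}$ rather than a CJ-restricted analogue; this is consistent precisely because $\channelD^{(N)}$ already optimizes over all state-producing processes.
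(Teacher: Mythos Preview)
Your proposal is correct and follows essentially the same route as the paper: the triangle inequality yields Eq.~\eqref{eq:triangle_ineq}, and the middle leg is bounded by constructing the state-producing process $\proc{S}[\cE_i^{\times N}]=({\rm id}\otimes\proc{P}[\cE_i^{\times N}])[\Psi^\star]$ (with $D_{\rm CJ}\le D_\diamond$ for the CJ case), exactly as in Eq.~\eqref{eq:optimized_discrimination_distance}. The only cosmetic difference is that the paper presents these two ingredients in the running text before stating the proposition, whereas you package them as a single chained argument.
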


Proposition~\ref{prop:cloning_bound_discrimination} can be used to provide insightful bounds on the replication 
rates for states and unitaries.  The following two corollaries, whose proofs can be found in 
\appref{app:states_and_Us}, \yg{we} show that deterministic super-replication of states, as well as of unitary operators 
(under the diamond fidelity figure of merit), at any rate larger than one is impossible.

\subsubsection{\label{sec:nosuperrep_states}No deterministic super-linear replication of states.}

\begin{corollary}[Strong converse on the replication rate for states]
\label{cor:state_rep}
Let $\eucal{S}\subset\cL(H)$ be any continuous set of states.  For any $\epsilon>0$, and $\mathsf{R}^{\eucal{S}}
(\epsilon)=1+\delta, \, \delta>0$ the optimal replication distance satisfies $\mathsf{D}^{\eucal{S}}\geq 
\nicefrac{\pi}{4}$, and the corresponding optimal replication fidelity is $\channelF^{\eucal S}\leq \frac{1}
{\sqrt{2}}$. In addition, for linear replication rate $M=(1+\lambda)N$, the asymptotic replication distance is 
bounded by 
\begin{align}\label{ref: bound state lin}
\mathsf{D}^{\eucal{S}} \geq \tilde{A}\left(\frac{1}{1+\lambda}\right)
&:= \max_{0\leq x\leq 1} \frac{\arccos x^{1+\lambda} - \arccos x}{2}, 
\end{align}
depicted in Fig.~\ref{fig:A func}.
\end{corollary}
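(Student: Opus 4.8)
The plan is to reduce state replication to the replication of the corresponding trash-and-replace channels and then chain the two propositions just established. By Proposition~\ref{prop:cloning-trash}, the optimal $N\tightto M$ replication of $\eucal{S}$ is literally the same optimization as the optimal $N\tightto M$ replication of the family $\eucal{C}=\{\cT_\rho\mid\rho\in\eucal{S}\}$, with all channel figures of merit collapsing onto the state fidelity, so that $\mathsf{D}^{\eucal{S}}=\channelD^{\eucal{C}}_{\rm CJ}=\channelD^{\eucal{C}}_{\diamond}$. I would then feed the pair $\cE_i=\cT_{\rho_i}$, $i=0,1$, with $\rho_0,\rho_1\in\eucal{S}$ to be fixed later, into Proposition~\ref{prop:cloning_bound_discrimination} (whose bound holds for every such pair), and evaluate the two terms on its right-hand side. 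The first, $D_\bullet(\cT_{\rho_0}^{\otimes M},\cT_{\rho_1}^{\otimes M})$, equals $D(\rho_0^{\otimes M},\rho_1^{\otimes M})=\arccos F(\rho_0,\rho_1)^{M}$, because $\cT_{\rho}^{\otimes M}=\cT_{\rho^{\otimes M}}$, Proposition~\ref{prop:cloning-trash} applies once more to the replaced states, and the fidelity is multiplicative under tensor products. The second, $\channelD^{(N)}(\cT_{\rho_0},\cT_{\rho_1})$, I claim equals $\arccos F(\rho_0,\rho_1)^{N}$: since a trash-and-replace channel ignores its input, Lemma~\ref{lem: process dec} shows that any state-producing process fed $N$ queries of $\cT_\rho$ returns $\cE[\rho^{\otimes N}]$ for a single CPTP map $\cE$, so by contractivity of the Bures angle $\arccos F$ under CPTP maps the discrimination distance is at most $D(\rho_0^{\otimes N},\rho_1^{\otimes N})$, and the process that simply hands back the $N$ replaced systems saturates it. Writing $f=F(\rho_0,\rho_1)\in(0,1)$, Proposition~\ref{prop:cloning_bound_discrimination} then delivers $\mathsf{D}^{\eucal{S}}\ge\tfrac12\big(\arccos f^{M}-\arccos f^{N}\big)$.

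Next I would tune the two states to $N$, since a fixed pair makes this bound vanish as $N\to\infty$ (both $\arccos f^{M}$ and $\arccos f^{N}$ tend to $\tfrac\pi2$). Because $\eucal{S}$ is continuous it contains a one-parameter continuous subfamily $\rho(t)$ with $\rho(0)=\rho_0$, along which $t\mapsto D(\rho_0,\rho(t))$ is continuous and vanishes at $t=0$; by the intermediate value theorem every sufficiently small Bures angle is realized within $\eucal{S}$. Hence, for each fixed $x\in(0,1)$ and all large $N$, I can choose $\rho_0^{(N)},\rho_1^{(N)}\in\eucal{S}$ with $F(\rho_0^{(N)},\rho_1^{(N)})=x^{1/N}$, i.e.\ $f_N^{\,N}=x$, turning the previous line into $\mathsf{D}^{\eucal{S}}(N,M)\ge\tfrac12\big(\arccos x^{M/N}-\arccos x\big)$, valid for every $x\in(0,1)$.

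Finally I would read off the two conclusions. In the linear regime $M=(1+\lambda)N$ the exponent $M/N=1+\lambda$ is constant, and optimizing the bound over $x\in[0,1]$ gives exactly $\mathsf{D}^{\eucal{S}}\ge\max_{0\le x\le1}\tfrac12\big(\arccos x^{1+\lambda}-\arccos x\big)=\tilde A\big(\tfrac1{1+\lambda}\big)$, establishing \eqref{ref: bound state lin}. For a super-linear rate $\mathsf{R}^{\eucal{S}}(\epsilon)=1+\delta$ with $\delta>0$, the definition \eqref{eq:replication_rate} forces the replicating family to obey $M(N)/N\to\infty$ (take $r=1+\delta/2$, so $M(N)\gtrsim N^{1+\delta/2}$). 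Along this family $x^{M(N)/N}\to0$, hence $\liminf_N \mathsf{D}^{\eucal{S}}\big(N,M(N)\big)\ge\tfrac12\big(\tfrac\pi2-\arccos x\big)$ for every $x\in(0,1)$, and sending $x\to1^-$ gives $\liminf_N \mathsf{D}^{\eucal{S}}\big(N,M(N)\big)\ge\pi/4$. On the other hand, $\epsilon$-replication demands worst-case distance $\le\arccos(1-\epsilon)$ for all large $N$, whence $\arccos(1-\epsilon)\ge\pi/4$, i.e.\ $\mathsf{D}^{\eucal{S}}\ge\pi/4$ and $\channelF^{\eucal{S}}=\cos\mathsf{D}^{\eucal{S}}\le\tfrac1{\sqrt2}$, which is the advertised strong converse.

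The step requiring the most care is the evaluation of $\channelD^{(N)}(\cT_{\rho_0},\cT_{\rho_1})$: one must be certain that no process, whether parallel, sequential, or with indefinite causal order, extracts more from $N$ queries of $\cT_\rho$ than the state $\rho^{\otimes N}$ already encodes, which is exactly what Lemma~\ref{lem: process dec} guarantees. The only other subtlety is checking that the $N$-dependent states genuinely lie in $\eucal{S}$, which is precisely what the hypothesis that $\eucal{S}$ is continuous provides.
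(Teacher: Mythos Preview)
Your proof is correct and follows essentially the same route as the paper: reduce to trash-and-replace channels via Proposition~\ref{prop:cloning-trash}, apply the geometric bound of Proposition~\ref{prop:cloning_bound_discrimination} to a pair $\cT_{\rho_0},\cT_{\rho_1}$, use monotonicity of the Bures angle (the paper invokes data processing directly on $\proc{P}$, you equivalently invoke Lemma~\ref{lem: process dec} to control $\channelD^{(N)}$), and then tune the pair so that $f^{N}=x$ is held fixed while $M/N$ grows. The only cosmetic difference is that the paper picks $F(\rho_0,\rho_1)=1-N^{-(1+\delta)}$ explicitly, whereas you parametrize by $x=f^{N}$ and send $x\to 1^{-}$ at the end; both yield the same limits and the same linear-rate bound $\tilde A\big(\tfrac{1}{1+\lambda}\big)$.
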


\begin{proofidea} The Bures angle between two states $D(\rho,\sigma)$ is monotonic, i.e., cannot be increased by 
applying a common channel on the states. With this observation one can see that for state cloning the equivalent 
of Proposition~\ref{prop:cloning_bound_discrimination} simply reads 
\begin{align}\label{eq: prop 2 states}
\mathsf{D}^{\eucal{S}} \geq \frac{1}{2}\left(D(\rho_0^{\otimes M},\rho_1^{\otimes M}) - D(\rho_0^{\otimes 
N},\rho_1^{\otimes N})\right)\,.
\end{align}
To obtain the corollary for $M=N^{1+\delta}$ with $\delta>0$ it is then sufficient to chose states $\rho_0$ and 
$\rho_1$ such that in the large $N$ limit $D(\rho_0^{\otimes M},\rho_1^{\otimes M})\to \nicefrac{\pi}{2}$ while 
$D(\rho_0^{\otimes N},\rho_1^{\otimes N})\to 0$. This can always be done for a continuous set $\eucal{S}$.
In the case of asymptotic linear replication $M=(1+\lambda) N$, the two states can be chosen such that 
$x=F(\rho_0^{\otimes N},\rho_1^{\otimes N})=F(\rho_0,\rho_1)^N$ assumes any value in the interval $[0,1]$. 
Formally maximizing the right hand side of Eq.~\eqref{eq: prop 2 states} with respect to $x$ completes the proof.
\end{proofidea}
\vspace{12pt}

We note that the impossibility of super-linear replication of quantum states was already established 
in~\cite{Chiribella2013}. Corollary~\ref{cor:state_rep} serves as an illustrative example of the ideas for the 
simple case of states, and also gives a quantitative bound, \eqnref{ref: bound state lin}, on the cloning 
distance in the linear regime.

\begin{figure}
    \centering
    \includegraphics[width=0.95\linewidth]{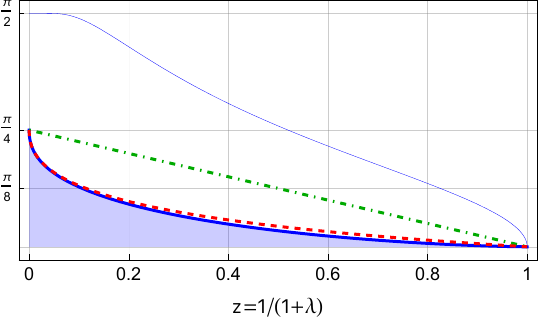}
    \caption{Bounds on the optimal replication distances $\mathsf{D}$ as functions of $z=\frac{1}{1+\lambda}$. 
    The thick blue line depicts the lower bound $A(z)$ on $\mathsf{D}^{\eucal{C}}_{\rm CJ}$ defined in 
    Proposition \ref{prop:necessary_conds_superrep}, and appearing in Corollary~\ref{cor:classical noise} and 
    Sections~\ref{sec:BF},\ref{sec:AD}. The red dashed line gives the lower bound $\tilde A(z)$ on 
    $\mathsf{D}^{\eucal{S}}$ defined in Corollary~\ref{cor:state_rep}. The green dash-dotted line is the simple 
    lower bound  $\mathsf{D}^{\eucal{C}}_\diamond \geq \frac{\pi}{4}\frac{\lambda}{1+\lambda}$ appearing in 
    Corollary~\ref{cor:unitaries_diamond}. Finally, the thin blue line is the asymptotic cloning distance of the 
    measure-and-prepare process for the set $\eucal{P}$ of Pauli-noise channels, discussed in Section~\ref{sec:BF}, 
    i.e. it is an upper bound on $\mathsf{D}^{\eucal{P}}_{\rm CJ}$.}
    \label{fig:A func}
\end{figure}

\subsubsection{\label{sec:no_superrep_unitaries}No deterministic super-linear replication of unitaries with 
respect to the diamond fidelity.}

\begin{corollary}[No super-replication of unitaries under the diamond fidelity]
\label{cor:unitaries_diamond}
Let $\eucal{C}=\lbrace \cU_\lambda\,\vert\, \lambda\in\mathbb{R}\rbrace$ be a continuous set of unitary channels 
and consider $N\tightto M$ cloning under a diamond fidelity figure of merit.  Then for $\epsilon<1-\frac{1}
{\sqrt{2}}$, $\mathsf{R}^{\eucal{C}}(\epsilon)=1$. In addition, for the linear rate $M=(1+\lambda)N$ the asymptotic 
optimal cloning distances are bounded by
\be
\mathsf{D}_\diamond^{\eucal{C}} \geq \frac{\pi}{4} \frac{\lambda}{1+\lambda}, \quad 
\mathsf{d}_\diamond^{\eucal{C}} \geq 1-\cos \left(\frac{\pi}{4} \frac{\lambda}{1+\lambda}\right).
\ee
\end{corollary}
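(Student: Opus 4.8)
The strategy is to specialize Proposition~\ref{prop:cloning_bound_discrimination} to a continuous family of unitary channels and exploit two facts: (i) for unitary channels the $N$-copy discrimination distance $\channelD^{(N)}(\cU_{\lambda_0},\cU_{\lambda_1})$ does \emph{not} grow with $N$ beyond a constant once the single-copy diamond distance is bounded --- indeed, for a fixed pair of unitaries one can saturate perfect discrimination already with a finite number of queries, but the relevant regime is the one where $\lambda_1\to\lambda_0$ as $N\to\infty$; and (ii) the collective distance $D_\diamond(\cU_{\lambda_0}^{\otimes M},\cU_{\lambda_1}^{\otimes M})$ can be made to approach $\nicefrac{\pi}{2}$ while controlling the other term. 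Concretely, I would parametrize $\lambda_1=\lambda_0+t/N$ and analyze the two terms of Eq.~\eqref{eq:optimal_cloning_distance} in $t$, using that $\channelD^{(N)}$ is governed by the quantum Fisher information of the probe state optimized over state-producing processes, which for unitary families scales like the square of the ``spread'' of the generator eigenvalues times the number of queries.

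First I would recall that for a one-parameter unitary family the optimal $N$-query discrimination of $\cU_{\lambda_0}$ vs $\cU_{\lambda_0+t/N}$ is controlled by a Heisenberg-type bound: $\channelD^{(N)}(\cU_{\lambda_0},\cU_{\lambda_0+t/N})$ converges, as $N\to\infty$, to a finite angle that is strictly less than $\nicefrac{\pi}{2}$ when $t$ is small --- in fact it can be bounded by an increasing function of $t$ alone (independent of $N$), since $N$ queries in parallel produce a phase spread of order $N\cdot(t/N)=t$. Second, for the collective target term I would use that $\cU_{\lambda}^{\otimes M}$ with $M=(1+\lambda_{\rm rate})N$ (I will rename the rate slack to avoid clashing with the family parameter $\lambda$) gives a phase spread of order $M\cdot(t/N)=(1+\lambda_{\rm rate})t$, so $D_\diamond(\cU_{\lambda_0}^{\otimes M},\cU_{\lambda_1}^{\otimes M})$ is an increasing function of $(1+\lambda_{\rm rate})t$. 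Plugging both into Eq.~\eqref{eq:optimal_cloning_distance} and optimizing over $t$ yields a bound of the form $\mathsf{D}_\diamond^{\eucal{C}}\geq \tfrac12\sup_t\big(g((1+\lambda_{\rm rate})t)-h(t)\big)$; choosing the extremal probes (so that $g$ and $h$ are the exact arccos-of-fidelity expressions for the optimal single-parameter estimation strategy) collapses this to the stated closed form $\tfrac{\pi}{4}\tfrac{\lambda_{\rm rate}}{1+\lambda_{\rm rate}}$, after which the diamond-distance bound follows from the left Fuchs--van der Graaf inequality in Eq.~\eqref{eq:Fuchs_vanderGraaf}. The claim $\mathsf{R}^{\eucal{C}}(\epsilon)=1$ for $\epsilon<1-\nicefrac{1}{\sqrt2}$ then comes from taking $\lambda_{\rm rate}\to\infty$ (i.e. $M/N\to\infty$), which sends the bound to $\nicefrac{\pi}{4}$ and hence $\channelF_\diamond^{\eucal{C}}\leq\cos(\nicefrac{\pi}{4})=\nicefrac{1}{\sqrt2}$, contradicting the replication constraint $\channelF\geq 1-\epsilon$.

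The main obstacle I anticipate is pinning down the exact asymptotic functions $g$ and $h$ --- that is, showing that the Heisenberg-scaling bound on $\channelD^{(N)}$ is not merely $O(t)$ but has precisely the arccos-fidelity form of optimal phase estimation, so that the optimized difference yields the clean linear-in-$\tfrac{\lambda_{\rm rate}}{1+\lambda_{\rm rate}}$ coefficient rather than a messier transcendental expression. This requires either invoking the known optimal covariant probe states for $\mathrm{U}(1)$ (whose overlap structure gives $F=\cos$ of a quantity proportional to the phase gap times the number of queries) or, more carefully, bounding $\channelD^{(N)}$ above using the quantum-speed-limit / no-signaling argument that $N$ sequential or parallel queries of a generator with bounded spectrum cannot rotate a state faster than $N$ times the single-query rate --- the subtlety being that this must hold for \emph{all} state-producing processes, including non-causally-ordered ones, which is exactly where the $\channelD^{(N)}$ definition in Eq.~\eqref{eq:optimized_discrimination_distance} already does the work for us. Everything else (the triangle inequality, the Fuchs--van der Graaf conversion, the limit $\lambda_{\rm rate}\to\infty$) is routine, so I would front-load the effort into the discrimination-distance estimate and relegate it to \appref{app:states_and_Us}.
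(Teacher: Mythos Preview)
Your approach is essentially the paper's: apply Proposition~\ref{prop:cloning_bound_discrimination} to a pair of nearby unitaries whose separation scales like $1/M$, and use that the $N$-copy discrimination distance of unitaries scales linearly in $N$. The difference is only that you frame the key estimate through QFI/Heisenberg-limit language and then flag as your ``main obstacle'' the need to pin down the exact functional form of $g$ and $h$. That obstacle is illusory: for unitaries the diamond fidelity is known \emph{exactly} and non-asymptotically. Writing $U_0^\dagger U_1=\sum_j e^{\ii\theta_j}\ketbra{j}{j}$ and $\Theta=\max_{j,j'}(|\theta_j-\theta_{j'}|\bmod\pi)$, one has $\channelF^{(N)}(\cU_0,\cU_1)=\cos(N\Theta/2)$ for $N\Theta\leq\pi$ (achieved by the equal superposition of the two extremal eigenstates, no ancilla needed), hence $\channelD^{(N)}(\cU_0,\cU_1)=N\Theta/2$ and likewise $D_\diamond(\cU_0^{\otimes M},\cU_1^{\otimes M})=M\Theta/2$. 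Plugging into Eq.~\eqref{eq:optimal_cloning_distance} gives $\mathsf{D}_\diamond^{\eucal{C}}\geq\tfrac{\Theta}{4}(M-N)$ directly; choosing $\Theta=\pi/M$ yields $\tfrac{\pi}{4}(1-N/M)$ with no optimization over $t$ and no asymptotics required. Your worry about non-causally-ordered processes is also unnecessary here, since this exact formula already saturates the supremum in Eq.~\eqref{eq:optimized_discrimination_distance} with a parallel probe.
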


\begin{proofidea} Optimal discrimination of unitaries has been well studied~\cite{DAriano2001a}. In particular, 
for close enough unitaries we know that 
\begin{align}
    \channelD^{(N)}\left(\cU_0,\cU_1\right)=  \mathsf{D}_{\diamond}(\cU_0^{\otimes N},\cU_1^{\otimes N})  = 
    \frac{N \,\Theta(U_0 U_1^\dag)}{2},
\end{align}
where $\Theta(V)$ is the maximal angular difference between the phases of the complex eigenvalues of $V$. 
Proposition~\ref{prop:cloning_bound_discrimination} then directly implies $\mathsf{D}_\diamond^{\eucal{C}} \geq 
\frac{ \Theta(U_0 U_1^\dag )}{4}\left( M-N\right)$. Finally, for a continuous set $\eucal{C}$ one can always take 
two unitaries such that 
\be\label{eq: bound unitary diamond}
\mathsf{D}_\diamond^{\eucal{C}} \geq \frac{\pi}{4}\left(1-\frac{N}{M}\right) =\frac{\pi}{4}\frac{\lambda}
{\lambda +1}
\ee
for $M=(1+\lambda)N$. By the Fuchs-van der Graaf inequality~\cite{Fuchs1999} this immediately translates to the 
bound $\mathsf{d}^{\eucal{C}}_{\diamond}\geq 1- \cos (\frac{\pi}{4}\left(1-\frac{N}{M}\right))$ for the optimal 
diamond distance.
\end{proofidea}
\vspace{12pt}

Note that Corollary~\ref{cor:unitaries_diamond} does not contradict the super-replication results 
of~\cite{Dur2015, Chiribella2015} as for the latter the figure of merit is the Choi-Jami\l kowski fidelity. 
However, it does display a sharp contrast between the two figures of merit. We will come back to it at the end of 
Section~\ref{sec: unitary cloning review}, when discussing the processes of~\cite{Dur2015, Chiribella2015} in 
more detail.

\subsection{\label{sec:metrology} Bounds from channel estimation}

Whilst the geometric bound in \eqnref{eq:optimal_cloning_distance} holds in general, its usefulness relies on our 
ability to compute the optimal channel discrimination distance $\mathsf{D}^{(N)}(\cE_0,\cE_1)$.  For states and 
unitary channels the optimal discrimination distance was easy to compute, however for more general 
families of channels this is no longer the case.  In this section, we use powerful techniques from the theory of 
quantum channel estimation~\cite{Kurdzialek2023} to provide a rigorous upper bound on the optimal discrimination 
distance for any family of continuously parametrized channels $\eucal{C}$.  We then provide necessary conditions 
for when a set of channels $\eucal{C}$ can be deterministically super-replicated and identify general families of 
channels for which super-replication is impossible. 

The key insight for establishing our general upper bound comes from the \emph{quantum Fisher information} 
(QFI)~\cite{Braunstein1994}.  Let $\eucal{S}=\{\rho_x\, \vert\, x\in\mathbb{R}\}$ be a smooth one-parameter 
family of quantum states.  The QFI is proportional to the square of the susceptibility of the distance (Bures 
angle) defined in \eqnref{eq:distance} 
	\be
	\qfi(\rho_x):=4  \left(\frac{D(\rho_x,\rho_{x+\d x})}{\d x}\right)^2 = 8 \frac{1-F(\rho_x,\rho_{x+\d x})}
    {\dd x^2}\, .
	\label{eq:QFI}
	\ee
Now consider a smooth curve $\{\cE_x:\cL(\cH\tightto\cH)\,\vert\, x\in[a,b]\subset\mathds{R}\}$ connecting two 
quantum channels $\cE_{a}$ with $\cE_{b}$, and let $\proc{S}$ be the state-producing process  that maximizes the 
channel discrimination distance $\mathsf{D}^{(N)}(\cE_{a},\cE_{b})=D\left(\proc{S}[\cE_{a}^{\times N}], \proc{S}
[\cE_{b}^{\times N}]\right)$. By the triangle inequality we have 
	\be
		\begin{split}
			\mathsf{D}^{(N)}(\cE_{a},\cE_{b})&\leq \int_{a}^{b}\, D\left(\proc{S}[\cE_x^{\times N}] , \proc{S}
            [\cE_{x+\d x}^{\times N}]\right)\dd x\\
			& = \frac{1}{2}\int_{a}^{b} \sqrt{\qfi\left(\proc{S}[\cE_x^{\times N}]\right)} \,\dd x \\
		&\leq \frac{1}{2}\int_{a}^{b}\sqrt{\max_{\proc{S}}\qfi\left(\proc{S}[\cE_x^{\times N}]\right)} \, \dd 
        x \\
			&:=\frac{1}{2}\int_{a}^{b}\sqrt{\mqfi^{(N)}\left(\cE_x\right)}\, \dd x 
		\end{split}
	\label{eq:triangle_metrology}
	\ee 
where the inequality in the third line follows from relaxing the assumption that $\proc{S}$ maximizes the 
discrimination distance between $\cE_{a}, \cE_{b}$ to one where $\proc{S}$ maximizes the discrimination distance 
between $\cE_x$ and $\cE_{x+\d x}$, and we have defined the maximized QFI in the last line.  The latter 
quantifies the precision with which one can estimate the localized parameter $x$ of a quantum channel, when 
having access to $N$ copies of it, via an estimation process (typically restricted to be 
causal)~\cite{Demkowicz-Dobrzanski2014,zhou2021asymptotic, Kurdzialek2023}. Denoting by $\{K_n(x):\cL(\cH\tightto\cH)\, \vert \, 
n\in\mathbb{Z}\}$ the set of Kraus operators of the channel $\cE_x$, i.e., $\cE_x[\,\cdot\,]=\sum_n K_n(x)\, 
\cdot\, K_n^\dagger(x)$, and defining the operators 
	\be
		\begin{split}
			\alpha(x)&:=\sum_n\, \dot K_n^\dag(x) \dot K_n(x)\\
			\beta(x)&:=\sum_n \dot K_n^\dag(x) K_n(x)\, , 
		\end{split}
	\label{eq:alphabetas} 
	\ee
with $\dot K_n(x):=\frac{\d K^\dagger_n(x)}{\d x}$, the maximized QFI under any causally ordered state-producing 
process $\proc{S}$ in \eqnref{eq:triangle_metrology} can be upper bounded as~\cite{Kurdzialek2023}
	\be
		\begin{split}
			\mqfi^{(N)}(\cE_x)&\leq 4 f_N(\cE_x),\quad \mathrm{where}\\
			f_N(\cE_x)&:=\min\, N\sqrt{\Vert\alpha(x)\Vert}\\
            & \qquad \times\,\left((N-1)\Vert\beta(x)\Vert +\sqrt{\Vert\alpha(x)\Vert}\right)\, ,
		\end{split}
	\label{eq:bound_channel_metro}
	\ee
where the minimization is over all Kraus representations $\{K_n(x):\cL(\cH\tightto\cH)\, \vert \, 
n\in\mathbb{Z}\}$ and their derivatives. We formalize the above analysis into the following proposition (see also \cite{uhlmann1992metric,taddei2013quantum,albarelli2022probe}).

\begin{proposition}[Bound on optimal channel discrimination from channel estimation] 
\label{prop:metrology_optimal_distance} Let $\cE_x$ with $x\in[a,b]\subset \mathds{R}$ be a smooth curve in the 
set of channels. For all causally ordered state-producing processes $\proc{S}$, the optimal $N$-copy channel 
discrimination distance in Eq.~\eqref{eq:optimized_discrimination_distance} satisfies 
\begin{equation}\label{eq: angle to QFI main}
\channelD^{(N)}\left(\cE_{a},\cE_{b}\right)\, \leq \int_{a}^{b}  \sqrt{{f}_N(\cE_x)}\, \dd x\, ,
\end{equation}
where $f_N(\cE_x)$ in given in \eqnref{eq:bound_channel_metro}. Consequently, the optimal diamond distance in
\eqnref{eq: dist diamon norm} satisfies 
 \begin{equation}
 \mathsf{d}_{\diamond}^{(N)}(\cE_{a},\cE_{b})\leq \sin \left( \int_{a}^{b}  \, \sqrt{{ f}_N(\cE_x)}\,\dd x\right)\, .
 \end{equation}
 \label{thm: channel disc}
\end{proposition}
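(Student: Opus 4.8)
The plan is to reduce the discrimination distance to a Bures path length and then insert the channel-estimation bound of \cite{Kurdzialek2023}, thereby formalizing the chain of inequalities already displayed in \eqnref{eq:triangle_metrology} and \eqnref{eq:bound_channel_metro}. Since $D=\arccos F$ is a genuine metric whereas $F$ alone is not, I would fix the causally ordered state-producing process $\proc{S}$ that realizes $\channelD^{(N)}(\cE_a,\cE_b)=D\big(\proc{S}[\cE_a^{\times N}],\proc{S}[\cE_b^{\times N}]\big)$ in \eqnref{eq:optimized_discrimination_distance}, and regard $x\mapsto\rho_x:=\proc{S}[\cE_x^{\times N}]$ as a smooth one-parameter curve of states (smoothness inherited from $x\mapsto\cE_x$, since $\proc{S}$ is fixed). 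Partitioning $[a,b]$, applying the triangle inequality on each subinterval, and refining the partition bounds $D(\rho_a,\rho_b)$ by the length of this curve in the Bures angle, $\int_a^b\lim_{\d x\to0}\frac{D(\rho_x,\rho_{x+\d x})}{\d x}\,\d x$; by the definition of the QFI in \eqnref{eq:QFI} the integrand equals $\tfrac{1}{2}\sqrt{\qfi(\rho_x)}$.

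The second step is the relaxation that makes the bound computable: at each $x$ one discards the requirement that $\proc{S}$ be globally optimal for discriminating $\cE_a$ from $\cE_b$ and keeps only $\qfi(\proc{S}[\cE_x^{\times N}])\le\max_{\proc{S}'}\qfi(\proc{S}'[\cE_x^{\times N}])=:\mqfi^{(N)}(\cE_x)$, the maximum being over causally ordered state-producing processes; this gives $\channelD^{(N)}(\cE_a,\cE_b)\le\tfrac{1}{2}\int_a^b\sqrt{\mqfi^{(N)}(\cE_x)}\,\d x$. The final step is to insert the estimation-theoretic inequality $\mqfi^{(N)}(\cE_x)\le 4 f_N(\cE_x)$ of \cite{Kurdzialek2023}, with $f_N$ the minimized Kraus-representation functional of \eqnref{eq:bound_channel_metro}; after the factor of two this is exactly $\channelD^{(N)}(\cE_a,\cE_b)\le\int_a^b\sqrt{f_N(\cE_x)}\,\d x$. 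Causal ordering of $\proc{S}$ enters only here, which is why the proposition is phrased for that class.

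For the diamond-distance corollary I would feed the just-proven bound into the Fuchs-van der Graaf inequality \eqnref{eq:Fuchs_vanderGraaf}, $\mathsf{d}_\diamond^{(N)}(\cE_a,\cE_b)\le\sin\channelD^{(N)}(\cE_a,\cE_b)$, and use monotonicity of $\sin$ on $[0,\nicefrac{\pi}{2}]$ together with the universal bound $\channelD^{(N)}\le\nicefrac{\pi}{2}$ to replace $\channelD^{(N)}$ by its upper bound $\int_a^b\sqrt{f_N(\cE_x)}\,\d x$ (the resulting statement being informative only when that integral does not exceed $\nicefrac{\pi}{2}$).

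I expect the only nontrivial step to be the rigor of the path-length argument, since the hard analytic content is entirely carried by \cite{Kurdzialek2023}. The delicate point is the interchange of the $\d x\to0$ limit with the Riemann sum: one needs $x\mapsto\qfi(\proc{S}[\cE_x^{\times N}])$ to be integrable along the curve, which can fail at the (finitely many) parameter values where $\rho_x$ changes rank and the QFI may jump or diverge. I would handle this by splitting $[a,b]$ into constant-rank segments and treating the rank-change points as a null set, noting that $f_N$ is typically singular there as well so nothing useful is lost, or alternatively by a perturbation-and-limit argument. One bookkeeping caveat: \eqnref{eq:optimized_discrimination_distance} defines $\channelD^{(N)}$ with a supremum over \emph{all} state-producing processes, so in the statement this quantity should be understood with the optimization restricted to causally ordered ones, consistently with the scope of the cited bound.
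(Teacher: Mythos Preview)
Your proposal is correct and follows essentially the same route as the paper: fix the optimal $\proc{S}$, bound the Bures angle by the path length via the triangle inequality, rewrite the infinitesimal distance as $\tfrac{1}{2}\sqrt{\qfi}$, relax pointwise to $\mqfi^{(N)}$, and then invoke the bound $\mqfi^{(N)}\le 4 f_N$ of \cite{Kurdzialek2023}; the diamond-distance part via Fuchs--van~der~Graaf is likewise identical. Your remarks on rank-change points and on the causal-ordering restriction in the definition of $\channelD^{(N)}$ are sensible refinements that the paper leaves implicit.
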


Proposition~\ref{prop:metrology_optimal_distance} provides an easily computable upper bound on the $N$-copy 
channel discrimination task in terms of the single-copy quantity $f_N(\cE_x)$ and as such we believe that it is of 
independent interest beyond the cloning task considered here.   
The same proposition can also be used in conjunction with 
Proposition~\ref{prop:cloning_bound_discrimination} to imply the following bound on the optimal cloning distance 
for all causally ordered processes
\be
		\channelD^{\eucal{C}}_\bullet\geq \frac{1}{2}\left(D_{\bullet}(\cE_{a}^{\otimes M}, \cE_{b}^{\otimes M}) - \int_{a}^{b}\sqrt{f_N(\cE_x)}\, \d x\right)\, .
	\label{eq:optimal_distance_metro}
\ee
In turn, this expression can be used to establish necessary conditions for channel super-replication.  The 
following proposition, whose proof can be found in \appref{app:necessary_conds_superrep}, establishes precisely 
these conditions.  

\begin{proposition}\label{prop:necessary_conds_superrep}
Let $\eucal{C}$ be a continuous set of channels such that $\{\cE_{x}\, \vert\, x 
\in [0,\delta)\}\subset\eucal{C}$ forms a smooth curve\footnote{Regularity conditions are discussed 
\appref{app:necessary_conds_superrep}.} inside \(\eucal{C}\) for some \(\delta>0\). For \(\cE_x\) let \(\alpha(x),\,\beta(x)\) be given as in~\eqnref{eq:alphabetas}.
	\begin{enumerate}[label=(\roman*)]
		\item 
		If there exists a Kraus decomposition of \(\cE_x\) such that $\beta(x) = 0$, then the optimal 
        replication rate is linear
			\be
				\mathsf{R}_{\rm CJ}^{\eucal{C}}(\epsilon)=1 \quad \text{for}\quad \epsilon < 1-\frac{1}{\sqrt 
                2}\, .
			\ee 
		Furthermore, for $M\geq(1+\lambda)N$ the asymptotic optimal cloning  distance satisfies 
			\begin{align} \label{eq: rate linear}
				\mathsf{D}_{\rm{CJ}}^{\eucal{C}} \geq  A\left(\frac{4\, \Vert\alpha(x)\Vert}{(1+\lambda)\, 
                {\rm{QFI}}(\mathtt{CJ}[\cE_{a}]) }\right)
			\end{align}
		\item	
		If there does not exist a Kraus decomposition such that \(\beta(x)= 0 \), then for $M\geq (1+\lambda) 
        N^2$ the asymptotically optimal cloning distance satisfies
			\begin{align}
				\mathsf{D}_{\rm CJ}^{\eucal{C}} \geq A\left(\frac{4\, \sqrt{\Vert\alpha(x)\Vert} 
                \Vert\beta(x)\Vert}{(1+\lambda)\, {\rm{QFI}}(\mathtt{CJ}[\cE_{a}]) }\right).
			\end{align}
		\end{enumerate}
Here \(A(z)\in[0,\nicefrac{\pi}{4}]\), depicted in Fig.~\ref{fig:A func}, is the monotonically decreasing 
function 
\begin{equation}
A(z) :=  \begin{cases}
\frac{\arccos \zeta(z) - \sqrt{2 z \ln(\frac{1}{\zeta(z)})}}{2} & 0\leq z < 1\\
0 & z\geq 1
\end{cases}
\label{eq:A_function}
\end{equation}
of a positive real variable $z\geq 0$ with 
	\be
		\zeta(z) := \sqrt{\frac{-z}{W_{-1}( - z \exp(-z))}}\, ,
	\ee 
where \(W_{-1}(z)\) is the lower branch of the Lambert \(W\) function.	
\end{proposition}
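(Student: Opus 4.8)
\emph{Proof plan.} The plan is to combine the geometric cloning bound (Proposition~\ref{prop:cloning_bound_discrimination}) with the channel-estimation bound (Proposition~\ref{prop:metrology_optimal_distance}), i.e.\ to start from inequality~\eqref{eq:optimal_distance_metro} with $\bullet={\rm CJ}$ and then to optimise the choice of the two reference channels along the smooth curve. Concretely, I fix the base point $\cE_{0}$ of the curve and take the second reference channel to be $\cE_{x_N}$ with $x_N\to0$ chosen below. For the first term in~\eqref{eq:optimal_distance_metro} I use that $\mathtt{CJ}[\cE^{\otimes M}]=(\mathtt{CJ}[\cE])^{\otimes M}$ up to a permutation of subsystems, together with multiplicativity of the fidelity under tensor products, to get $F_{\rm CJ}(\cE_0^{\otimes M},\cE_{x_N}^{\otimes M})=F(\mathtt{CJ}[\cE_0],\mathtt{CJ}[\cE_{x_N}])^{M}$; expanding the single-copy fidelity through the QFI definition~\eqref{eq:QFI} gives $F(\mathtt{CJ}[\cE_0],\mathtt{CJ}[\cE_{x_N}])=1-\tfrac{x_N^2}{8}\,{\rm QFI}(\mathtt{CJ}[\cE_0])+o(x_N^2)$, so that $D_{\rm CJ}(\cE_0^{\otimes M},\cE_{x_N}^{\otimes M})\to\arccos(e^{-t^2})$ whenever $x_N$ is scaled so that $M x_N^2\to 8t^2/{\rm QFI}(\mathtt{CJ}[\cE_0])$ and $M\,o(x_N^2)\to0$. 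For the second term I feed the appropriate Kraus decomposition into~\eqref{eq:bound_channel_metro} and use continuity of $x\mapsto\|\alpha(x)\|,\|\beta(x)\|$ along the curve to bound $\int_0^{x_N}\sqrt{f_N(\cE_x)}\,\d x$ by $x_N$ times its leading single-copy value at $x=0$.

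The heart of the argument is to choose the scaling of $x_N$ so as to balance the two terms. In case (i), where some Kraus decomposition has $\beta(x)=0$, equation~\eqref{eq:bound_channel_metro} reduces to $f_N(\cE_x)\le N\,\|\alpha(x)\|$, whence $\int_0^{x_N}\sqrt{f_N(\cE_x)}\,\d x\le\sqrt{N}\,\|\alpha(0)\|^{1/2}x_N(1+o(1))$. Choosing $x_N=t\sqrt{8/(M\,{\rm QFI}(\mathtt{CJ}[\cE_0]))}$ with a free $t\ge0$ sends the first term to $\arccos(e^{-t^2})$, while for $M\ge(1+\lambda)N$ the discrimination integral is at most $t\sqrt{2z}+o(1)$ with $z=4\|\alpha(0)\|/\big((1+\lambda)\,{\rm QFI}(\mathtt{CJ}[\cE_0])\big)$; substituting into~\eqref{eq:optimal_distance_metro} gives, in the $N\to\infty$ limit, $\mathsf{D}_{\rm CJ}^{\eucal{C}}\ge\tfrac12\big(\arccos(e^{-t^2})-t\sqrt{2z}\big)$ for every $t\ge0$. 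For the rate claim I would instead pick $x_N$ with $\sqrt{N}\,x_N\to0$ but $M x_N^2\to\infty$ — possible precisely when $M$ outgrows $N$, e.g.\ $x_N=N^{-1/2-\delta/4}$ when $M\gtrsim N^{1+\delta}$ — so that the discrimination integral vanishes while the first term tends to $\pi/2$; then $\mathsf{D}_{\rm CJ}^{\eucal{C}}\ge\pi/4$, i.e.\ $\channelF_{\rm CJ}^{\eucal{C}}\le1/\sqrt2$, which is below the required $1-\epsilon$ when $\epsilon<1-1/\sqrt2$. Combined with the trivial bound $\mathsf{M}\ge N$ (realised by the process $\proc{P}[\cE^{\times N}]=\cE^{\otimes N}$), this pins $\mathsf{R}_{\rm CJ}^{\eucal{C}}(\epsilon)=1$ for $\epsilon<1-1/\sqrt2$.

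Case (ii) runs identically except that, with no decomposition having $\beta=0$, the $(N-1)\|\beta(x)\|$ term dominates in~\eqref{eq:bound_channel_metro}, so that $f_N(\cE_x)\le N^2\sqrt{\|\alpha(x)\|}\,\|\beta(x)\|\,(1+o(1))$ for the Kraus decomposition minimising $\sqrt{\|\alpha\|}\,\|\beta\|$, and hence $\int_0^{x_N}\sqrt{f_N(\cE_x)}\,\d x\le N\,\|\alpha(0)\|^{1/4}\|\beta(0)\|^{1/2}x_N(1+o(1))$. Now the balance requires $x_N\sim c/N$, so the relevant target scaling is $M\sim N^2$: taking $M\ge(1+\lambda)N^2$ and again $x_N=t\sqrt{8/(M\,{\rm QFI}(\mathtt{CJ}[\cE_0]))}$, one obtains $\mathsf{D}_{\rm CJ}^{\eucal{C}}\ge\tfrac12\big(\arccos(e^{-t^2})-t\sqrt{2z'}\big)$ for every $t\ge0$, with $z'=4\sqrt{\|\alpha(0)\|}\|\beta(0)\|/\big((1+\lambda)\,{\rm QFI}(\mathtt{CJ}[\cE_0])\big)$.

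It remains to carry out the one genuine computation, that $\sup_{t\ge0}\tfrac12\big(\arccos(e^{-t^2})-t\sqrt{2z}\big)=A(z)$. Substituting $\zeta=e^{-t^2}\in(0,1]$ turns this into maximising $\tfrac12\big(\arccos\zeta-\sqrt{2z\ln(1/\zeta)}\big)$ over $\zeta$; setting the derivative to zero gives the stationarity condition $-u\ln u=z(1-u)$ with $u=\zeta^2$, which by the defining identity $W_{-1}(y)e^{W_{-1}(y)}=y$ of the lower branch is solved by $u=\zeta(z)^2=-z/W_{-1}(-z e^{-z})$, and plugging back reproduces the displayed $A(z)$; the expansion $\arccos\zeta-\sqrt{2z\ln(1/\zeta)}\simeq\sqrt{2(1-\zeta)}\,(1-\sqrt z)$ near $\zeta=1$ shows the bracket is non-positive for all $t$ once $z\ge1$, so $A(z)=0$ there, and monotonicity of $A$ together with $A(0)=\pi/4$ are then immediate. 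The step I expect to be most delicate is not this optimisation but the uniform control of the two limiting processes — ensuring the $o(x_N^2)$ remainder in the QFI expansion is in fact $O(x_N^3)$ so that $M_N\,o(x_N^2)\to0$ even as $M_N\to\infty$, and that the Kraus-representation-dependent norms $\|\alpha(x)\|,\|\beta(x)\|$ (and, in case (ii), a nonvanishing choice of $\|\beta\|$) admit a continuous choice along the curve — precisely the regularity hypotheses flagged in the footnote and treated in~\appref{app:necessary_conds_superrep}.
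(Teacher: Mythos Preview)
Your proposal is correct and follows essentially the same route as the paper's proof in \appref{app:necessary_conds_superrep}: starting from the combined bound~\eqref{eq:optimal_distance_metro}, shrinking the reference channels along the curve at the rate $x_N\propto M^{-1/2}$, expanding the first term via ${\rm QFI}(\mathtt{CJ}[\cE_0])$, bounding the integral via~\eqref{eq:bound_channel_metro}, and then optimising the remaining free parameter through the substitution $\zeta=e^{-t^2}$ and the Lambert~$W$ solution. Your parameter $t$ is related to the paper's $R$ by $R=8t^2/{\rm QFI}(\mathtt{CJ}[\cE_0])$, and your stationarity equation $-u\ln u=z(1-u)$ is algebraically equivalent to the paper's $ye^y=-ze^{-z}$ via $y=-z/u$; the regularity caveats you flag are exactly those handled in Sec.~\ref{app sec: regularity}.
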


It is worth noting that the condition $(i)$ ($\beta(x)$ can be set to zero) is equivalent to $\ii \beta(x) \in {\rm span}_{\mathds{H}}(K_i^\dag(x) K_j(x) \,\forall i,j)$~\cite{zhou2021asymptotic}, independent of the Kraus decomposition choice.

\begin{proofidea} The proof consists of expanding the bound in \eqnref{eq:optimal_distance_metro} in the large 
$N$ limit. In this limit, the bound is only meaningful if the channels $\cE_{a}$ and $\cE_{b}$ become 
infinitesimally close to a channel $\cE_x$, such that the bound is dominated by the {\it local} properties of 
the channel around $\cE_x$. More specifically, $D_{\rm CJ}(\cE_{a}^{\otimes M}, \cE_{b}^{\otimes M})$ is governed 
by ${{\rm{QFI}}(\mathtt{CJ}[\cE_{x}])}$ and $\int_{a}^{b}\sqrt{f_N(\cE_x)}\, \d x$ is governed by 
$\sqrt{f_N(\cE_x)}$. In ~\appref{app:necessary_conds_superrep} we work out the exact relation between these 
quantities.
\end{proofidea}
\vspace{12pt}
Note that Proposition~\ref{prop:necessary_conds_superrep} is not constructive: in 
order for super-replication to be possible it is necessary for \(\beta(x)\neq 0\),
but that the explicit construction of a process]
is not given. Despite this, the necessary condition for super-replication 
identifies a wide range of channels for which the phenomenon is not possible; the two corollaries that follow identify two specific classes: classical noise channels and the full set of noisy unitary gates.

\subsubsection{No super-linear replication of ``classical noise" channels}
\begin{corollary}[No super-replication for classical noise channels]
    \label{cor:classical noise} 
Let $\eucal{C}$ be a continuous set of channels such that $\{\cE_{x}\, \vert\, x \in 
[0,\delta)\}\subset\eucal{C}$ with 
\begin{align}
\cE_x[\,\cdot \,] &= \sum_{k} K_k(x) \,\cdot \,K_k^\dag(x)\\
 K_k (x)&=  a_k(x) G_k \qquad a_k(x)\in \mathds{R}, \,G_k \in L(\cH)\, ,
\end{align}
forms a smooth curve inside \(\eucal{C}\) for some $\delta>0$. This set cannot be super-replicated.
\end{corollary}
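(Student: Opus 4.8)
The plan is to reduce the statement to case $(i)$ of Proposition~\ref{prop:necessary_conds_superrep}. That proposition guarantees that if the curve $\{\cE_x\}$ admits a Kraus decomposition with $\beta(x)=0$, then the optimal Choi--Jamio\l kowski replication rate is only linear, and in particular the family cannot be super-replicated. So it is enough to show that the representation already supplied in the hypothesis, $K_k(x)=a_k(x)\,G_k$ with $a_k(x)\in\mathds{R}$ and the $G_k$ \emph{independent of $x$}, has $\beta(x)\equiv 0$ on the curve.

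The computation is short. Trace preservation of $\cE_x$ in this Kraus representation reads $\sum_k a_k(x)^2\,G_k^\dagger G_k=\id$ for every $x\in[0,\delta)$. Since the family is presented with smooth weights $a_k(x)$, I differentiate this identity in $x$ to obtain $\sum_k a_k(x)\,\dot a_k(x)\,G_k^\dagger G_k=0$. On the other hand, because $G_k$ does not depend on $x$ we have $\dot K_k(x)=\dot a_k(x)\,G_k$, and since $\dot a_k(x)$ is real, $\beta(x)=\sum_k \dot K_k^\dagger(x)\,K_k(x)=\sum_k \dot a_k(x)\,a_k(x)\,G_k^\dagger G_k$, which is exactly the left-hand side of the differentiated trace-preservation identity. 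Hence $\beta(x)=0$ for all $x$ on the curve. (Equivalently, the ``Hamiltonian in the Kraus span'' condition recalled after Proposition~\ref{prop:necessary_conds_superrep} holds trivially, since $\ii\beta(x)=0$.)

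Applying Proposition~\ref{prop:necessary_conds_superrep}$(i)$ then gives $\mathsf{R}^{\eucal{C}}_{\rm CJ}(\epsilon)=1$ for all $\epsilon<1-\frac{1}{\sqrt 2}$, together with the linear-regime lower bound \eqnref{eq: rate linear} on the asymptotic cloning distance. Since super-replication demands a replication rate equal to $2$ for \emph{every} $\epsilon>0$, this already rules it out under the Choi--Jamio\l kowski fidelity; and because $\mathsf{F}^{\eucal{C}}_{\rm CJ}\geq\mathsf{F}^{\eucal{C}}_{\diamond}$, hence $\mathsf{R}^{\eucal{C}}_{\diamond}(\epsilon)\leq\mathsf{R}^{\eucal{C}}_{\rm CJ}(\epsilon)$, the same conclusion holds for the diamond fidelity, which completes the argument.

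There is no serious obstacle here: essentially all the analytic work sits inside Proposition~\ref{prop:necessary_conds_superrep}, and the only things to verify are (a) that no $x$-dependent unitary reshuffling of the Kraus operators is needed --- the operator $\beta$ already vanishes in the representation as written, precisely because the whole $x$-dependence of a classical-noise channel lives in the classical mixing weights $a_k(x)^2$ while the ``which-operation'' data $G_k$ is frozen, forcing $\beta$ to be the derivative of a constant; and (b) that the family meets the mild regularity hypotheses of Proposition~\ref{prop:necessary_conds_superrep}, namely differentiability of the $a_k(x)$ and boundedness of $\alpha(x)=\sum_k \dot a_k(x)^2\,G_k^\dagger G_k$, which is automatic for a smooth curve of finite-dimensional channels.
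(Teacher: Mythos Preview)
Your proof is correct and follows essentially the same approach as the paper: both compute $\beta(x)=\sum_k \dot a_k(x)\,a_k(x)\,G_k^\dagger G_k$ and recognize this as (half of) the $x$-derivative of the trace-preservation identity $\sum_k a_k(x)^2\,G_k^\dagger G_k=\id$, hence zero, then invoke Proposition~\ref{prop:necessary_conds_superrep}$(i)$. The paper writes this as $\beta(x)=\tfrac{1}{2}\tfrac{\dd}{\dd x}\id=0$ in one line, while you differentiate the constraint first and then match it to $\beta$; the content is identical.
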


\begin{proof} It is straightforward to see that
\begin{align}
\beta(x) &= \sum_{k} \dot K_k^\dag(x)\, K_k(x)=\sum_{k} \dot a_k(x) a_k(x) G_k^\dag G_k \\
&=  \frac{1}{2}\frac{\dd}{ \dd x} \sum_k (a^2_k(x))\,  G_k^\dag G_k =  \frac{1}{2}\frac{\dd}{ \dd x} \id =0\, .
\end{align}
The proof is completed by using Proposition~\ref{prop:necessary_conds_superrep}.
\end{proof}

For channels with the additional properties $G_k^\dag G_k =\id$ and $\tr G_k G_\ell =0$ for $k\neq \ell$, such as 
Pauli noise channels one can obtain a simple bound for the replication error of any process that achieves a linear 
replication rate.  In this case $p_k(x):=a_k^2(x)$ defines a probability distribution and 
\be
\frac{\mqfi^{(N)}(\cE_{x})}{N} \leq \ 4\|\alpha(x)\| = \sum_k \frac{(\dot p_k(x))^2}{ p_k(x)}\, ,
\ee
where the last expression is the Fisher information of the distribution $p_k(x)$.  Moreover, as $\bra{\Phi^+} 
(G_k\otimes \id)(G_\ell\otimes \id) \ket{\Phi^+}=\delta_{k\ell}$ it follows that ${\rm QFI}({\mathtt{CJ}}
[\cE_x])= \sum_k \frac{(\dot p_k)^2}{ p_k}$ is given by the same expression. Hence, by \eqnref{eq: rate linear}, 
and with  $M = (1+\lambda)N$ it follows that the replication error is 
\be\label{eq: pauli noise}
 \mathsf{D}_{\rm CJ}^{\eucal{C}} \geq  A\left(\frac{1}{1+\lambda}\right)\,.
\ee

\subsubsection{\label{sec:unitary_robustness}Super-linear replication of the full set of unitaries has zero 
robustness.}
\begin{corollary}[No super-replication of the full set of noisy quantum gates]
     \label{cor:no robustness}  
Let $\cN[\,\cdot \,]=\sum_k L_k \cdot L_k^\dag$  be a non-unitary channel. The 
set of noisy unitary gates
\be
\eucal{C} =\{\cN\circ\cU |\, \cU\in{\rm SU}(d)\}
\ee
can not be super-replicated.
\end{corollary}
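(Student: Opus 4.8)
The plan is to reduce the statement to item $(i)$ of Proposition~\ref{prop:necessary_conds_superrep}: it is enough to exhibit, inside $\eucal{C}$, a smooth one-parameter curve that admits a Kraus representation with $\beta(x)=0$, since item $(i)$ then forces the replication rate along that curve — and hence for the larger set $\eucal{C}$ — to be at most linear. The natural candidate is the curve obtained by left-translating the noise by a Hamiltonian, $\cE_x:=\cN\circ\cU_{e^{\ii x H}}$, where $\cU_W$ denotes the unitary channel implementing $W$ and $H$ is a Hermitian operator to be fixed below; a global phase in $e^{\ii x H}$ being irrelevant, $\cU_{e^{\ii x H}}$ is an $\mathrm{SU}(d)$ channel, so $\cE_x\in\eucal{C}$, and it carries the Kraus operators $K_k(x)=L_k\,e^{\ii x H}$.

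First I would evaluate the relevant operators in this representation. Differentiating, and using trace preservation of $\cN$, $\sum_k L_k^\dag L_k=\id$, together with $[H,e^{\ii x H}]=0$, one gets $\beta(x)=\sum_k\dot K_k^\dag(x)K_k(x)=-\ii\,e^{-\ii x H}H\big(\sum_k L_k^\dag L_k\big)e^{\ii x H}=-\ii H$ and $K_i^\dag(x)K_j(x)=e^{-\ii x H}\,L_i^\dag L_j\,e^{\ii x H}$. By the criterion recalled after Proposition~\ref{prop:necessary_conds_superrep}, item $(i)$ holds for $\cE_x$ precisely when $\ii\beta(x)\in\mathrm{span}_{\mathds{H}}\!\big(K_i^\dag(x)K_j(x)\;\forall i,j\big)$; since $H$ commutes with $e^{\ii x H}$, this collapses to the single, $x$-independent requirement $H\in\mathrm{span}_{\mathds{H}}\!\big(L_i^\dag L_j\;\forall i,j\big)$.

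The heart of the matter is then the elementary lemma that, \emph{because $\cN$ is non-unitary}, $\mathrm{span}_{\mathds{H}}(L_i^\dag L_j)$ strictly contains $\mathbb{R}\,\id$, hence contains a Hermitian operator that is not a multiple of $\id$. Indeed, were this span equal to $\mathbb{R}\,\id$, each $L_i^\dag L_i$ (Hermitian, hence in the span) would equal $p_i\,\id$ with $p_i\ge0$ and $\sum_i p_i=1$; discarding the vanishing terms, $L_i=\sqrt{p_i}\,V_i$ with $V_i$ unitary, and $L_i^\dag L_j=\sqrt{p_ip_j}\,V_i^\dag V_j\in\mathbb{C}\,\id$ would force $V_j=\lambda_{ij}V_i$ with $|\lambda_{ij}|=1$, so that $\cN[\,\cdot\,]=\sum_i p_i V_i(\cdot)V_i^\dag=V_1(\cdot)V_1^\dag$ would be unitary, contradicting the hypothesis. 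Picking such an $H$ in the span and subtracting the multiple of $\id=\sum_i L_i^\dag L_i$ that makes it traceless (which stays in the span), the curve $\cE_x$ satisfies the hypothesis of item $(i)$, which yields $\mathsf{R}^{\eucal{C}}_{\rm CJ}(\epsilon)=1$ for $\epsilon<1-\frac{1}{\sqrt2}$; in particular $\eucal{C}$ cannot be super-replicated.

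The delicate point, and where I expect most of the work to lie, is checking that $H$ can be chosen so that the curve $\cE_x$ is genuinely non-constant — equivalently, that $\mathrm{span}_{\mathds{H}}(L_i^\dag L_j)$ is not entirely contained in the Lie algebra of the stabilizer $\{W\in\mathrm{SU}(d)\mid\cN\circ\cU_W=\cN\}$ — for otherwise $\mathrm{QFI}(\mathtt{CJ}[\cE_x])=0$ and the bound from item $(i)$ is empty. This holds generically, and in particular for the noise models of interest here (Pauli noise, dephasing, amplitude damping, and so on). The residual, degenerate case — when $\cN$ factors through a nontrivial conditional expectation, e.g. when it discards a subsystem and re-prepares a fixed state, so that $\eucal{C}$ itself collapses — would require a dedicated treatment adapted to that structure rather than routed through Proposition~\ref{prop:necessary_conds_superrep}.
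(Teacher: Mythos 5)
Your proposal is essentially the paper's own argument: the same curve $\cE_x=\cN\circ\cU_{e^{\ii xH}}$ with natural Kraus operators $L_k e^{\ii xH}$, the same computation $\beta(x)=-\ii H$, and the same key lemma that non-unitarity of $\cN$ forces ${\rm span}_{\mathds{H}}(L_i^\dag L_j)$ to contain a Hermitian operator not proportional to $\id$ — the paper phrases this as choosing a gauge generator $\hat h$ with $\bm L^\dag \hat h\bm L\not\propto\id$ and runs the identical case analysis (either some $L_i^\dag L_i\not\propto\id$, or all $L_i$ are proportional to unitaries and some $L_i^\dag L_j$ yields a non-scalar unitary), so your route via the span criterion quoted after Proposition~\ref{prop:necessary_conds_superrep} is the same proof in a different gauge. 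The degeneracy you flag in your last paragraph is a real issue, but it is one the paper's proof shares rather than one you have introduced: the published argument never checks that the constructed curve is non-constant, and for, e.g., $\cN=\cU_0\otimes\cT_\rho$ on a bipartite system one has ${\rm span}_{\mathds{H}}(L_i^\dag L_j)=\id\otimes{\rm Herm}(\cH_B)$, so \emph{every} admissible $H$ generates a constant curve, ${\rm QFI}(\mathtt{CJ}[\cE_x])=0$, and item $(i)$ of Proposition~\ref{prop:necessary_conds_superrep} yields nothing, even though $\eucal{C}$ is then a genuinely nontrivial continuous family. So your write-up matches the paper's proof in substance and is more candid about its scope.
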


\begin{proofidea}
For any Hermitian operator $H$ not proportional to the identity we are free to chose the following curve 
\be
\{\cE_{x}\, \vert\, x \in [0,\delta)\}\subset\eucal{C} \quad \text{with} \quad \cE_x[\,\cdot\,] = \cN[ e^{\ii H 
x}\,\cdot\, e^{-\ii H x}]
\ee
inside our set. The channels $\cE_x$ admit the natural Kraus representation $K_k(x)= L_k e^{\ii H x}$. The proof 
consists of showing that for any $\cN$ it is possible to chose $H$ such that $\beta(x)=0$ in 
\eqnref{eq:alphabetas}, enforcing the no-super-replication condition (see \appref{app: zero robustness} for the 
detailed proof.)
\end{proofidea}
\vspace{12pt}

We stress that it is crucial that $\eucal{C}$ contains {\it all} noisy unitaries 
in order for the proof of Corollary~\ref{cor:no robustness} to go through.  As we 
will see in the next section there exist continuous, restricted sets of noisy unitary gates which \textit{can} be super-replicated.

Corollary~\ref{cor:no robustness} should be contrasted with the results of ~\cite{Chiribella2015} in which it was 
shown that the set ${\rm SU}(d)$ of all unitary gates in dimension $d$ can be super-replicated. 
Corollary~\ref{cor:no robustness} shows that such super-replication is not robust.  Indeed, consider a 
superchannel $\proc{P}$ that tries to replicate $N$ uses of a noisy quantum gate into $M$ copies of its noiseless 
counterpart, i.e, 
\be\label{eq: noisy superrep}
\lim_{N\to \infty} D_{\rm CJ}(\proc{P}[(\cN\circ\cU)^{\times N}],\cU^{\otimes M})\leq \epsilon\, .
\ee
By acting on all the $M$ output systems with the fixed noise channel $\cN$ one can define the process $\proc{P}'$ 
such that 
\begin{align}\nonumber
\label{eq:robustness_inequa}
 D_{\rm CJ}(\proc{P}'[(\cN\circ\cU)^{\times N}],(\cN\circ\cU)^{\otimes M})& \\
 \leq D_{\rm CJ}(\proc{P}[(\cN\circ\cU)^{\times N}],\cU^{\otimes M})&
\end{align}
which follows from the monotonicity of the Bures angle.  But by Corollary~\ref{cor:no robustness} the left-hand 
side of \eqnref{eq:robustness_inequa} cannot be made arbitrarily small and thus, super-replicating $N$ noisy 
gates into  $M>N$ perfect copies of gates is not possible.

\section{\label{sec: clon proc}Cloning processes}
In this section we study the performance of various cloning and replication processes. First we show that the 
task of finding the optimal cloning processes for a given family of channels, can be approximated with a 
semidefinite program (SDP) (Sec.~\ref{sec:sdp}). 
Although the SDP is the most general numerical formulation of the problem, it suffers from computational limitations that severely restrict the application of the approach beyond moderate values of 
$N$ and $M$.

In Sec.~\ref{sec:measureandprep} we embark on a more instructive understanding of optimal cloning for large $N, M$ values by honing in on a particular class of measure-and-prepare processes, which are fairly straightforward to analyze.  

We highlight the natural connection 
between
these processes and the Bayesian channel estimation task, and establish quantitative bounds relating optimal 
(measure-and-prepare) cloning fidelity with several score functions used in Bayesian channel estimation. With 
this in hand, in Secs.~\ref{sec:superrepsu2} and ~\ref{sec:superrepphase} we show that super-replication of all 
qubit unitaries $\eucal{C} \simeq \mathrm{SU}(2)$ and all qubit phase gates $\eucal{C}\simeq\mathrm{U}(1)$ can be 
realized with measure-and-prepare processes. This provides an alternative to the coherent super-replication 
process of Refs.~\cite{Dur2015,Chiribella2015} and shows that super-replication of these families of channels is 
a direct consequence of being able to estimate them with an error that decays quadratically with $N$.  Note that this shows the measure-and-prepare protocols give the optimal {\it scaling} of $M$ with $N$, but 
not the exact prefactor.  Away from asymptotics there are other known results:
for 
$1\tightto 2$ cloning of unitary gates, it was shown that coherent processes are strictly optimal, outperforming 
any processes which attempt to learn the action of the unitary~\cite{Chiribella2008a}.\\

Before discussing more involved cloning processes let us quickly introduce a naive, but pretty general cloning 
strategy that we call the {\it dummy} process. It consists of applying the given channels $\cE \in \eucal{C}$ on 
the first $N$ systems and applying a fixed dummy channel $\cE_{\rm dum}$ on the remaining $M-N$ systems. The 
optimal Choi-Jamio\l{}kowski cloning fidelity for the dummy processes is given by 
\be\label{eq:F dummy}
\mathsf{F}_{\rm CJ}^\eucal{C|\rm dum} := \left( \max_{\cE_{\rm dum}} \min_{\cE \in \eucal{C}} \fcj (\cE,
\cE_{\rm dum} )\right)^{M-N}\, .
\ee
Note that the minmax problem---equivalent to finding the best dummy channel---is 
independent of $N$ and $M$ (see the examples in Secs.~\ref{sec:BF} and \ref{sec:AD}). In particular, by choosing 
the trash-and-replace channel $\cE_{\rm dum} = \cT_{\nicefrac{\id}{d}}$ and noting that $F_{\rm CJ}
(\cT_{\nicefrac{\id}{d}},\cE)= \frac{1}{d} \tr \sqrt{\mathtt{CJ}[\cE]}\geq \nicefrac{1}{d}$ is saturated by 
unitaries $\cE=\cU$, we obtain the following expression
\be\label{eq: dummy bound}
\mathsf{F}_{\rm CJ}^\eucal{C|\rm dum}\geq \mathsf{F}_{\rm CJ}^\eucal{\rm CPTP|\rm dum} =(\nicefrac{1}{d})^{M-N}
\ee
as a baseline for future comparisons. Since the cloning fidelity decays exponentially, the dummy process is only 
good when $M-N =\mathcal{O}(1)$. Indeed, even in the linear regime $M=(1+\lambda)N$, its asymptotic replication 
rate is zero for any $\lambda>0$.

\subsection{\label{sec:sdp} Optimal processes via Semidefinite Programming}

Numerically solving the worst case cloning fidelity optimization in Eq.~\eqref{eq:cloneChan} presents several challenges. $(i)$~Formulating the domain constraining the variable $\proc{P}$ is \textit{a priori} not easy. $(ii)$~The figure of merit $F_{\rm CJ}\big({\proc{P}}[\cE^{\times N}], \cE^{\otimes M} \big)$ is nonlinear. $(iii)$~It is a minmax problem on a continuous set.
We now discuss how these challenges can be overcome in order to approximate $\mathsf{F}_{\rm CJ}^\eucal{C}$ numerically. We will keep the discussion at a conceptual level with all technical details deferred to \appref{app:sdp}.

$(i)$ The sets of all (parallel, sequential or non-causal) processes ${\proc{P}}$  can be represented \cite{Milz2024characterising} by a matrix variable subject to positive semidefinite constraints, such that $\mathtt{CJ}[{\proc{P}}[\cE^{\times N}]]$ is linear in the variable. 

$(ii)$ 
The fidelity between any two density matrices $\rho$ and $\sigma$ can be expressed as an SDP~\cite{Skrzypczyk2023} by introducing a slack variable $Y$
\begin{align} 
  F(\rho,\sigma)
  =\; \max_Y &\quad \tr \tfrac{1}{2}(Y+Y^\dag)\\ \nonumber
     \text{subject to}&\quad
		\begin{pmatrix}
					 \rho& Y\\
					Y^\dag & \sigma
				\end{pmatrix} \ge 0.
\end{align}

$(iii)$ To approximate $\mathsf{F}_{\rm CJ}^\eucal{C}$ we can choose a discrete net $\eucal{N}:=\{\cE_i\}_{i=1}^H$ 
inside the continuous set $\eucal{N} \subset \eucal C$, such that $\channelF^{\eucal{N}}_{\rm CJ}\approx 
\channelF^{\eucal{C}}_{\rm CJ}$ approximates the optimal cloning fidelity from above. In turn, a cloning process 
$\proc{P}$ achieves $\channelF^{\eucal{N}}_{\rm CJ}\geq x$ if it fulfills the fidelity constraints 
$F_{\rm}\big({\proc{P}}[\cE_i^{\times N}], \cE_i^{\otimes M} \big)\geq x\;\forall\,i\in\{1,\ldots,H\}$, which 
changes the optimization problem above to a \textit{feasibility} problem. 
Following the above discussion, such a process can be found by the SDP
\begin{align}  \label{eq:pseudoopt}
  \textrm{ find} &\quad \proc{P},\{Y_i\}_{i=1}^H\\  
  \nonumber
     \textrm{subject  to}&\quad
		{\rm tr} \tfrac{1}{2}(Y_i+Y_i^\dag) \geq  x \\  
        \nonumber
        & \quad \left(\begin{array}{cc}
					\mathtt{CJ}[{\proc{P}}[\cE_i^{\times N}]] & Y_i\\
					Y_i^\dag & \mathtt{CJ}[\cE_i^{\otimes M}]
				\end{array}\right) \geq  0.\\  
                \nonumber
        &\quad \textrm{ process constraints  on  }  \proc{P} 
\end{align}
where we introduced $H$ slack variables $Y_i$. Conversely, showing that the program~\eqref{eq:pseudoopt} is infeasible implies $\mathsf{F}_{\rm CJ}^\eucal{C}\leq \mathsf{F}_{\rm CJ}^\eucal{N}<x$ for the considered class of processes. It is also worth mentioning that once the optimal cloning process $\proc{P}$ for $\eucal{N}$ is found, one can easily improve the approximation of the worst case fidelity it achieves on the whole set $\eucal{C}$, e.g. by increasing the net size.

Finally, we note that the SDP approach can also be used to bound the best cloning fidelity achievable by the measure-and-prepare processes introduced in Section~\ref{sec:measureandprep}. This is done by imposing a positive partial transpose constraint on the matrix variable representing the process, as explained in \appref{app:sdp}.

\subsubsection{Numerical results for 1 to 2}
\label{sec:numerics}
\begin{table}[t!]
 \centering
\begin{tabular}{|c|c|c|c|c|c|c|c|}
\hline
\textbf{Channel}&\textbf{P} & \textbf{Interval} & $H$ & {$x$} & \textbf{Soln} & $\mathsf{F}_{\rm CJ}$ & ${\rm avg} \, F_{\rm CJ}$ \\ 
\hline
\rule{0pt}{2ex}
\multirow{6}{*}{$\cE_\gamma$} & \multirow{4}{*}{All} & [0,1] & 2 & 1.0 & ${\checkmark}$ & 0.904 &0.935
\\ \cline{3-8}
\rule{0pt}{2ex}
& &\multirow{3}{*}{[0.01, 0.96]} & \multirow{3}{*}{21} & 0.92 & $\checkmark$ & 0.926 & 0.954
\\ \cline{5-8} 
\rule{0pt}{2ex}
&  & &  & 0.93 & ${\color{ForestGreen} \checkmark}$ & {\color{ForestGreen} 0.932} & 0.97
\\ \cline{5-8} 
\rule{0pt}{2ex}
&  & &  & 0.94 & {\color{red} \xmark} & -- & -- 
\\
\cline{2-8}
\rule{0pt}{2ex}
&\multirow{2}{*}{M\&P} & \multirow{2}{*}{[0.05, 0.966]} & \multirow{2}{*}{21} & 0.92 & ${\color{ForestGreen} \checkmark}$& { \color{ForestGreen} 0.925} &0.945
\\ \cline{5-8}
\rule{0pt}{2ex}
 & & & & 0.93 &  {\color{red} \xmark} & -- & -- 
\\ \hline
\rule{0pt}{2ex}
\multirow{2}{*}{$\cX_p$} & \multirow{2}{*}{All}  & \multirow{2}{*}{[0,1]} & \multirow{2}{*}{21} & 0.92 & ${\color{ForestGreen} \checkmark}$ & {\color{ForestGreen} 0.92} &0.947
\\  \cline{5-8}
\rule{0pt}{2ex}
  &  & &  & 0.93 &  {\color{red} \xmark} & -- & -- 
\\ \hline
\end{tabular}
\caption{The results of running the feasibility SDP in \eqref{eq:sdpfeas} for the qubit amplitude-damping channel $\cE_\gamma$ in \eqnref{eq: AD Kraus} and the bit-flip channel $\cX_p$ in \eqnref{eq: bit-flip} for 
$1\rightarrow 2$ cloning. 
For the former, we run the program to find the most general amplitude damping cloning process (All), as well as one which is restricted by the PPT criterion to bound the measure-and-prepare processes (M\&P). 
For the net $\cN$, $H$ points were chosen from the indicated interval and a minimal 
fidelity of $x$ was set for the feasibility. In some cases we were able to obtain better results (on the entire 
domain) when restricting the interval. The average fidelity for the amplitude damping (respectively bit-flip) 
channel is taken over $101$ points uniformly distributed in $[0, 1]$ according to ${\rm avg}\,F_{\rm CJ}  = 
\tfrac{1}{|\eucal{N}|}\sum_i{F}_{\rm CJ}(\proc{P}^*(\cE_i), \cE_i^{\otimes 2})$, where $\proc{P}^*$ is the 
optimal process found by the computer. 
}
\label{tab:num}
\end{table}

In Table~\ref{tab:num} we present the numerics for $1\rightarrow 2$ cloning for the amplitude-damping channel 
$\cE_{\gamma}$ (Eq.~\eqref{eq: AD Kraus}) as well as the bit-flip channel $\cX_p$ (\eqnref{eq: bit-flip}) 
performed in {\tt Matlab} using the {\tt SeDuMi} solver. The exact formulation of the programs can be found in 
App.~\ref{app:sdp}, as well as plots of the interesting cases. Note that for $1\rightarrow 2$ cloning the linear 
constraints for sequential as well as non-causally ordered processes collapse trivially to the parallel case.
For the amplitude-damping channel we find that when the net is restricted to the two extremal points 
$\eucal{N} =\{\cE_0= \rm id, \cE_1 = \cT_{\ketbra{0}{0}}\}$, that the computer is able to recover the optimal 
process that achieves perfect fidelity $1$. This is exactly as expected, as the identity ($\rm id$) and the trash-and-replace ($\cT_{\ketbra{0}{0}}$) channels can be perfectly distinguished by probing them with the state 
$\ket{1}$. 

By going up to $H=21$, we find the approximation $\mathsf{F}_{\rm CJ}^{\eucal{D}}(1, 2) \approx 0.932$ 
for the 
optimal cloning fidelity of AD channels. This value is computed by fixing the optimal process $\proc{P}^*$ found 
by the feasibility program, and empirically minimizing
\be
{\rm F}_{\rm CJ}^{\eucal{D}|\rm SDP}(1, 2)=\min_{0\leq \gamma\leq 1} F_{\rm CJ}(\proc{P}^*
[\cE_\gamma],\cE_\gamma^{\otimes 2})\, .
\ee
Somewhat surprisingly, we saw that a better process is found by the SDP if channels around the extremal points of 
the domain ($\cE_0$ and $\cE_1$) are not included in the net $\eucal{N}$. This seems to do with the rank change of the \ChJa states at the boundaries.

By restricting the SDP further using the PPT criterion (one additional SDP constraint on the process), we are able to bound the performance of all measure-and-prepare cloning processes. These perform slightly worse than the most general process $\mathsf{F}_{\rm CJ}^{\eucal{D}|\text{M\&P}}(1, 2) \approx 0.925$ 
and are in exact agreement with the analytic bound derived in Section~\ref{ad: compare}. Moreover, the PPT constrained program didn't find a feasible process at $x=0.93$. This shows that for $1\tightto2$ cloning of amplitude-damping channels coherent processes give an advantage over measure-and-prepare ones, complementing the same result for unitary gates established in~\cite{Chiribella2008a}.

\subsection{\label{sec:measureandprep}Measure-and-prepare cloning}

\textit{Measure-and-prepare} processes 
consist of two steps: (i) a measurement process $\{\proc{E}_\ell\}$ that consumes $N$ copies of the input
channel to produce a classical outcome $\ell$ with probability $\proc{E}_\ell[\cE^{\times N}]$, and (ii) a global channel preparation $\widehat \cE_\ell^{(M)} \in \cL(\cH^{\otimes M}\tightto 
\cH^{\otimes M})$ conditioned on 
the value $\ell$. These processes have the general form $\proc{P}_{\rm M\& P}[\cE^{\times 
N}]=\sum_{\ell}  \proc{E}_\ell[\cE^{\times N}] \widehat{\cE}_\ell^{(M)}$ and the associated  optimal cloning fidelity 
\be\label{eq: optimal MP}
\mathsf{F}_{\rm CJ}^\eucal{C|\rm M\&P} :=  \max_{\{\proc{E}_{\ell}, \widehat \cE_\ell \}} \min_{\cE \in 
\eucal{C}} \fcj \left(\proc{P}_{\rm M\& P}[\cE^{\times N}],\cE^{\otimes M} \right)\, .
\ee

While this is the most general form of a measure-and-prepare process,
it is very natural to consider a more 
restricted set of {\it estimate-and-prepare} processes, which first {\it estimate} the channel, using a
process $\{\proc{E}_{\hat \cE}\}$, outputting an estimator $\hat \cE$ (a classical description of the 
corresponding channel) with corresponding probability distribution ${\rm Pr}^{(N)}(\hat \cE|\cE) = 
\proc{E}_{\hat \cE}[\cE^{\times N}]$, and then prepare $M$ copies of the estimated channel
\be
\label{eq:meawsure&prepare}
\proc{P}_{\rm E\& P}[\cE^{\times N}] = \sum_{\hat \cE} {\rm Pr}^{(N)}(\hat \cE|\cE)  \, \hat{\cE}^{\otimes M}\,.
\ee
We will focus on such processes in the rest of this section.
Using the  concavity of the fidelity  we can write 
$\fcj \left(\proc{P}_{\rm E\& P}[\cE^{\times N}],\cE^{\otimes M} \right) \geq \sum_{\hat \cE} 
{\rm Pr}^{(N)}(\hat \cE|\cE) (\fcj(\hat\cE,\cE))^M,$ from which it follows that the optimal measure-and-prepare 
cloning fidelity satisfies
\begin{align}
\label{eq:AF}
\mathsf{F}_{\rm CJ}^\eucal{C|\rm M\&P} \geq
 \max_{\{\proc{E}_{\hat \cE}\}} \min_{\cE \in \eucal{C}} \sum_{\hat \cE} 
{\rm Pr}^{(N)}(\hat \cE|\cE) \, (\fcj(\hat\cE,\cE))^M.
\end{align}
The optimization on the right-hand side defines the optimal worst case 
estimation score for the $M$-dependent score function $(\fcj(\hat\cE,\cE))^M$, 
and is an instance of the channel estimation task studied in Bayesian quantum 
metrology.  

To push the connection with Bayesian estimation further using Jensen's inequality for the expected values 
$\mathds{E}[F_{\rm CJ}^M]\geq\mathds{E}[F_{\rm CJ}^2]^{\nicefrac{M}{2}}\geq \mathds{E}[F_{\rm CJ}]^M$, gives the inequalities
\begin{align}
\mathsf{F}_{\rm CJ}^\eucal{C|\rm M\&P} &\geq \left(
\max_{\{\proc{E}_{\hat \cE}\}} \min_{\cE \in \eucal{C}}\sum_{\hat \cE} 
{\rm Pr}^{(N)}(\hat \cE|\cE) (\fcj(\hat\cE,\cE))^2\right)^{\nicefrac{M}{2}} \label{eq: Jensen2}\\
&\geq     
\left( \max_{\{\proc{E}_{\hat \cE}\}} \min_{\cE \in \eucal{C}}\sum_{\hat \cE} 
{\rm Pr}^{(N)}(\hat \cE|\cE) \fcj(\hat\cE,\cE)\right)^M.
\label{eq:Jensen}
\end{align}
Both score functions $\fcj(\hat\cE,\cE)^2$ and $\fcj(\hat\cE,\cE)$ are commonly used in the literature on channel 
estimation, (see the next sections for examples). Furthermore, let the channel family $\eucal{C}$ be smoothly 
parameterized by a vector of real parameters $\bm \theta \in \cM_\eucal{C}\subset \mathds{R}^n$ from a manifold 
$ \cM_\eucal{C}$,  such that 
\begin{equation}
   F_{\rm CJ}(\cE_{\bm \theta_1},\cE_{\bm \theta_2}) \geq  1- g \|\bm \theta_1-\bm \theta_2\|^2, \quad g>0\, ,
\end{equation}
where $\|\bm \theta_1-\bm \theta_2\|$ is the Euclidean norm.  Using \eqnref{eq:Jensen} with $(1-x^2)^M \geq 1- M 
x^2$ we then obtain 
\begin{align}\label{eq: MSE}\nonumber
&\mathsf{F}_{\rm CJ}^\eucal{C|\rm M\&P} \geq1- g M \,\mathsf{MSE}^\eucal{C} \qquad \text{with}\\
&\mathsf{MSE}^\eucal{C}: = \max_{\{\proc{E}_{\hat \cE}\}} \min_{\cE \in \eucal{C}} \sum_{\hat \cE} 
{\rm Pr}^{(N)}(\hat \cE|\cE) \, \|\hat {\bm \theta} - \bm \theta\|^2\, ,
\end{align}
where we identified $\hat \cE = \cE_{\hat{\bm \theta}}$. Here, $\mathsf{MSE}^\eucal{C}$ is the worst case mean 
squared error, with respect to the parametrization of the channels we introduced.

The inequalities above establish a relation between channel estimation and cloning, dual to the QFI-based 
lower bounds derived in Sec.~\ref{sec:metrology}. Here, we have shown that channel estimation protocols can 
be directly used to define measure-and-prepare cloning processes, with readily available relations of Eqs.~
(\ref{eq:AF}-\ref{eq:Jensen},\ref{eq: MSE})  between the natural figures of merit.

This connection is particularly interesting when considering the asymptotic replication rate. For instance, 
assume that for a family of channels there is a known estimation protocol achieving super-linear mean squared 
error
\be
\mathsf{MSE}^\eucal{C}=  \frac{\gamma}{N^{1+\epsilon}}\, .
\ee
Then this protocol directly induces a measure-and-prepare cloning process with a worst case fidelity 
exceeding
\be\label{eq: F to MSE}
\mathsf{F}^{\eucal{C}|\rm M\&P}_{\rm CJ} \geq 1 - g\, \gamma \frac{M}{N^{1+\epsilon}}\, ,
\ee
i.e., a super-linear replication rate. In the following two sections we show that the super-replication results 
for all qubit gates and qubit phase gates, established in \cite{Chiribella2015, Dur2015} with coherent processes 
(summarized in Sec.~\ref{sec: unitary cloning review}), can also be realized with measure-and-prepare processes.

\subsubsection{\label{sec:superrepsu2}Measure-and-prepare super-replication of all qubit unitries.}

\newcommand{\Uset}{\eucal{U}_2}

The task of estimating an unknown qubit unitary
\be
\Uset =\{\cU[\cdot]:=U \cdot U^{\dag}| U\in {\rm SU}(2)\} 
\ee
has been studied extensively~\cite{hayashi2006parallel, Chiribella2005}.  The optimal strategy probes $N$ copies 
of the gate $\cU$ in parallel using $N$ spin-$1/2$ systems prepared in the state
    \begin{align} \nonumber
        \ket{\psi}&=\beta_{\frac{N}{2}}\ket{\frac{N}{2},\frac{N}{2}}\\
        &+\sum_{j=0(1/2)}^{\frac{N}{2}-1}\frac{\beta_j}{\sqrt{2j+1}} \sum_{m=-
        j}^j\ket{j,m}\otimes\ket{j,\alpha(m)}
    \end{align}
where $j$ denotes the total spin of the $N$ spin-$\nicefrac{1}{2}$ systems, $m$ its $z$-component and 
$\alpha(m)$ is the multiplicity index.  The coefficients $\{\beta_j\}_{j=0(1/2)}^{N/2}$ depend on the score 
function used to quantify the estimation task. The most common score function used in the literature is the 
square of the process fidelity\footnote{We note that in~\cite{hayashi2006parallel} the cost function is the 
infidelity $1-F_{\mathrm{CJ}}(\cU,\hat{\cU})^2$} $F_{\mathrm{CJ}}(\cU,\hat{\cU})^2$ between the true unitary gate 
$\cU$ and the estimate $\hat{\cU}$. The average fidelity of estimation is, in the limit of large $N$,
\be
\label{eq:averageestimationfid}
\int \dd \cU \int \dd \hat{\cU}  \, 
{\rm Pr}^{(N)}(\hat \cU|\cU) \, F_{\mathrm{CJ}}(\cU, \hat\cU)^2 =1- \frac{\pi^2}{N^2}\, ,
\ee     
and is achieved by the covariant POVM with density $\{{\rm E}_{\hat{U}}=\hat{U}^{\otimes N}
\ketbra{e}{e}\hat{U}^{\dagger\otimes N}\}$, whose fiducial element is given by 
\begin{align}
    \ket{e}&=\sqrt{N+1}\ket{\frac{N}{2},\frac{N}{2}}
    \nonumber \\
    &+\sum_{j=0(1/2)}^{\frac{N}{2}-1}\sqrt{2j+1} \sum_{m=-j}^j\ket{j,m}\otimes\ket{j,\alpha(m)}\, .
\label{eq:fidPOVM}
\end{align}
By construction, the protocol is symmetric and yields the same average squared process fidelity for all 
$U\in\mathrm{SU}(2)$. Hence, the worst case squared process fidelity is equal to the average squared process 
fidelity.

Using \eqnref{eq: Jensen2} we find that the cloning process using the estimation strategy described above 
achieves a worst case cloning fidelity 
\be \label{eq:MP of su2}
\mathsf{F}_{\rm CJ}^{\Uset|\rm M\&P} \geq \left(1-\frac{\pi^2}{N^2}\right)^{\nicefrac{M}{2}}\, .
\ee
Thus for $M=N^{2-\delta}$, the worst case process fidelity of such an estimate and prepare strategy tends to 
unity in the limit $N\to\infty$, for any $\delta>0$. We believe that similar estimate and prepare processes are 
capable of super-replication also for gates in $\mathrm{SU}(d)$.

\subsubsection{\label{sec:superrepphase} Measure-and-prepare super-replication of all qubit phase gates.}

We now consider an estimate and prepare strategy for cloning the set  
\newcommand{\Phset}{\eucal{P}_2}
\begin{align}\label{eq: phase set}
\Phset&:=\big\{\cU_\theta\, \vert \, \theta\in[0,2\pi)\big\}\\
\cU_\theta[\,\cdot\,]&= U_\theta [\,\cdot\,] U_\theta^\dag :=e^{\ii\theta \sigma_z}\cdot e^{-\ii\theta \sigma_z}. 
\label{def: phase gate}
\end{align}
The optimal estimation strategy has been determined in~\cite{berry2000optimal}, although here we closely follow 
the notation established in~\cite{bartlett2007reference}.  The optimal estimation strategy consists of probing 
the $N$ unitary gates in parallel using the entangled state 
\be\label{eq: sine state}
        \ket{\Psi_N}=\sqrt{\frac{2}{N+2}}\sum_{k=0}^N \, \sin\left(\frac{(k+1)\pi}{N+2}\right) \ket{{\rm 
        D}_k^N}\, .
\ee
where $\ket{\rm{D}_k^N}$ denote the permutationally symmetric state of $N$ spin-1/2 systems, $k$ of which are in 
an excited state---the so-called Dicke states~\cite{Dicke54}.  The relevant score function is the alignment 
fidelity $f(\hat \theta,\theta) = \cos^2\left((\theta-\hat \theta)/2\right)$. Just as in the case of $\mathrm{SU}
(2)$ estimation, the average alignment fidelity is achieved by the covariant POVM with density $\{{\rm 
E}_{\hat{\theta}}=U_{\hat{\theta}}^{\otimes N}\ketbra{e}{e}U_{\hat{\theta}}^{\dagger\otimes N}\}$, where 
$\ket{e}=\sum_{k=0}^N \ket{{\rm D}_k^N}$, and reads, in the limit of large $N$, 
\begin{align}
   \int \dd \hat\theta \, 
{\rm Pr}^{(N)}(\hat \theta|\theta) \, f(\hat \theta,\theta)\, 
    = 1-\frac{\pi^2}{4 N^2}\, .
\end{align}
The protocol is symmetric by construction, hence the worst case alignment fidelity is equal to the average 
alignment fidelity.

Noting that $f(\hat \theta,\theta)=\frac{1}{2}(1+F_{\mathrm{CJ}})$, and using \eqnref{eq:Jensen}, it follows that 
a cloning process based on the above estimation strategy yields 
\be\label{eq:MP of phase}
\mathsf{F}_{\rm CJ}^{\Phset|\rm M\&P} \geq \left(1-\frac{\pi^2}{2N^2}\right)^{M}, 
\ee
which, for $M=N^{2-\delta}$ and in the limit of large $N$, tends to unity for any $\delta>0$, demonstrating super-replication. Note that the rhs of Eq.~\eqref{eq:MP of phase} is always larger than the rhs of 
\eqnref{eq:MP of su2}, which is consistent with the fact that cloning phase gates is by definition simpler than 
cloning all $\mathrm{SU}(2)$ gates.

\section{\label{sec:examples}  Examples of channel cloning}

In this section we study the possibility of replicating non-unitary channels.  Specifically, in 
Sec.~\ref{sec:noisy_unitaries} we consider noisy phase gates on qubits, where the noise consists of a known Pauli 
$\sigma_x$ channel that occurs either before or after the application of the gate. For both these channels we 
provide an explicit cloning process---utilizing quantum error mitigation techniques---that allows for 
super-replication.  In Secs.~\ref{sec:BF} and~\ref{sec:AD} we consider families of Pauli-noise and amplitude damping 
channels for which, by Proposition~\ref{prop:necessary_conds_superrep}, super-replication is not possible.  
Nevertheless we study the performance of several cloning processes, among which are an error mitigation process, and 
a measure-and-prepare process, and derive bounds on their cloning performance. 

We start by summarizing the coherent parallel cloning processes for unitary gates~\cite{Dur2015,Chiribella2015}.

\subsection{Coherent processes for unitary gates}
\label{sec: unitary cloning review}
For completeness, we now briefly review the coherent cloning process of~\cite{Dur2015} for the qubit phase gates 
$\eucal{P}_2$~(Eq.~\ref{eq: phase set}). We will denote the $M$ input qubits $\bm S=S_1\dots S_M$, their 
computational basis states $\ket{\bm k}_{\bm S}$, where $\bm k$ is bitstrings with Hamming weight $|\bm k|$. 
First, an isometry $\mathcal{V}[\,\cdot_{\bm S
}] = V (\, \cdot_{\bm S} \otimes \ketbra{\bm 0}_{\bm A}) V^\dag$ 
is applied to the $M$ input qubits, by preparing $N$ auxiliary qubits $\bm A 
=A_1\dots A_N$ in the state $\ket{\bm 0}_{\bm A}$ and performing a global 
unitary operation such that $V: \ket{\bm k}_{\bm S} \ket{\bm 0}_{\bm A} 
\mapsto \ket{\bm k}_{\bm S} \otimes \ket{{\rm Unary}(n_{|\bm k|}) }_{\bm A}$ with 
\begin{align}\nonumber
\nonumber
\ket{{\rm Unary}(n)} := 
\begin{cases}
    \ket{1}^{\otimes n}\ket{0}^{\otimes(N- n )} & 0\leq n \leq N\\
    \ket{0}^{\otimes N} & {\rm otherwise}\, 
\end{cases}
\end{align}
and $n_{|\bm k|} := \left\lfloor |\bm k|-\nicefrac{(M-N)}{2}\right\rfloor$.
Then, one applies the $N$ copies of the gate $\cU_{\theta}$ onto 
the auxiliary qubits $\bm A$ before uncomputing the unitary and tracing out the 
auxiliary qubits with the channel $\widetilde{\mathcal{V}}[\,\cdot_{\bm S \bm A}]= \tr_{\bm A}\left( V^\dag \, 
\cdot_{\bm S \bm A} \, V \right)$.
The channel realized by this process $\proc{P}[\cU_\theta^{\times N}] = \widetilde{\mathcal{V}}\circ 
\big(\id_{\bm S}\otimes (\cU_\theta^{\otimes N})_{\bm A}\big) \circ \mathcal{V}$ imprints the correct phases on 
the subspace spanned by all computational basis states $\ket{\bm k}_{\bm S}$ with $\big|\nicefrac{M}{2} -|\bm 
k|\big|\leq \nicefrac{N}{2}$. Crucially, for large $M=N^{2-\delta}$, the state $\ket{\Phi^+}^{\otimes M}$ is 
supported on this subspace with unit probability, guaranteeing $\channelF^{\eucal{P}_2}_{\rm CJ}\to 1$. In 
contrast, the phases outside of this subspace are in general wrong, which is easy to detect by probing the 
channel $\proc{P}[\cU_\theta^{\times N}]$ with a state outside of the subspace and consistent with 
$\channelF^{\eucal{P}_2}_{\diamond}\to 0$ implied by Corollary \ref{cor:unitaries_diamond}. 

The coherent cloning process of~\cite{Chiribella2015} is based on a similar idea. In the case of qubits $\eucal{U}_2$ (spin-$\nicefrac{1}{2}$ systems), it correctly imprints the 
action of the unknown gates in all subspaces with total spin less than $\nicefrac{N}{2}$, and can be realized efficiently with the help of the Schur 
transform circuit~\cite{bacon2006efficient}. Again, the same argument about $\channelF^{\eucal{U}_d}_{\rm CJ}\to 1$ and 
$\channelF^{\eucal{U}_d}_{\diamond}\to 0$ can be made for $M=N^{2-\delta}.$

\subsection{\label{sec:noisy_unitaries} Noisy phase gate channels}
We now consider a specific but highly relevant case of noisy-phase-gate qubit channels. Specifically we shall 
consider the following two families of channels  
\newcommand{\APset}{\eucal{A}}
\newcommand{\BPset}{\eucal{B}}

\begin{align}\nonumber
    \APset &:= \{\cA_\theta = \mathcal{X}_p \circ\cU_\theta | \theta \in [0,2\pi) \}\\
    \BPset &:= \{\cB_\theta = \cU_\theta \circ \mathcal{X}_p| \theta \in [0,2\pi) \}
    \label{eq: C tilde mnoisy}
\end{align}
where $\cU_\theta(\cdot)=e^{\ii\theta\sigma_z}(\cdot)e^{\ii\theta\sigma_z}$ is the unitary phase channel and 
\be\label{eq: bit-flip}
\mathcal{X}_p[\,\cdot\,] = (1-p) \cdot \,+ \, p \, \sigma_x \cdot \sigma_x,
\ee
is a bit-flip channel of known strength $0\leq p\leq 1$.  Notice that for the family of channels $\APset$ 
the bit-flip noise occurs {\it after} the application of the phase gate, whilst for $\BPset$ the noise occurs 
{\it before} the phase gate. 
\begin{figure}[t!]
    \centering
      \includegraphics[width=1\linewidth]{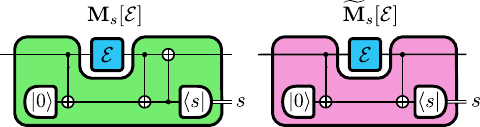}
    \caption{The green (purple) comb defines the error-mitigation (error-detection) process $
    \{\proc{M}_s\}_{s=0}^1$ ($\{ \widetilde{\proc{M}}_s\}_{s=0}^1$). Here, all systems are qubits, the box 
    "$\ket{0}$" prepares the corresponding state, the box "$\bra{s}$" is a measurement in the computational 
    basis, and the representation of the CNOT gates is standard. }
    \label{fig:M}
\end{figure}

A key ingredient in our cloning process for both these families of channels is the error-mitigation process 
$\{\proc{M}_s\}_{s=0}^1$, depicted in green in Fig.~\ref{fig:M}. It is a process that takes a qubit channel $\cE$ 
and returns a quantum instrument $\{\proc{M}_s[\cE]\}$, i.e., two CP maps $\proc{M}_0[\cE]$ and $\proc{M}_1[\cE]$ 
labeled by the classical outcome $s$. It is not difficult to see (App.~\ref{app: EMproc}), that the outcome $s$ 
detects bit flips and hence the CP maps $\proc{M}_s[\cA_\theta]$ (respectively $\proc{M}_s[\cB_\theta]$) have a 
single Kraus operator $K_s$ (respectively $L_s$) given by 
\begin{align}\label{eq:error_mitigation}\nonumber
    K_0 &=L_0 = \sqrt {1-p} \, U_\theta  \\  
     K_1 &= \sqrt{p}\, U_{\theta} \\ \nonumber
    L_1 &= \sqrt{p}\, U_{-\theta}. 
\end{align}
We now separately discuss the sets $\APset$ and $\BPset$.

\subsubsection{Bit-flip noise after the phase gate}
\label{sec: BF after PG}
For the family of channels $\APset$ it is trivial to see from \eqnref{eq:error_mitigation} that the error 
mitigation process corrects the bit flips. After discarding the output $s$ we find
\be
\proc{M}[\cA_\theta]:= \proc{M}_0[\cA_\theta]+\proc{M}_1[\cA_\theta] = \cU_\theta.
\ee
Upon recovering the phase gate $\cU_\theta$ we can now use the coherent process of~\cite{Dur2015} 
(Sec.~\ref{sec: unitary cloning review}), or the measure-and-prepare process 
(Sec.~\ref{sec:superrepphase})), to replicate the unitary gate and append it with the noise map $\mathcal{X}_p$ 
on all of the $M$ output systems. This defines the process 
\be
\proc{P}'[\cA_\theta^{\times N}] := \mathcal{X}_p^{\otimes M}\!\circ \proc{P}[\proc{M}[\cA_\theta]^{\times N}]
\ee
that realizes super-replication of the set $\APset$. The coherent version is illustrated in 
Fig.~\ref{fig: 3processes}a.

\begin{figure*}
    \centering
    \includegraphics[width=0.9\linewidth]{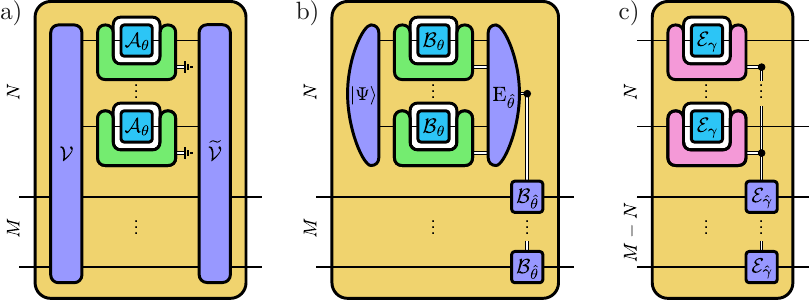}
\caption{ The cloning processes for the following sets of non-unitary qubit channels. {\bf (a)} All phase gates 
followed by fixed bit-flip noise ($\eucal{A}$ in Sec.~\ref{sec: BF after PG}). {\bf (b)} All phase gates preceded 
by fixed bit-flip noise ($\eucal{B}$ in Sec.~\ref{sec: BF before PG}). {\bf (c)} The amplitude-damping channels 
($\eucal{D}$ in Sec.~\ref{sec:coherentprocess}). The processes {\bf(a)} and {\bf(c)} are coherent, while {\bf(b)} 
is measure-and-prepare. The green (purple) comb represent the error-mitigation (error-detection) process 
$\{{\proc{M}}_s\}_{s=0}^1$ ($\{\widetilde{\proc{M}}_s\}_{s=0}^1$) of Fig.~\ref{fig:M}.}
    \label{fig: 3processes}
\end{figure*}

\subsubsection{Bit-flip noise before the phase gate}
\label{sec: BF before PG}
Here the situation is more involved, since depending on the value of $s$ one finds
\be
\proc{M}_0[\cB_\theta] = (1-p)\,  \cU_{\theta} \quad \text{or }\quad \proc{M}_1[\cB_\theta] = p \, 
\cU_{-\theta}.
\ee
We will restrict our attention to measure-and-prepare cloning processes. For a bit string 
$\bm s = (s_1\dots s_N)$ with Hamming weight $|\bm s|= \sum_{i=1}^N s_i$ consider the CP map
\be
\overline{\proc{M}}_{\bm s}[\cB^{\times N}] := \bigotimes_{i=1}^N \proc{M}_{s_i}[\cB_\theta] = p^{|\bm s|}
(1-p)^{N-|\bm s|}\bigotimes_{i=1}^N \cU_{(-1)^{s_i}\theta}.
\ee
Let us introduce the short-hand ${\rm U}_{\bm s} :=\bigotimes_{i=1}^N U_{(-1)^{s_i}\theta}$ and restrict our  
attention to the action of the map on pure $N$-qubit states, $\Psi$, in the symmetric subspace. We have that
\be
{\rm U}_{\bm s}  \ket{\Psi} = 
\begin{cases}
    \bigotimes_{i=1}^{n} U_\theta \ket{\Psi} & |\bm s| \leq \nicefrac{N}{2}\\
    \bigotimes_{i=1}^{n} U_{-\theta} \ket{\Psi} & |\bm s| > \nicefrac{N}{2}
\end{cases},
\ee
with $n=N-2|\bm s|\, \in \{0,\dots,N\}$. Thus, in the symmetric subspace ${\rm U}_{\bm s}$ acts as $n$ copies 
of $U_{\pm \theta}$, where $n$ is known to us via the error-mitigation process.

One could now utilize the estimation strategy described in Sec.~\ref{sec:superrepphase} to optimally estimate the 
value of $\theta$.  Note however that in our case the state of \eqnref{eq: sine state} that maximizes estimation 
precision will depend on $n$ which is not known to us ahead of time. Nevertheless, in the limit $N\to \infty$ the 
binomial distribution concentrates around the mean $\frac{n}{N}\to (1-2p)$, 
which implies that in this limit we can safely set $n=(1-2p)N$ (recall that both $N$ and $p$ are known). Since 
$F_{\rm CJ}(\cB_{\hat{\theta}}, \cB_\theta)= F_{\rm CJ}(\cU_{\hat{\theta}}, \cU_\theta)$ the worst case alignment 
fidelity is given by 
\be
\int \dd \hat\theta \, 
{\rm Pr}^{(n)}(\hat \theta|\theta) \, \fcj(\cU_{\hat \theta}, \cU_\theta)\geq 1-\frac{\pi^2}{2 (1-2p)^2 N^2}\,, 
\ee
assuming that $p\neq1/2$.  By virtue of \eqnref{eq:Jensen} this estimation process can be used to construct an 
$N$ to $M$ measure-and-prepare cloning process that satisfies
\be
\mathsf{F}^{\BPset|\rm M\&P}_{\rm CJ}\geq \left(1- \frac{\pi^2}{2 (1-2p)^2 N^2}\right)^M
\ee
in the large $N$ limit. It is easy to see that this bound is synonymous to super-replication, except for the 
special the case $p=\frac{1}{2}.$ The process is illustrated in Fig.~\ref{fig: 3processes}b.

\subsection{\label{sec:BF}Pauli-noise channels}

\newcommand{\BFset}{\eucal{X}}

Next, let us consider the set of Pauli-noise channels 
\be\label{def: Pauli noise}
 \eucal{P}=\Big\{\cN_{\bm p}[\cdot] = \sum_{j=0}^3 p_j \, \sigma_j \cdot \sigma_i \Big | p_j\geq 0, \sum_j p_j =1\Big\},
\ee
whose \ChJa states are given by 
$\mathtt{CJ}[\cN_{\bm p}]= \sum_{j=0}^3 p_j \,\ketbra{\Phi_i} $ with 
$\ket{\Phi_j}=(\id\otimes \sigma_j)\ket{\Phi^+}$. The Pauli-noise channels are a specific example of ``classical'' 
noise channels, hence by Corollary~\ref{cor:classical noise} they cannot be replicated 
beyond a linear rate. 

The \ChJa fidelity  between two Pauli noise channels reads
\be
F_{\rm CJ}(\cN_{\bm p},\cN_{\bm q})= \sum_{j=0}^3 \sqrt{p_j \,q_j}.
\label{eq:FCJ_bitflip}
\ee
As $\mathtt{CJ}[\cN_{\bm p}]$ is a mixture of fixed states, it is not difficult to see that the optimal dummy 
channel in Eq.~\eqref{eq:F dummy} is $\cE_{\rm dum}=\cN_{\nicefrac{\bm1}{\bm 4}}=\cT_{\nicefrac{\id}{2}}$, leading to the dummy cloning fidelity of $\mathsf{F}_{\rm CJ}^{\eucal{P}|\rm dum} =\left(\nicefrac{1}{2}\right)^{M-N}$, tight with the expression for the full set of channels in Eq.~\eqref{eq: dummy bound}.

Before discussing better replication processes, consider the following proposition 
guaranteeing that optimal cloning of Pauli-noise channels is realized by a measure-and-prepare process.

\begin{proposition}\label{prop: pauli no adv}
For any subset of Pauli-noise channels $\eucal{P}$ in Eq.~\eqref{def: Pauli noise} coherent  processes bring no advantage over measure-and-prepare processes (for any task).
\end{proposition}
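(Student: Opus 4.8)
The plan is to exploit the defining structural feature of Pauli-noise channels: every channel in $\eucal{P}$ is a random-unitary channel built from the \emph{same, fixed} ensemble of Pauli unitaries, with only the weights $\bm p$ varying. Writing $\mathcal{U}_j[\,\cdot\,]:=\sigma_j\,\cdot\,\sigma_j$ we have $\cN_{\bm p}=\sum_{j=0}^{3} p_j\,\mathcal{U}_j$. Since any process is multilinear in its input slots (this holds irrespective of the causal structure, as the output Choi operator is a link product with the inputs' Choi operators, which is linear in each separately), for an arbitrary coherent process $\proc{P}$ acting on $N$ input slots this yields
\begin{equation}
\proc{P}\big[\cN_{\bm p}^{\times N}\big]=\sum_{j_1,\dots,j_N=0}^{3}\Big(\prod_{i=1}^{N}p_{j_i}\Big)\,\proc{P}\big[\mathcal{U}_{j_1}\times\cdots\times\mathcal{U}_{j_N}\big].
\end{equation}
The key point is that each channel $\proc{P}[\mathcal{U}_{j_1}\times\cdots\times\mathcal{U}_{j_N}]$ on the right is a \emph{fixed} CPTP map (a process maps CPTP inputs to a CPTP output), depending only on $\proc{P}$ and on the known Pauli labels, not on $\bm p$.

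Next I would realise the weights $\prod_i p_{j_i}$ by an explicit measurement process. Probe each of the $N$ slots in parallel: prepare an ancilla in $\ket{\Phi^+}$, pass one half through the slot, and measure the resulting pair in the Bell basis $\{\ketbra{\Phi_j}\}$. Because $\mathtt{CJ}[\cN_{\bm p}]=\sum_j p_j\ketbra{\Phi_j}$, slot $i$ returns outcome $j_i$ with probability $p_{j_i}$, so the joint outcome $\ell=(j_1,\dots,j_N)$ occurs with probability $\proc{E}_\ell[\cN_{\bm p}^{\times N}]=\prod_i p_{j_i}$; this defines a legitimate measurement process $\{\proc{E}_\ell\}$ (a state-producing process followed by a product POVM, in the sense of \secref{sec:Notation}). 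Defining the measure-and-prepare process $\proc{P}_{\rm M\&P}$ that applies the fixed channel $\cP_\ell:=\proc{P}[\mathcal{U}_{j_1}\times\cdots\times\mathcal{U}_{j_N}]$ upon outcome $\ell=(j_1,\dots,j_N)$, we obtain $\proc{P}_{\rm M\&P}[\cN_{\bm p}^{\times N}]=\sum_\ell\big(\prod_i p_{j_i}\big)\cP_\ell=\proc{P}[\cN_{\bm p}^{\times N}]$ for every $\bm p$. Hence $\proc{P}_{\rm M\&P}$ reproduces the input--output behaviour of $\proc{P}$ on the entire family (and, if desired, additionally outputs the classical record $\ell$, which is itself a sample from the Pauli ensemble), so it matches $\proc{P}$ on any figure of merit --- i.e.\ on any task. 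Since the argument only uses that every channel of the family has the form $\cN_{\bm p}$, it applies verbatim to any subset $\eucal{P}$.

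I do not expect a genuine obstacle; the only points requiring a little care are bookkeeping: checking that the parallel Bell-measurement scheme qualifies as a measurement process (it does, being a state-producing process followed by a product POVM) and that the post-measurement ``prepare'' channels are truly $\bm p$-independent (they are, because the Pauli unitaries forming the ensemble are fixed and known, so only the process $\proc{P}$ and the outcome $\ell$ enter). It is worth remarking in the proof that the same reasoning shows, more generally, that \emph{any} family of random-unitary channels drawn from a common, channel-independent unitary ensemble admits no coherent advantage, with Pauli-noise channels being the qubit special case.
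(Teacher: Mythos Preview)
Your proof is correct and essentially the same as the paper's: both probe each slot with $\ket{\Phi^+}$, measure in the Bell basis to extract the Pauli label $j$ with probability $p_j$, and then implement the corresponding fixed channel. The only cosmetic difference is that the paper packages the measurement-plus-preparation as a single-slot measure-and-prepare process $\proc{N}$ satisfying $\proc{N}[\cN_{\bm p}]=\cN_{\bm p}$ and then composes $\proc{P}$ with $N$ copies of $\proc{N}$, whereas you first expand $\proc{P}[\cN_{\bm p}^{\times N}]$ by multilinearity and then build the measure-and-prepare process directly---the resulting process is identical in both cases.
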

\begin{proof} Let $\{\proc{N}_j\}_{j=0}^3$ be the following measure-and-prepare process using a single qubit channel
\be\label{eq: proc N}
\proc{N}_j[\cE] = \bra{\Phi_j} ({\rm id}\otimes\cE)[\Phi^+]\ket{\Phi_j} \, \mathcal{P}_j,
\ee
with the unitary maps $\mathcal{P}_j[\cdot] := \sigma_j \cdot \sigma_j$
It consists of using the input channel to prepare the  state $({\rm id}\otimes\cE)[\Phi^+]$, measuring this state in the Bell basis, and implementing the unitary 
channel $\mathcal{P}_j$. When combined with a Pauli-noise channel the process gives 
\be
\proc{N}_j[\cN_{\bm p}][\cdot] = p_j \, \sigma_j \cdot \sigma_j.
\ee
Hence, upon discarding the classical output the process reproduces the input channel 
$\proc{N}[\cN_{\bm p}]:=\sum_{j} \proc{N}_j[\cN_{\bm p}] = \cN_{\bm p}$. 

Now, let $\proc{P}$ be any process on multiple qubit channels. Combing it with (potentially different) Pauli noise channels one finds 
\begin{align}
    \proc{P}[\bigtimes_k \cN_{\bm p^{(k)}}] =  \proc{P}\big[\bigtimes_k \proc{N}[\cN_{\bm p^{(k)}}]\big] := \proc{P}'[\bigtimes_k \cN_{\bm p^{(k)}}],
\end{align}
where the process $\proc{P}'$, obtained by combining $\proc{P}$ with copies of $\proc{N}$, is  measure-and-prepare. Hence, any possible process on Pauli-noise channels, can be realized in this fashion. The result directly generalizes to any set of channels left invariant by a fixed measure-and-prepare process.
\end{proof}

A convenient property of Pauli noise is that all Kraus operators 
$K_j(\bm p)=\sqrt{p_j}\,\sigma_j$ in Eq.~\eqref{def: Pauli noise} are proportional to 
identity when squared. Hence the probability of the corresponding events---labeled by $j$---is independent of the state on which the channel acts. It follows that the best a measurement process can do with $N$ Pauli-noise channels is to perfectly distinguish all the events labeled by the string $\bm j = (j_1,\dots, j_N)$. In other words, the optimal measurement process is $\{\proc{E}_{\bm j}\}$ such that
\be
\proc{E}_{\bm j}[\cN_{\bm p}^{\times N}]= 
\prod_{k=0}^3 (p_k)^{t_k(\bm j)}  ,
\ee
where $t_k(\bm j)$ is the number of values $j_i=k$ in the string $\bm j$.
This measurement process can be realized via applying  $
\{\proc{N}_j\}$ in Eq.\eqref{eq: proc N} on each Pauli-noise channel in parallel.

Finding the optimal channel estimator $\widehat{\cN}^{(M)}_{\bm j}$ in \eqnref{eq:meawsure&prepare} 
is more challenging. Instead, we now focus on the specific construction
\begin{align}
\proc{P}[\cN_{\bm p}^{\times N}] &= \sum_{\bm j} \proc{E}_{\bm j}[\cN_{\bm p}^{\times N}]\, \widehat{\cN}^{(M)}_{\bm j}  \qquad \text{with}\\
\widehat{\cN}^{(M)}_{\bm j} &:= \left(\bigotimes_{i=1}^N \cP_{j_i}\right)\otimes \cN_{\hat{\bm p}(\bm j)}^{\otimes (M-N)},    
\end{align}
where $\hat {\bm p}(\bm j)$ is an estimator of the parameter \bm $p$, which only depends on the totals $\bm t :=(t_0(\bm j), \dots,t_3(\bm j))$. The idea of this strategy is as follows: upon discarding the classical register $\bm j$ it 
implements the perfect copies of the channel $\cN_{\bm p} = \proc{N}[\cN_{\bm p}]$ on 
the first $N$ systems, while at the same time it estimates the value $\bm p$ and 
implements the estimated channels $\cN_{\hat{\bm  p}}$ on the remaining $M-N$ systems. 

Using the fact that the four states $\mathtt{CJ}[\mathcal{P}_{j}]$ are pure and orthogonal, the fidelity of $\proc{P}[\cN_{\bm p}^{\times N}]$ with $M$ ideal copies $\cN_{\bm p}^{\otimes M}$ of the channel is found to be (see \appref{app: bitflip fid})
\begin{align}\label{eq: opt F coh}
F_{\rm CJ}&(\proc{P}[\cN_{\bm p}^{\times N}], \cN_{\bm p}^{\otimes M})\\
&=\sum_{\bm t } {\rm Pr}_{\rm Mult}(\bm t|N,\bm p) ( F_{\rm CJ}\left( \cN_{\hat {\bm p}(\bm t)} ,\cN_{\bm p}\right))^{M-N}\nonumber,
\end{align}
where ${\rm Pr}_{\rm Mult}(\bm t|N,\bm p)= \binom{N}{t_0,t_1,t_2, t_3} \prod_{k=0}^3 (p_k)^{t_k}$
is the multinational distribution.

In \appref{app: BF} we lower bound the value of the coherent cloning fidelity in 
\eqnref{eq: opt F coh} for the specific choice of the (unbiased) estimator 
$\hat{p}_j(\bm t)=\frac{t_j}{N}$. First, using the inequality $\mathds{E}[\sqrt{X}] \geq 
\sqrt{pN}\left(1-\frac{1-p}{2 p N}\right)$ for binomial random variables 
$X \sim {\rm Bin}(N,p)$ we show that  
${\rm F}_{\rm CJ}^{\eucal{P}} \geq \left(1-\nicefrac{3}{2 N}\right)^{M-N}$.
Second, for a restricted family of channels $\eucal{P}=\{ \mathcal{N}_{\bm p} |p_j \geq \epsilon, \sum_j p_j =1\}$ with some $\epsilon>0$, we also compute the asymptotic value of the optimal worst-case alignment fidelity, leading to the following tighter upper bound on the cloning distance
\be
  \mathsf{D}_{\rm CJ}^{\eucal{P}} \leq \arccos(\exp(-\nicefrac{3}{8}\, \lambda))\, .
  \label{eq:bit_flip_angle}
\ee
Conversely, the lower bound $\mathsf{D}_{\rm CJ}^{\eucal{P}} \geq A\left(\frac{1}{1+\lambda}\right)$ follows form Eq.~\eqref{eq: pauli noise}. Both bounds are depicted in Fig.~\ref{fig:A func} as functions of $z=\frac{1}{1+\lambda}$.  Ultimately, Eq.~\eqref{eq:bit_flip_angle} demonstrates that the constructed cloning process has a vanishing error in the limit $\lambda\to0$ (unlike any estimate-and-prepare process) and a constant assymptotic error at finite $\lambda$ (unlike the dummy process).

When restricting to bit-flip channels $\eucal{X}=\{\cX_p| p\in[\epsilon,1-\epsilon]\}$ in Eq.~\eqref{eq: bit-flip}, i.e. setting $p_2=p_3=0$, the same calculation gives a higher asymptotic fidelity $\mathsf{D}_{\rm CJ}^{\eucal{P}} \leq \arccos(\exp(-\nicefrac{\lambda}{8}))$. Finally in the minimal nontrivial case $1\tightto2$, we then found that the measure-and-prepare protocol with $\widehat{\cX}_0^{(2)}= \cX_{q}^{\otimes 2}$ and $\widehat{\cX}_1^{(2)}= \cX_{1-q}^{\otimes 2}$ for $q\approx 0.0778$ achieves a fidelity $\rm{F}_{\rm CJ}^{\eucal{X}}(1,2)\approx 0.922$, matching the best value found by the SDP in Table~\ref{tab:num}.
\subsection{\label{sec:AD} Amplitude-damping channels}

\newcommand{\adp}{{\gamma}}
\newcommand{\ada}{{\zeta}}
\newcommand{\ADset}{\eucal{D}}
As a last example,  we consider the set
\be
\ADset :=\{\cE_{\adp}| \adp\in[0,1]\}
\ee
of qubit amplitude-damping (AD) channels
\begin{equation}\label{eq: AD Kraus}
\begin{split}
\cE_\adp[\,\cdot\,] &:= K_0(\adp) \cdot K_0^\dag(\adp) +K_1(\adp) \cdot K_1^\dag(\adp),\quad \text{with}\\
K_0(\adp) &= 
\left(\begin{array}{cc}1 & 0\\
0& \sqrt{1-\gamma}
\end{array}\right),\quad
K_1(\adp) = \left(\begin{array}{cc} 0& \sqrt{\gamma}\\
0& 0
\end{array}\right)
\end{split}
\end{equation}
in the computational basis. The Choi-Jamio\l kowski fidelity between any two AD channels 
is 
\begin{align}
F_{\rm CJ}(\cE_\adp,\cE_{\adp'}) 
\!=\! \frac{1}{2}\left(1+\sqrt{\gamma \gamma'}+\sqrt{(1-\gamma)(1- \gamma')}\right).
\label{eq: F AD gam}
\end{align}

For the AD channels the optimization of the dummy process 
in \eqnref{eq:F dummy} is done explicitly in \appref{app: AD dummy}, the optimal  dummy channel is the AD channel $\cE_{\rm dum}=\cE_{\adp=\nicefrac{1}{2}}$ 
leading to the following cloning fidelity
$\mathsf{F}_{\rm CJ}^{\ADset|\rm dum} =\left(\frac{2+\sqrt{2}}{4} \right)^{M-N} 
\!\!\!\approx 0.85^{M-N}.$

In what follows we will first show that super-linear replication  of  AD 
channels is impossible (Sec.~\ref{sec:nosuoperrepAD}), and then study two 
different cloning processes: a measure-and-prepare process 
(Sec.~\ref{sec:estimate&prepare}), and a coherent process 
(Sec.~\ref{sec:coherentprocess}).  We then compare the two (Sec.~\ref{ad: compare}). 

\subsubsection{\label{sec:nosuoperrepAD}AD channels can not be super-replicated}

With the help of Proposition~\ref{prop:necessary_conds_superrep} it is immediate 
to see that AD channels can not be super-replicated. Indeed, with the 
derivatives of the natural Kraus operators
\be\label{eq:AD_channel}
\dot K_0^\dag(\adp)= -\frac{1}{2 \sqrt{1-\gamma }} \ketbra{1}, \quad \dot 
K_1^\dag(\ada)=  \frac{1}{2 \sqrt{\gamma }}\ketbra{1}{0}
\ee
we find that  $\beta(\adp) = \dot K_0^\dag(\adp)  K_0(\adp) + \dot 
K_1^\dag(\adp)  K_1(\adp)  =0$ which rules out the possibility of super-linear 
replication.
In addition, we have 
\be\label{eq: QFI UB AD}
\frac{\mathsf{QFI}^{(N)}(\cE_\adp)}{N}\leq 4 \|\alpha(\adp)\|= \frac{1}{\gamma(1-\gamma)}.
\ee 
For the QFI of the \ChJa states combining Eqs.~\eqref{eq:QFI} and \eqref{eq: F AD gam} we find 
${\rm QFI}(\mathtt{CJ}[\cE_\adp])= \frac{1}{2\gamma(1-\gamma)}$.

Hence, for a linear rate $M=(1+\lambda)N$ the optimal cloning distance is bounded by 
\be\label{eq: D ad nogo}
 \mathsf{D}_{\rm CJ}^{\ADset} \geq  A\left(\frac{2}{1+\lambda}\right).
\ee
This bound resembles the one obtained for Pauli noise channels in \eqnref{eq: 
pauli noise}, but  with an additional factor of $2$. This comes from the fact 
that probing AD channels in parallel through their Choi-Jamio\l kowski states is 
in fact suboptimal by a factor of two from the QFI perspective (it is not 
difficult to see\footnote{Verify that $F(\cE_\adp[\ketbra{1}],\cE_{\adp'}
[\ketbra{1}]) =\sqrt{\gamma \gamma'}+\sqrt{(1-\gamma)(1- \gamma')}$ and use 
Eq.~\eqref{eq:QFI}.} that probing the AD channels with the state 
$\ket{1}^{\otimes N}$ saturates the upper-bound of \eqnref{eq: QFI UB AD}, see 
also~\cite{Fujiwara04}).
As a consequence the bound \eqnref{eq: D ad nogo} remains trivial for all $M\leq 
2N$.

\subsubsection{Estimate-and-prepare cloning of AD channels}
\label{sec:estimate&prepare}

A simple estimation process for the AD channels, maximizing the Fisher information for all $\gamma$, consists of 
probing each copy $\cE_\adp$ with the state $\ket{1}$, and measuring the output system in the computation basis $\{\ket{s\oplus 1}\}_{s=0,1}$, such that the output $s$ corresponds to the application of the Kraus operator $K_s$, and $s=1$ occurs with probability $\gamma$.

For $N$ copies of the channel, the sum of the outcomes $t:=\abs{\bf s}=\sum_{i=1}^{N} s_i$ follows the binomial distribution $ {\rm Pr}_{\rm Bin}(t|N,\adp)$. Now consider estimate-and-prepare cloning processes, that prepare $\{\hat \cE_{\hat \gamma (t)}^{\otimes M}\}_{t =0}^N$ for all possible values $t$. This family of processes achieves the fidelity
\begin{align}\nonumber
     {\rm F}_{\rm CJ}^{\ADset|\rm E\&P} \!\!= \! \min_{\adp \in [0,1]}  F_{\rm CJ}\left( \sum_{t=0}^N  
{\rm Pr}_{\rm Bin}(t|N,\adp) \cE_{\hat \adp(t)}^{\otimes M},\cE_\adp^{\otimes M}\right)
 \end{align}
 with the binomial distribution ${\rm Pr}_{\rm Bin}(t|N,p) := \binom{N}{t} p^t (1-p)^{N-t}$,
 for any choice of the estimator $\hat \gamma(t)$. As the fidelity is nonlinear this expression is very challenging to analyze, even numerically. Instead in Section~\ref{ad: compare} we analyze the lower bound 
 \begin{align}\label{eq: ad mp}
     {\rm F}_{\rm CJ}^{\ADset|\rm E\&P} \geq  \! \min_{\adp \in [0,1]}  \sum_{t=0}^N  
{\rm Pr}_{\rm Bin}(t|N,\adp)  F_{\rm CJ}\left( \cE_{\hat \adp(t)},\cE_\adp\right)^M,
 \end{align}
obtained by concavity of $F_{\rm CJ}$. Before doing so, let us introduce a closely related coherent process.

\subsubsection{\label{sec:coherentprocess}Coherent cloning of AD channels}

Just like for the bit-flip noise the two branches of the AD channel, i.e. the CP maps
\be
\mathcal{K}_{s|\gamma}[\,\cdot\,] := K_{s}(\gamma)\cdot K_s^{\dag}(\gamma) \quad \text{with} \quad s=0,1,
\ee
have a different parity and can be detected with the process $\{\widetilde {\proc{M}}_s\}$ with  $\widetilde {\proc{M}}_0 = \proc{M}_0$ and 
$\widetilde {\proc{M}}_1 = \cX_1 \circ \proc{M}_1$, illustrated in Fig.~\ref{fig:M}. Indeed, one can verify (see \appref{app: EMproc}) that when used on $\cE_\adp$ the process returns $\{\widetilde{\proc{M}}_s[\cE_\gamma]=\mathcal{K}_{s|\gamma}\}$.
This is a quantum instrument allowing one to {\it execute} the unknown channel $\cE_\gamma =\widetilde{\proc{M}}_0[\cE_\gamma]+\widetilde{\proc{M}}_1[\cE_\gamma]$ (upon discarding $s$) while at the same time {\it estimating} the parameter $\gamma$ through the classical output $s$. With this in mind, we now construct a coherent cloning process, illustrated in Fig.~\ref{fig:M}c.

The cloning process $\proc{P}$ first applies the instruments $\{\widetilde{ \proc{M}}_{s_i}[\cE_\adp]\}$ on the first $N$ input qubits.  Depending on the  value $t=\abs{\bm s}=\sum_{i=1}^{N} s_i$ observed, it applies the fixed AD channel $\cE_{\hat{\adp}(t)}$ on the remaining $M-N$ qubits, and   then discards the classical register $\bf s$. 
It follows that the result of this process is the $M$-qubit CPTP map 
\be
\proc{P}[\cE_\adp^{\times N}] =\! \sum_{s_1,\dots,s_N}\!\left( \mathcal{K}_{s_1|\gamma} \otimes\dots\otimes \mathcal{K}_{s_N|\gamma} \otimes (\cE_{\hat{\adp}(\bf s)})^{\otimes (M-N)} \right)
\ee
and it remains to find the optimal estimator $\hat{\gamma}(t)$ for each $N$ and $M$.

To do so we first compute the fidelity achieved by this process for all values of the parameter $\gamma$. Observe that here the maps $\mathcal{K}_{s|\gamma}$  are not applied on $\ket{1}$ but on half of the maximally entangled state $\ket{\Phi^+}=\frac{1}{\sqrt{2}}(\ket{00}+\ket{11})$.  The output $s$ thus takes the value $s= 0$ with probably $\|\id \otimes K_0(\adp)\ket{\Phi^+}\|^2= 1-\frac{\gamma}{2}$ and the value $s=1$ with probability $\| \id \otimes K_1(\adp) \ket{\Phi^+}\|^2= \frac{\gamma}{2}$.
Hence the sum $t=\|\bm s|=\sum_i s_i$ follows the binomial distribution ${\rm Pr}_{\rm Bin}(t|N,\nicefrac{\adp}{2})$.

This allows one (see \appref{app: ad coherent}) to obtain the following expression for the fidelity $ F_{\rm CJ}(\proc{P}[\cE_\adp^{\times N}], \cE_\adp^{\otimes M}) 
 = \sum_{t=0}^N 
{\rm Pr}_{\rm Bin}(t|N,\nicefrac{\adp}{2}) F_{\rm CJ}(\cE_{\hat \gamma(t)},\cE_\gamma)^{M-N}$
and shows that the coherent process achieves
 \begin{align}\label{eq: ad ep}
     {\rm F}_{\rm CJ}^{\ADset|\rm coh} 
     &=  \min_{\gamma \in [0,1]}  \sum_{t=0}^N  
{\rm Pr}_{\rm Bin}(t|N,\nicefrac{\adp}{2}) F_{\rm CJ}(\cE_{\hat \gamma(t)},\cE_\gamma)^{M-N}
\end{align}
for any choice of the estimator $\hat{\gamma}(t)$. We now compare the two processes.

 \subsubsection{Performance of the estimate-and-prepare and coherent processes}
\label{ad: compare}

To evaluate the fidelities ${\rm F}_{\rm CJ}^{\eucal{D}|\bullet}$ acheived by the estimate-and-prepare ($\bullet=\rm E\&P$) and the coherent ($\bullet=\rm coh$) processes, notice that the expressions in Eqs.~(\ref{eq: ad mp},\ref{eq: ad ep}) are very similar. Using Jensen's inequality \eqnref{eq:Jensen} and the expression of the fidelity \eqnref{eq: F AD gam} both are bounded by
\begin{align}
{\rm F}_{\rm CJ}^{\eucal{D}|\bullet}&\geq \min_{\adp \in [0,1]} \left( \sum_{t=0}^N  
{\rm Pr}_{\rm Bin}(t|N,p_\bullet)  F_{\rm CJ}\left( \cE_{\hat \adp(t)},\cE_\adp\right)\right)^{L_\bullet} \nonumber \\
&=\min_{\gamma\in [0,1]} \mathds{E}\left[\frac{ 1+\sqrt{\gamma \, \hat \gamma_\bullet(t)}+\sqrt{(1-\gamma)(1-\hat \gamma_\bullet(t))}}{2}\right]^{L_\bullet},
\end{align}
with a different parameter $p_{\rm E\&P}=\gamma$ and $p_{\rm coh}=\nicefrac{\gamma}{2}$ of the binomial, and a different exponent $L_{\rm M\&P}=M$ and $L_{\rm coh}=M-N$. 

In the large $N$ limit, just as in the case of the Pauli-noise channels, if we restrict the family of AD channels $\eucal{D}=\{ \mathcal{E}_\gamma\,|\, \gamma\in [\epsilon,1-\epsilon]\}$ for some $\epsilon>0$, for the natural choices of estimators  we can compute these expected values to obtain 
\be
{\rm F}_{\rm CJ}^{\ADset|\rm E\& P} \geq  \exp(-\frac{1+\lambda}{16})
\qquad 
{\rm F}_{\rm CJ}^{\ADset|\rm coh} \geq  \exp(-\frac{\lambda}{8})
\label{eq:tighter_dist_AD}
\ee
with $M=(1+\lambda)N$. It is interesting to note that the  bound we obtain for 
the coherent process is only better if $M\leq 2N$. This behavior can be 
understood intuitively---the coherent process is suboptimal from the parameter 
estimation perspective (bad for large $M$) but ideally replicates $N$ channels  
(good for small $M$). In particular, it achieves ${\rm F}_{\rm CJ}^{\ADset|\rm 
coh}=1$ at $N=M$, which, for amplitude-damping channels, we conjecture to be impossible for any measure-and-prepare process.

To study the processes in the moderate $N$ regime we 
optimized the estimators $\hat \gamma(t)$ in Eqs.~(\ref{eq: ad mp},\ref{eq: ad ep}) numerically. To do so we proceed similarly to the SDP approximation of Section \ref{sec:sdp}. We start by replacing the interval $\gamma\in[0,1]$ with a discrete net $\adp\in \{\adp_i \}_{i=1}^H$, and rephrasing the minimization as $H$ feasibility constraints 
\be
\sum_{t=0}^N  
{\rm Pr}_{\rm Bin}(t|N,p_\bullet^{(i)})  F_{\rm CJ}\left( \cE_{\hat \adp(t)},\cE_{\adp_i}\right)^{L_\bullet}\geq x\quad \forall\, \gamma_i,
\ee
with $p^{(i)}_{\rm M\&P}=\gamma_i$ and $p^{(i)}_{\rm coh}=\nicefrac{\gamma_i}{2}$. We then run an empirical constrained optimization of $x$ with respect to $N+1$ real variables $\hat{\adp}(t)\in [0,1]$. The results obtained with {\tt FindMaximum} in {\tt Mathematica} for $H=1000, N=1$ or $5$, and $M\leq 20$ are reported in Fig.~\ref{fig: AD clon}. These numerical results suggest that the coherent process outperforms the measure-and-prepare one when $M$ is close to $N$, i.e. $5\to 6$, $5\to 7$ and the trivial cases $M=N$, but becomes worse when $M$ increases. This comparison is to be taken with a grain of salt, as for the estimate-and-prepare process we only optimized the lower bound \eqnref{eq: ad mp}, so in reality it might compare more favorably with the the coherent one.

For the minimal nontrivial case $1\to 2$ we found the values ${\rm F}_{\rm CJ}^{\ADset|\rm M\&P}(1,2) \approx 0.925$,  ${\rm F}_{\rm CJ}^{\ADset|\rm coh}(1,2) \approx 0.900$ and ${\rm F}_{\rm CJ}^{\ADset|\rm dum}(1,2)\approx 0.854$. Remarkably in Sec.~\ref{sec:numerics} we have seen that the general process found by the SDP approximation achieves a higher value ${\rm F}_{\rm CJ}^{\ADset|\rm SDP}(1,2)\approx 0.932$ (see Table~\ref{tab:num}), outperforming all our hand-crafted processes.

\begin{figure}[t!]
    \centering
    \includegraphics[width=1\linewidth]{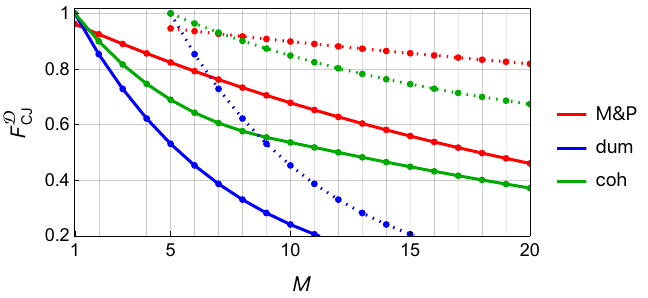}
    \caption{The cloning fidelity ${\rm F}_{\rm CJ}^{\ADset}(N,M)$ for the amplitude-damping channels (Eq.~\ref{eq: AD Kraus}) achieved by the dummy (blue lines), measure-and-prepare (red lines, Sec.~\ref{sec:estimate&prepare}) and the coherent (green lines, Sec.~\ref{sec:coherentprocess}) cloning processes, as function of $M\geq N$. The tfull lines correspond to $N=1$,  and the dotted lines to $N=5$.}
    \label{fig: AD clon}
\end{figure}

\section{\label{sec:concl} Summary and conclusion}
This work extends the notion of cloning and replication beyond quantum states 
and gates to the most general transformations of quantum theory: quantum 
channels.  By demonstrating the equivalence between cloning of states and 
cloning the respective trash-and-replace channels (Proposition \ref{prop:cloning-trash}) we are able to formulate cloning and replication of all quantum 
resources under a unified framework, 
\yg{cast} 
in the language of quantum processes.  
By employing a geometric approach we establish a connection between cloning and the fundamental tasks of binary channel discrimination and channel estimation, which allows us to derive upper bounds on the performance of any cloning and 
replication process (Propositions~\ref{prop:cloning_bound_discrimination} 
and~\ref{prop:metrology_optimal_distance}). Independently of cloning, we 
believe the latter to be of great interest for binary channel discrimination. 

We also establish the necessary conditions for a family of channels to exhibit 
super-replication (Proposition~\ref{prop:necessary_conds_superrep}) and prove 
strong converses on the deterministic replication rates for any continuous 
family of states (Corollary~\ref{cor:state_rep}), quantum gates under the 
diamond distance (Corollary~\ref{cor:unitaries_diamond}), and parametric 
families of ``classical" noise channels (Corollary~\ref{cor:classical noise}).  
For all of the above we provide rigorous upper bounds on the cloning fidelity in 
the limit of large $N$ and $M$. Moreover, while the full set of $d$-dimensional 
unitaries admits super-replication, we show that even infinitesimal noise 
reduces the rate to linear (Corollary~\ref{cor:no robustness}).

Motivated by the converse question, in Section~\ref{sec: clon proc}, we devise 
general techniques to approximate the optimal cloning process for a given set of 
channels. Building on the results 
of~\cite{Milz2024characterising,Skrzypczyk2023} we show in Section~\ref{sec:sdp} 
that the search of the optimal cloning process can be approximated by a semi-
definite program (SDP), and we illustrate this approach with $1\tightto 2$ 
cloning for the bit-flip and amplitude-damping channels. We concluded that, for the amplitude-damping channels, the optimal process found by the SDP outperforms all measure-and-prepare process and also the coherent hand-crafted process. We also study the performance of measure-and-prepare processes (Section~\ref{sec:measureandprep}) which enables us to draw a quantitative connection between the tasks of cloning and Bayesian channel estimation. Somewhat surprisingly, we find that 
super-replication of qubit gates, demonstrated in Refs.~\cite{Dur2015,Chiribella2015} with {\it 
coherent} processes,  can also be realized by measure-and-prepare process Sections 
\ref{sec:superrepsu2} and \ref{sec:superrepphase}), and can 
be viewed as a consequence of the quadratic scaling of precision in the corresponding Bayesian estimation
tasks~\cite{berry2000optimal,Chiribella2005,hayashi2006parallel}.

Finally, in Section~\ref{sec:examples}, we study cloning process for several 
specific families of quantum channels: noisy phase gates---where the noise acts 
either before or after the application of the phase gate---Pauli-noise channels 
and  amplitude-damping channels. We show that for the first two cases  
super-replication is possible, the first instance of a non-unitary channel where this is possible. In contrast, both Pauli-noise and amplitude-damping channels can 
only be replicated at a linear rate. For Pauli-noise channels, we prove that coherent processes offer no advantage over measure-and-prepare processes in {\it any} 
task (Proposition~\ref{prop: pauli no adv}), and discuss the asymptotic performance of such a process. The case of amplitude-damping channels is more subtle, and we focus on  the 
comparisons between two hand-crafted processes. .

Our findings raise several interesting questions for future research. Whilst 
Proposition~\ref{prop:necessary_conds_superrep} establishes necessary 
conditions for super-replication it is worth noting that we do not know whether 
these are also sufficient. In particular, the proposition was derived 
via a construction in the
local neighborhood $\dd \eucal{C}=\{\cE_x| x\in[0,\delta)\}\subset 
\eucal{C}$ of a channel $\cE_x$ and 
by studying the scaling of the (optimal 
$N$-copy) $\mqfi^{(N)}(\cE_x)$. Yet, even when all such subsets $\dd \eucal{C}$ 
can be super-replicated the processes may be different and thus it is not clear if 
super-replication is achievable across the whole set $\eucal{C}$ with a single process. Answering this 
question requires consideration of the global properties of the set. 
Interestingly, a sufficient condition for super-replication is offered by the 
estimate-and-prepare processes discussed in Sec.~\ref{sec:measureandprep}---it is 
possible if the channels $\eucal{C}$ can be {\it estimated} with the worst-case 
error that decays as $N^2$, for an appropriate cost function. Establishing 
closed-form sufficient conditions for such Heisenberg scaling in Bayesian estimation task and for channel super-replication is an interesting open question. 

Another interesting direction for future research, is to rephrase the question 
of replication/cloning at the level of continuous semi-group dynamics. For instance, does there exist a process that, given access to a system evolving under a given parametrized family of Liouvillian dynamics for a total time $t$, simulate its evolution for a time $t'>t$. For Hamiltonian dynamics this can be seen as an instance of the task of transforming Hamiltonian eigenvalues~\cite{odake2025universal}, while nothing seems to be known for the case of open-system dynamics.

\section{Acknowledgements}
We thank Jessica Bavaresco and Marco T\'ulio Quintino for useful discussions. PS acknowledges financial support from the Swiss National Science Foundation NCCR SwissMAP. MS acknowledges support from Ayuda Ramón y Cajal 2021 (RYC2021-032032-I, MICIU/AEI/10.13039/501100011033, ESF+) as well as Project FEDER C-EXP-256-UGR23 Consejería de Universidad, Investigación e Innovación y UE Programa FEDER Andalucía 2021-2027.  N.B.T.K. acknowledges support by the European Space Agency (EISI project 2021-01250-ESA), and by the Spanish MICIN (project PID2022-141283NB-I00) with the support of FEDER funds.

\appendix
\onecolumngrid

\section{Proof of Proposition~\ref{prop:cloning-trash}.}
\label{app:states_chan_proof}

\subsection{A preliminary lemma}

We start by proving the following lemma which is of independent interest from 
cloning.
\begin{figure}
    \centering
    \includegraphics[width=0.85\linewidth]{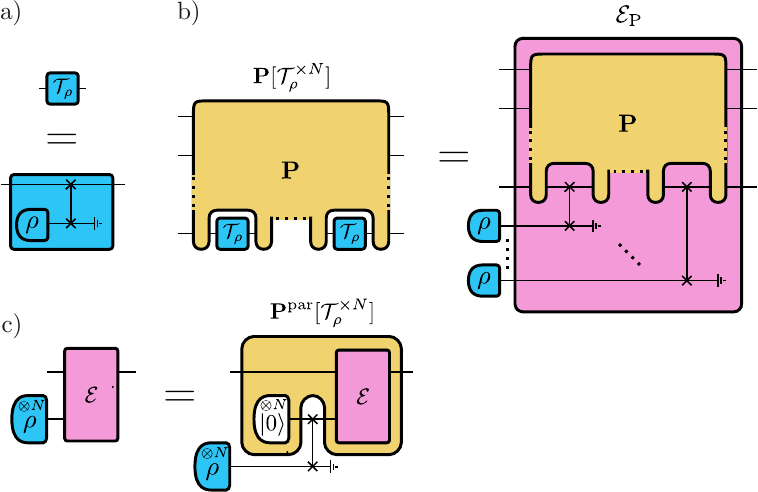}
    \caption{{\bf a)} The trash-and-replace channels $\cT_\rho$ can be realized 
    with a SWAP gate acing on a copy of $\rho$. {\bf b)} Any process $\proc{P}$ 
    (also non-causal) combined with $N$ trash-and-replace channels 
    $\mathcal{T}_\rho$ defines a (fixed) CPTP map $\cE_{\proc P}$ (purple box), 
    acting on the same input systems extended with $N$ copies of $\rho$. 
    {\bf c)} Any such CPTP map $\cE$ can be realized via a \emph{parallel} 
    process $\proc{P}^{\rm par}$, hence upon combining with trash-and-replace 
    channels the hierarchy of processes collapses.}
    \label{fig:lemm5}
\end{figure}

\begin{lemma}\label{lem: process dec}
For any process $\proc{P}$ taking $N$ channels, the CPTP map 
$\proc{P}[\mathcal{T}_\rho^{\otimes N}]$  resulting from using the process with 
$N$ trash-and-replace channels $\mathcal{T}_\rho$ is of the form
\begin{equation}
    \proc{P}[\mathcal{T_\rho}^{\times N}] [\,\cdot\,] = \tr_{\bm S'} \cE_{\rm P}
    [\,\cdot \,\otimes (\rho^{\otimes N})]: \, \cL(\cH^{\otimes M}\tightto 
    \cH^{\otimes M}),\label{eq: lemma 5}
\end{equation}
for some fixed CPTP map $\cE_{\rm P}$ (determined by the process $\proc{ P}$ and 
independent of $\rho$). In addition, one obtains the full set of CPTP maps 
$\cE_{\rm P}$ staring with parallel, sequential or non-causal processes $\bm P$.
\end{lemma}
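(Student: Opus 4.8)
The plan is to prove the two halves of the lemma in turn --- a \emph{structural} inclusion valid for any process, and a \emph{surjectivity} statement realised already by a parallel process --- and then combine them to conclude that the parallel, sequential and non-causally ordered classes all give rise to exactly the same set of maps $\cE_{\rm P}$.

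\textbf{Structural inclusion.} The starting point is that the trash-and-replace channel factorises, $\cT_\rho[\sigma]=(\tr\sigma)\,\rho$: it discards its input and emits a fresh copy of $\rho$ (Fig.~\ref{fig:lemm5}a). Hence its Choi operator is the product $\mathtt{CJ}[\cT_\rho]\propto\id\otimes\rho$, with the identity supported on the slot-input leg and $\rho$ on the slot-output leg. I would write $\proc{P}[\cT_\rho^{\times N}]$ as a link product (the generalised matrix product over the shared systems) of the Choi operator $\mathtt{CJ}[\proc{P}]$ of the process with the $N$ slot Choi operators $\id\otimes\rho$; because every slot-input leg carries an identity, the link product over those legs reduces to an ordinary partial trace. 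It follows that $\mathtt{CJ}[\proc{P}[\cT_\rho^{\times N}]]$ depends on $\proc{P}$ only through the reduced operator $\Omega:=\tr_{\bm P}\mathtt{CJ}[\proc{P}]$ obtained by tracing out the $N$ slot-input legs $\bm P$, and it equals the link product of $\Omega$ with $\rho^{\otimes N}$ placed on the $N$ slot-output legs. It then remains to verify that $\Omega$ is (proportional to) the Choi operator of a CPTP map $\cE_{\rm P}:\cL(\cH^{\otimes M}\otimes\cH^{\otimes N}\tightto\cH^{\otimes M}\otimes\cH_{\bm S'})$: positivity is inherited from $\mathtt{CJ}[\proc{P}]\geq0$, while the normalisation $\tr_{\rm out}\Omega=\id$ is obtained by telescoping the trace conditions satisfied by a process Choi operator --- for causally ordered $\proc{P}$ this is the standard quantum-comb identity, and for general non-causally ordered $\proc{P}$ it follows from the affine characterisation of processes recalled in \cite{Milz2024characterising}. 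Given such $\cE_{\rm P}$, the link product of $\Omega=\mathtt{CJ}[\cE_{\rm P}]$ with $\rho^{\otimes N}$ is exactly $\tr_{\bm S'}\cE_{\rm P}[\,\cdot\otimes\rho^{\otimes N}]$, which is Eq.~\eqref{eq: lemma 5}; note that the argument nowhere used the causal ordering of $\proc{P}$.

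\textbf{Surjectivity by an explicit parallel process.} Conversely, take an arbitrary CPTP map $\cE_{\rm P}:\cL(\cH^{\otimes M}\otimes\cH^{\otimes N}\tightto\cH^{\otimes M}\otimes\cH_{\bm S'})$. I would exhibit the parallel process $\proc{P}^{\rm par}$ of Fig.~\ref{fig:lemm5}c: given the $M$-system input, it prepares $N$ fixed reference ancillas, routes them into the $N$ channel slots, keeps the $M$-system input aside, and finally applies $\tr_{\bm S'}\circ\cE_{\rm P}$ to the $M$-system input together with the $N$ slot outputs. This is manifestly a legitimate parallel process (pre-processing, $N$ parallel queries, post-processing). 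Feeding it $N$ copies of $\cT_\rho$ discards the ancillas and returns $\rho^{\otimes N}$ on the slot outputs, so $\proc{P}^{\rm par}[\cT_\rho^{\times N}]=\tr_{\bm S'}\cE_{\rm P}[\,\cdot\otimes\rho^{\otimes N}]$. Thus the map $\proc{P}\mapsto\cE_{\rm P}$ is already surjective onto all such CPTP maps when restricted to parallel processes; together with the structural inclusion (which shows that even the non-causally ordered class produces nothing outside this set) and the obvious containments parallel $\subseteq$ sequential $\subseteq$ non-causal, all three images coincide and equal the full set of CPTP maps of the stated type --- which is the second assertion of the lemma.

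The only step I expect to demand genuine care is the normalisation $\tr_{\rm out}\Omega=\id$ for \emph{non-causally ordered} processes, where there is no circuit realisation to fall back on and the identity must be extracted directly from the linear (projector-type) constraints defining a valid process; the causally ordered case is a one-line telescoping of the comb normalisation conditions, and positivity of $\Omega$ is immediate.
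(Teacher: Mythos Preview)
Your surjectivity half matches the paper's construction exactly. For the structural inclusion, however, the paper avoids the Choi/link-product machinery entirely via a SWAP trick: since $\cT_\rho[\cdot_S]=\tr_{S'}{\rm SWAP}_{SS'}[\cdot_S\otimes\rho_{S'}]$, and since processes are by definition \emph{completely} CP-preserving (CCPP), feeding each slot of $\proc{P}$ with the trivially extended SWAP channel yields a bona fide CPTP map $\cE_{\proc P}:=\proc{P}[{\rm SWAP}^{\times N}]$ on $\cH^{\otimes M}\otimes\cH^{\otimes N}$ directly from the axioms --- the preparation of $\rho^{\otimes N}$ and the final trace over $\bm S'$ then sit \emph{outside} the process, giving Eq.~\eqref{eq: lemma 5} immediately, with no Choi normalisation to verify and with an argument that is word-for-word identical for parallel, sequential, and non-causal $\proc{P}$. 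Your route via $\Omega=\tr_{\bm P}\mathtt{CJ}[\proc{P}]$ is also correct (the condition $\tr_{\rm out}\Omega\propto\id$ does follow from the projector constraints, by a short telescoping in the causal case and a longer but still mechanical computation in the non-causal one), but the paper's device buys uniformity and brevity precisely by delegating the step you flag as delicate to the CCPP axiom itself.
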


\begin{proof} 
We first show that any process taking trash-and-replace channels must be of the 
form $\proc{P}[\mathcal{T_\rho}^{\times N}] [\cdot_{\bm S}] = \tr_{\bm S'} 
\cE_{\rm P}[\,\cdot_{\bm S} \otimes (\rho^{\otimes N})_{\bm S'}]$
where $\bm S(\bm S')$ contains $M(N)$ copies of the system $S$. First, we note 
that the trash-and-replace channel acting on a system $S$ can be realized with a 
SWAP gate as follows (see Fig.~\ref{fig:lemm5}a)
\be
\cT_\rho[\cdot_S] = \tr_{S'} {\rm SWAP}_{SS'} [\cdot_S\otimes \rho_{S'}].
\ee
Hence,  we can write  
\be
\proc{P}[\mathcal{T_\rho}^{\times N}] [\cdot_{\bm S}] = \tr_{\bm S'} \proc{P}[ 
{\rm SWAP}^{\times N}] [\cdot_{\bm S}\otimes (\rho^{\otimes N})_{\bm S'}],
\ee
which is shown Fig.~\ref{fig:lemm5}b. 
Here the linear map $\cE_{\proc P} := \tr_{\bm S'} \proc{P}
[{\rm SWAP}^{\times N}]$ (the purple box in Fig.~\ref{fig:lemm5}b) must be
CPTP -- by definition a valid process must give a CPTP map when combined with CPTP  maps (also extended). 

Next, we show that staring with \emph{parallel} processes $\proc{P}^{\rm par}$ 
one can obtain \emph{any} CPTP map $\cE: \cL(\cH^{M+N}\tightto \cH^{M})$ on the 
rhs of Eq.~\eqref{eq: lemma 5}. This can be done by taking the target channel 
$\cE$ and defining the parallel process shown in Fig.~\ref{fig:lemm5}c. 
Formally, this process can be defined as
\be
\proc{P}^{\rm par}[\cC^{\times N}] [\,\cdot_{\bm S}\,] := \cE[\,\cdot_{\bm S}\, 
\otimes (\cC[\ketbra{0}]^{\otimes N})_{\bm S'}].
\ee
One easily verifies that, when applied on trash-and-replace channels, the 
process gives $\proc{P}^{\rm par}[\cT_\rho^{\times N}] [\cdot_{\bm S}] = 
\cE[\cdot_{\bm S} \otimes (\rho^{\otimes N})_{\bm S'}].$ Hence, for trash-and-
replace channels the hierarchy of parallel, sequential and non causally ordered 
processes collapses. This 
completes the proof.
\end{proof}

\subsection{The main proof}
\label{app: TaR is states}

Next we use Lemma~\ref{lem: process dec} to prove 
Proposition~\ref{prop:cloning-trash} in the main text.

\begin{proof}
We readily know that $\mathsf{F}^{\eucal{C}}_{\rm CJ}\geq
\mathsf{F}_{\diamond}^\eucal{C}$, since by construction the CJ fidelity for 
two channels is greater or equal than the diamond fidelity.

First, we show that the optimal state cloning machine 
$\cP:L(\mathcal{H}^{\otimes N})\to L(\mathcal{H}^{\otimes M})$ can be used to 
define a process $\proc{P}_{\cP}$, cloning trash-and-replace channels, which 
satisfies $\mathsf{F}^{\eucal{C}}_{\rm CJ}=\mathsf{F}_{\diamond}^\eucal{C} =  
\mathsf{F}^\eucal{S}$. Consider the process which discards all of its $M$ input 
quantum systems,  calls in parallel the $N$ received trash-and-replace channels 
($\mathcal{T}_\rho$) with some fixed input states $\ket{0}$ and collects the $N$ 
outputs ($\rho^{\otimes N}$). The process then applies the optimal cloning map $\cP$ 
on these $N$ outputs to produce the $M$ output systems in the state $\mathcal{P}[\rho^{\otimes N}]$. 
When fed with $N$ copies of individual trash-and-replace channels the process $\proc{P}_{\cP}$ 
we just described realizes the global trash-and-replace channel 
\begin{align}\nonumber
\proc{P}_{\cP}[\mathcal{T}_\rho^{\times N}]=\mathcal{T}_{\mathcal{P}[\rho^{\otimes N}]} : \quad 
L(\cH^{\otimes M}) &\to L(\cH^{\otimes M}) \\
\cdot\, &\mapsto \mathcal{P}[\rho^{\otimes N}]\, .
\end{align}
We now compute the two channel fidelities of the resulting map with respect to $M$ copies of $\mathcal{T}_\rho$. 
For the CJ fidelity we find
\begin{align}\nonumber
    F_{\rm CJ}(\proc{P}_{\cP}[\mathcal{T}_\rho^{\times N}] ,\mathcal{T}_\rho^{\otimes M} ) &= F(({\rm id}\otimes 
    \mathcal{T}_{\mathcal{P}[\rho^{\otimes N}]})[\Phi^+],({\rm id}\otimes\mathcal{T}_\rho^{\otimes M})[\Phi^+] 
    )\\ 
   &= F\left(\nicefrac{\id}{d} \otimes \mathcal{P}[\rho^{\otimes N}],\nicefrac{\id}{d} \otimes \rho^{\otimes 
   M}\right) \\\nonumber
   &= F\left( \mathcal{P}[\rho^{\otimes N}], \rho^{\otimes M}\right)
\end{align}
where we used $F(\rho\otimes\sigma,\rho'\otimes\sigma')=F(\rho,\rho')F_{\nicefrac{1}{2}}(\sigma,\sigma')$. Note 
that in the above calculation we can replace the state $\Phi^+$ with any state of the extended system, hence an 
identical computation gives $ F_{\diamond}(\proc{P}_{\cP}[\mathcal{T}_\rho^{\times N}] ,\mathcal{T}_\rho^{\otimes 
M}) = F\left( \mathcal{P}[\rho^{\otimes N}], \rho^{\otimes M}\right)$. Since by assumption $\cP$ is the optimal 
cloning machine, and for the process  $\proc{ P}_{\cP}$ the equality $F_{\diamond}(\proc{P}_{\cP}
[\mathcal{T}_\rho^{\times N}],\mathcal{T}_\rho^{\otimes M} ) = F\left( \mathcal{P}[\rho^{\otimes N}], 
\rho^{\otimes M}\right)$ holds for all states $\rho$, we conclude that this cloning process gives a worst-case 
diamond cloning fidelity equal to $\mathsf{F}^\eucal{S}$. Hence it must be that  the optimal channel cloning 
fidelity is at least that good
\begin{align}
\label{eq: CJ,d,st ub}
\mathsf{F}^{\eucal{C}}_{\rm CJ}\geq \mathsf{F}_{\diamond}^\eucal{C} \geq  \mathsf{F}_{\nicefrac{1}{2}}^\eucal{S}.
\end{align}

In the second step we will show that $\mathsf{F}^{\eucal{C}}_{\rm CJ}\leq \mathsf{F}_{\nicefrac{1}{2}}^\eucal{S}$ 
collapsing the chain of inequalities.  For any cloning process $\proc P$ let us compute the CJ fidelity between 
${\proc P}[\mathcal{T_\rho}^{\times N}]$ and $M$ copies of the channel
\begin{align}\nonumber
F_{\rm CJ}\big( {\proc P}[\mathcal{T_\rho}^{\times N}],\mathcal{T_\rho}^{\otimes M}\big) &=
F\big( ({\rm id}_A\otimes{\proc P}[\mathcal{T_\rho}^{\times  N}]_S)[\Phi^+_{\bm A\bm S}],
({\rm id}_A\otimes(\mathcal{T_\rho}^{\otimes M})_S)[\Phi^+_{\bm A\bm S}]\big)\\ \nonumber
&\leq F\big( \tr_{\bm A} ({\rm id}_A\otimes{\proc P}[\mathcal{T_\rho}^{\times N}]_S)
[\Phi^+_{\bm A\bm S}],\tr_{\bm A} ({\rm id}_A\otimes(\mathcal{T_\rho}^{\otimes M})_S)
[\Phi^+_{\bm A\bm S}]\big) \\ \nonumber
& = F\big(  {\proc P}[\mathcal{T_\rho}^{\times N}])[\nicefrac{\id}{d}], \rho^{\otimes M}\big) \\
& = F\big( \tr_{\bm S'} \cE_{\rm P}[\,(\nicefrac{\id}{d})_{\bm S} \otimes (\rho^{\otimes N})_{\bm S'}], 
\rho^{\otimes M}\big),
\end{align}
where we used the monotonicity of fidelity (aka data processing inequality) in the first line, and 
lemma~\ref{lem: process dec} in the last one. Here, let us define the cloning machine (CPTP map) appearing in the 
last line and induced by the process $\proc P$
\begin{align}
\cP_{\rm P}[\,\cdot\,]:=  \tr_{\bm S'} \cE_{\rm P}[\,(\nicefrac{\id}{d})_{\bm S} \otimes 
(\rho^{\otimes N})_{\bm S'}]: L(\cH^{\otimes N}) \to L(\cH^{\otimes M}).
\end{align}
Finally letting $\proc P$ be the optimal cloning process for the CJ fidelity (achieving 
$\mathsf{F}^{\eucal{C}}_{\rm CJ}$) we see that the induced state cloning machine $\cP_{\rm P}$ achieves  
$F\big( \cP_{\rm P}[\rho^{\otimes N}], \rho^{\otimes M}\big)\geq F_{\rm CJ}\big( {\proc P}
[\mathcal{T_\rho}^{\times N}],\mathcal{T_\rho}^{\otimes M}\big) \geq \mathsf{F}^{\eucal{C}}_{\rm CJ}$ for all 
states from the set $\eucal{S}$. Hence, it must be that 
\be
 \mathsf{F}_{\nicefrac{1}{2}}^\eucal{S}\geq \mathsf{F}^{\eucal{C}}_{\rm CJ}
\ee
which together with Eq.~\eqref{eq: CJ,d,st ub} implies 
$\mathsf{F}^{\eucal{C}}_{\rm CJ} =\mathsf{F}_{\diamond}^\eucal{C} = \mathsf{F}_{\nicefrac{1}{2}}^\eucal{S}$ and 
concludes the proof.
\end{proof}

Note that the proof can be readily generalized to the equivalence of optimal cloning of states and trash-and-
replace channels with respect to distances $D_S(\rho,\sigma)$ on states and $D_C(\cA,\cB)$ on channels whenever, 
for all states from $\eucal{S}$ these distances simultaneously satisfy
\begin{align}\nonumber
    D_{C}(\mathcal{T}_{\mathcal{P}[\rho^{\otimes N}]} ,\mathcal{T}_\rho^{\otimes M}) &\leq D_{S} \left( 
    \mathcal{P}[\rho^{\otimes N}], \rho^{\otimes M}\right)\\
D_{C}\big( \tr_{\bm S'} \cE_{\rm P}[\,\cdot \otimes (\rho^{\otimes N})_{\bm S'}], 
\mathcal{T_\rho}^{\otimes M}\big) &\geq D_{S}\big( \tr_{\bm S'} \cE_{\rm P}[\,\sigma_{\bm S} \otimes 
(\rho^{\otimes N})_{\bm S'}], \rho^{\otimes M}\big)\, ,
\end{align}
for some fixed state $\sigma_{\bm S}$. These inequalities can be plugged in the proof to yield 
$\mathsf{D}^{\eucal{C}}\leq \mathsf{D}^\eucal{S}$ and $\mathsf{D}^{\eucal{C}}\geq \mathsf{D}^\eucal{S}$ 
respectively.

\section{\label{app:states_and_Us} Proof of Corollaries~\ref{cor:state_rep} and ~\ref{cor:unitaries_diamond}}	

\begin{proof}(Corollary~\ref{cor:state_rep})
First note that a replication rate of $\mathsf{R}^{\eucal{S}}=1$ can be trivially achieved via a replication 
process corresponding to the identity.  To prove the strong converse note that, by 
Proposition~\ref{prop:cloning-trash}, cloning of states is equivalent to the cloning of the trash-and-replace 
channels.  Thus, substituting $\cE_0=\cT_{\rho_0}, \cE_1=\cT_{\rho_1}$ in \eqnref{eq:triangle_ineq}, and using 
the data processing inequality \(D_\bullet(\proc{P}[\cT_{\rho_0}^{\times N}],\proc{P}
[\cT_{\rho_1}^{\times N}])\leq D_\bullet(\cT_{\rho_0}^{\otimes N},\cT_{\rho_1}^{\otimes N})=
D(\rho_0^{\otimes N},\rho_1^{\otimes N})\) the geometric bound in \eqnref{eq:optimal_cloning_distance} of 
Proposition~\ref{prop:cloning_bound_discrimination} reads 
     \be
		\mathsf{D}^{\eucal{S}}\geq \frac{1}{2}\left(D(\rho_0^{\otimes M}, \rho_1^{\otimes M})
    -D(\rho_0^{\otimes N}, \rho_1^{\otimes N})\right)
        \, .
	\label{app:distance_bound_states}
	\ee
    
As the set $\eucal{S}$ is continuous we can always choose two states $\rho_0, \rho_1\in\eucal{S}$ such that 
	\be
		F(\rho_0,\rho_1)= 1-\frac{1}{N^{1+\delta}}
	\label{app:fid_delta}
	\ee	
for $\delta>0$.  It follows that 
	\be
		\begin{split}
			\lim_{N\tightto\infty}F(\rho_0^{\otimes N}, \rho_1^{\otimes N})&=\lim_{N\tightto\infty}\left(1-
            \frac{1}{N^{1+\delta}}\right)^N\\
			&=\lim_{N\tightto\infty}(1-N^{-\delta})=1\, 
		\end{split}
	\label{app:fid_N}
	\ee	
which implies that $\lim_{N\tightto\infty}D(\rho_0^{\otimes N}, \rho_1^{\otimes N})=0$. On the other hand, 
setting $M=N^R$, with $R=1+\delta$ we obtain 
	\be
		\begin{split}
			\lim_{N\tightto\infty}F(\rho_0^{\otimes M}, \rho_1^{\otimes M})&=\lim_{N\tightto\infty}\left(1-
            \frac{1}{N^{1+\delta}}\right)^{N^R}\\
			&=\lim_{N\tightto\infty}(1-N^{1+R-(1-\delta)})=0\, 
		\end{split}
	\label{fid:M}
	\ee
which implies that $\lim_{N\tightto\infty}D(\rho_0^{\otimes M}, \rho_1^{\otimes M})=\frac{\pi}{2}$.  
It follows that for any $R=1+\delta$ with $\delta>0$ the optimal cloning distance asymptotically satisfies 
$\mathsf{D}^{\eucal{S}}\geq \frac{\pi}{4}$, which in turn means that the optimal cloning fidelity satisfies 
$\channelF^{\eucal{S}}\leq 1/\sqrt{2}$. 
\end{proof}

\begin{proof}(Corollary~\ref{cor:unitaries_diamond})
Again we note that a replication rate of $\mathsf{R}^{\eucal{C}}_{\diamond}=1$ can be trivially achieved by a 
superchannel corresponding to the identity.  To prove the strong converse recall that 
$\mathsf{D}^{(N)}(\cU_0, \cU_1)$ is maximized by strategies that do not make use of entanglement with auxiliary 
systems.  Using the minimization over the corresponding fidelity we obtain 
	\be
		\begin{split}
			\channelF^{(N)}\left(\cU_0, \cU_1\right)& =\min_{\psi\in\cL(\cH^{\otimes N})} 
            F\left(\cU_0^{\otimes N}[\psi],\cU_1^{\otimes N}[\psi]\right)\\  
			\nonumber
			&=\min_{\ket{\psi}\in\cH^{\otimes N}}\abs{\bra{\psi}(U_0^\dagger U_1)^{\otimes N}\ket{\psi}}\, .
		\end{split}
	\label{app:fid_diamond}
	\ee
Writing $U_0^\dagger U_1$ in its diagonal basis 
	\be
		U_0^\dagger U_1 = \sum_j e^{\ii \theta_j} \, \ketbra{j}{j}
	\label{app:U0U1_basis}
	\ee
the state minimizing the fidelity in \eqnref{app:fid_diamond} is the equal superposition  
$\ket{\psi} := \frac{1}{\sqrt{2}}\left(\ket{j}^{\otimes N}+\ket{j'}^{\otimes N}\right)$ of the eigenstates 
$\ket{j}$ and $\ket{j'}$ corresponding to the maximal angular difference
\be
\Theta := \max_{\theta_{j},\theta_{j'}} \big(|\theta_{j'}-\theta_j| \,\,{\rm mod} \,\, \pi \big),
\ee
where we assumed that $N\Theta \leq \pi$. Hence, for the optimal fidelity we find
    \begin{align}
       \channelF^{(N)}\left(\cU_0, \cU_1\right) =\abs{\bra{\psi}(U_0^\dagger U_1)^{\otimes N}\ket{\psi}}
        =
  \bigg|
     \frac{ e^{i N \theta_{j}} + e^{i N \theta_{j'}} 
       }{2}\bigg|= \cos\left(\frac{N\Theta}{2}\right),
    \end{align}
from which it immediately follows that $\channelD^{(N)}\left(\cU_0,\cU_1\right)=\frac{N\Theta}{2}$.  
A similar calculation yields $D_{\diamond}\left(\cU_0^{\otimes M}, \cU_1^{\otimes M}\right)=\frac{M\Theta}{2}$ 
so that $\mathsf{D}^{\eucal{C}}_{\diamond}\geq \frac{\Theta}{4}(M-N)$. As $\eucal{C}$ is a continuous family of 
unitary channels one can always find $\cU_0, \cU_1\in\eucal{C}$ for which $M\Theta=\pi$ (and $N\Theta<\pi$) which 
implies
	\be
		\mathsf{D}^{\eucal{C}}_{\diamond}\geq \frac{\pi}{4}\left(1-\frac{N}{M}\right)
	\label{app:unitary_bound_rep}
	\ee	
Setting $M=N^{R(\epsilon)}$, with $R(\epsilon)=1+\delta$ for $\delta>0$ implies 
$\lim_{N\tightto\infty}\mathsf{D}^{\eucal{C}}_{\diamond}\geq \frac{\pi}{4}$, 
proving the strong converse on the replication rate. Hence for $\mathsf{D}^{\eucal{C}}_{\diamond}<\nicefrac{\pi}
{4}$ the optimal replication rate is $\mathsf{R}^{\eucal{C}}_{\diamond}(\epsilon)=1$ for all 
$\epsilon<1-\nicefrac{1}{\sqrt{2}}$.
\end{proof}

\section{\label{app:necessary_conds_superrep} Proof of Proposition~\ref{prop:necessary_conds_superrep}}
In order to demonstrate Proposition~\ref{prop:necessary_conds_superrep} we start from the following 
bounds
\be
\begin{split}
		\channelD^{\eucal{C}}_{\rm CJ} &\geq \frac{1}{2}\left(D_{\rm CJ}(\cE_a^{\otimes M}, 
        \cE_b^{\otimes M}) - \channelD^{(N)}\left(\cE_a,\cE_b\right)\right) \\
        &\geq \frac{1}{2}\left(D_{\rm CJ}(\cE_a^{\otimes M}, \cE_b^{\otimes M}) - \frac{1}
        {2}\int_{a}^{b}\sqrt{\mqfi^{(N)}\left(\cE_x\right)}\, \dd x \right)\\
        &\geq \frac{1}{2}\left(D_{\rm CJ}(\cE_a^{\otimes M}, \cE_b^{\otimes M}) - \int_{a}^{b}\sqrt{f_N(\cE_x)}\, 
        \d x \right)
	\label{app:optimal_cloning_distance}
    \end{split}
	\ee
established by Proposition~\ref{prop:cloning_bound_discrimination}, Eq.~\eqref{eq:triangle_metrology} and 
Proposition~\ref{prop:metrology_optimal_distance}. Now observe that  
\be
		\begin{split}
			D_{\rm CJ}\left(\cE_a^{\otimes M},\cE_{b}^{\otimes M}\right) &= \arccos 
            F_{\rm CJ}\left(\cE_a^{\otimes M},
			\cE_{b}^{\otimes M}\right)\\
			&=\arccos \big(F_{\rm CJ}\left(\cE_a,\cE_{b}\right)\big)^M\, ,
		\end{split}
	\ee 
where we have made use of the following property $F(\rho^{\otimes M}, \sigma^{\otimes M})=F(\rho, \sigma)^M$.
For the bound \eqnref{app:optimal_cloning_distance} to remain meaningful in the large $M$ limit, the channels 
$\cE_a$ and $\cE_b$ must become infinitesimal close. Thus, setting $b=a+\delta$ with $\delta= \sqrt{\frac{R}{M}}$ 
for some $R\geq 0$ we obtain the expression 
\be
		\begin{split}
			D_{\rm CJ}\left(\cE_a^{\otimes M},\cE_{a+\delta}^{\otimes M}\right)
			&=\arccos \big(F_{\rm CJ}\left(\cE_a,\cE_{a+\delta}\right)\big)^M\\
			&=\arccos \big(1- \frac{\delta^2}{8}{{\rm{QFI}}(\mathtt{CJ}[\cE_{a}])} + \mathcal{O}
            (\delta^3)\big)^M\,\\
            & = \arccos(\exp[-\frac{R}{8} {{\rm{QFI}}(\mathtt{CJ}[\cE_{a}])} ] )
		\end{split}
	\label{app:upper_bound_targets1}
	\ee 
where we used \eqnref{eq:QFI} in going to the second line and took the large $M$ limit to go to the last.

Next, we turn to the second term in \eqnref{app:optimal_cloning_distance}. 
In the limit of small $\delta=\sqrt{\frac{R}{M}}$ and large $N$ we find 
\be\label{app: bound QFI}
\frac{1}{2}\int_{a}^{a+\delta}\sqrt{\mqfi^{(N)}\left(\cE_x\right)}\, \dd x =  \frac{\delta}{2} 
\sqrt{\mqfi^{(N)}\left(\cE_a\right)} = \sqrt{\frac{R \, \mqfi^{(N)}\left(\cE_a\right)}{4 M}}\leq  \sqrt{\frac{R\, 
f_N(\cE_a)}{M}},
\ee
where we assumed that the next order term $\frac{1}{M} \frac{{\dot \mqfi}^{(N)}\left(\cE_a\right)}
{\mqfi^{(N)}\left(\cE_a\right)}$  (respectively $\frac{1}{M} \frac{{\dot f}_N \left(\cE_a\right)}
{f_N\left(\cE_a\right)}$) vanish in the limit (see regularity conditions discussion in 
Sec.~\ref{app sec: regularity}).

Plugging Eqs.~({\ref{app:upper_bound_targets1}, \ref{app: bound QFI}}) into \eqnref{app:optimal_cloning_distance} 
we obtain  the following bounds for the optimal cloning distance 
	\be
		\begin{split}
			\mathsf{D}^{\eucal{C}}_{\mathrm{CJ}}&\geq\frac{1}{2}\left( \arccos\left(\exp\left[-\frac{R}{8}
            {{\rm{QFI}}(\mathtt{CJ}[\cE_{a}])} \right]\right) 
			- \sqrt{\frac{R \, \mqfi^{(N)}\left(\cE_a\right)}{4 M}}\right)\\
            &\geq \frac{1}{2}\left(\arccos\left(\exp\left[-\frac{R}{8}{{\rm{QFI}}(\mathtt{CJ}[\cE_{a}])} 
            \right]\right)- \sqrt{\frac{R\, f_N(\cE_a)}{M}}\right)
		\end{split}
	\label{app:general_upper_bound}
	\ee
valid for all $R\geq  0$. Our next goal is to find the value of the parameter $R$ which maximizes the right hand 
side. We postpone this optimization to Sec.~\ref{app: bound optimization}.  For now we note that 
for all $0<x,y < \infty$
\be
A(\nicefrac{y}{x}) =\frac{1}{2} \max_{R\geq 0} \left(\arccos\left(\exp[- R x ] \right) - \sqrt{ 2  R y} \right), 
\ee
where $A(Z)$ is a monotonically decreasing function of the real variable $0<Z<\infty$  satisfying 
$A(0)=\frac{\pi}{4}$. A closed form expression of the function $A$ is given in \eqnref{app: function A}.  
This result allows one to rewrite the bound in \eqnref{app:general_upper_bound} as
\be
			\mathsf{D}^{\eucal{C}}_{\mathrm{CJ}}\geq A\left(\frac{\mqfi^{(N)}\left(\cE_a\right)}{M \, 
            {\rm{QFI}}(\mathtt{CJ}[\cE_{a}]) }\right) \geq A\left(\frac{4\, f_N(\cE_a)}{M \, {\rm{QFI}}
            (\mathtt{CJ}[\cE_{a}]) }\right).
	\label{app:general_upper_bound final}
\ee

The first inequality lends itself to a very intuitive interpretation---the argument of the monotonically 
decreasing function $A$ is the ratio between the optimal QFI for any state producing process $\proc{S}
[\cE_a^{\times N}]$ (using $N$ copies of the channel) and the QFI of the Choi-Jamio\l kowski state $\mathtt{CJ}
[\cE_a^{\otimes M}]$ (using $M$ copies of the channel in parallel). 

What remains to be done is to distinguish the cases where $\mqfi^{(N)}\left(\cE_a\right)\leq 4 \, f_N(\cE_a)$ may 
scale quadratically with $N$, and where it is bound to a linear scaling dictated by 
Eq.~\eqref{eq:bound_channel_metro} from \cite{Kurdzialek2023}.  Concretely, given a Kraus decomposition of the 
channel achieving $\beta(a)=0$ we get
\begin{equation}
		f_N(\cE_a)= N \|\alpha(a)\|	\implies \mathsf{D}^{\eucal{C}}_{\mathrm{CJ}}\geq  A\left(\frac{4\, N 
        \|\alpha(a)\|}{M \, {\rm{QFI}}(\mathtt{CJ}[\cE_{a}]) }\right).
\end{equation}
In the asymptotic limit the rhs goes to $\nicefrac{\pi}{4}$ if $\frac{N}{M}\to 0$. Instead, for a linear 
replication rate $M=(1+\lambda)N$ we find 
\begin{equation}
		(i)\, \beta(a)=0 \quad \implies \quad  A\left(\frac{4\, \|\alpha(a)\|}{(1+\lambda)\, {\rm{QFI}}
        (\mathtt{CJ}[\cE_{a}]) }\right).
\end{equation}

In contrast, if $\beta(a)\neq 0$ we find 
\begin{equation}
		f_N(\cE_a)= N\sqrt{\Vert\alpha(x)\Vert}
            \left((N-1)\Vert\beta(x)\Vert +\sqrt{\Vert\alpha(x)\Vert}\right)\implies 
            \mathsf{D}^{\eucal{C}}_{\mathrm{CJ}}\geq  A\left(\frac{N\sqrt{\Vert\alpha(x)\Vert}
            \left((N-1)\Vert\beta(x)\Vert +\sqrt{\Vert\alpha(x)\Vert}\right)}{M \, {\rm{QFI}}(\mathtt{CJ}
            [\cE_{a}]) }\right).
\end{equation}
Here the asymptotic limit gives  $\mathsf{D}^{\eucal{C}}_{\mathrm{CJ}}\to \nicefrac{\pi}{4}$ if $\frac{N^2}{M}\to 
0$ and for a quadratic replication rate $M=(1+\lambda)N^2$ one finds 
\begin{equation}
		(ii)\, \beta(a)\neq 0 \quad \implies \quad  A\left(\frac{4\, \sqrt{\Vert\alpha(x)\Vert} 
        \Vert\beta(x)\Vert}{(1+\lambda)\, {\rm{QFI}}(\mathtt{CJ}[\cE_{a}]) }\right).
\end{equation}

\subsection{Regularity conditions}
\label{app sec: regularity}
Let us now come back to the assumptions used to derive the bounds in \eqnref{app: bound QFI}.
For any $\delta>0$ and $R<\infty$, there is a big enough $M_0$ such that  $\sqrt{\frac{R}{M}}< \delta$ for 
$M\leq M_0$. Hence, if $(i)$ $\beta(x)= 0$ for all $x\in[a,a+\delta)$ we can  write for $M=(1+\lambda)N$
\be
\int_{a}^{a+\sqrt{\frac{R}{M}}} \sqrt{f_N(\cE_x)} \dd x = \sqrt{N} \int_{a}^{a+\sqrt{\frac{R}{M}}} 
\sqrt{\|\alpha(x)\|} \dd x \leq  \sqrt{\frac{N R}{M} \|\alpha(a)\|}+ o(1),
\ee
where the last equality holds provided that $ \frac{\dd}{\dd x}\|\alpha(x)\|\leq \infty$  for $x\in[a,a+\delta)$ 
($\|\alpha(x)\|>0$ because the QFI along the curve must be nonzero). If $(ii)$ $\beta(a)\neq 0$ for 
$M=(1+\lambda)N^2$ we can write 
\begin{align}
\int_{a}^{a+\sqrt{\frac{R}{M}}} \sqrt{f_N(\cE_x)} \dd x &= N \int_{a}^{a+\sqrt{\frac{R}{M}}} 
\sqrt{\sqrt{\|\alpha(x)\|} \| \beta(x)\| +\frac{\|\alpha(x)\|}{N}} \dd x \\
&\leq N \int_{a}^{a+\sqrt{\frac{R}{M}}} \sqrt{\sqrt{\|\alpha(x)\|} \| \beta(x)\|} \dd x +o(1) \\
&=  \sqrt{\frac{N R}{M} \sqrt{\|\alpha(x)\|} \| \beta(x)\|} +o(1)
\end{align}
provided that $\frac{\dd}{\dd x} \sqrt{\sqrt{\|\alpha(x)\|} \| \beta(x)\|} \leq \infty$ for $x\in[a,a+\delta)$, 
which follows from $\frac{\dd}{\dd x}\|\alpha(x)\|\leq \infty$ and $\frac{\dd}{\dd x}\sqrt{\|\beta(x)\|}\leq 
\infty$. Therefore, the regularity conditions for the two cases can be summarized as follows
\be
\text{Regularity conditions} = 
\begin{cases}
(i) \quad \beta(x)=0 \qquad \qquad \quad \frac{\dd}{\dd x} \|\alpha(x)\|<\infty & \forall x \in[0,\delta) \\
(ii) \quad \frac{\dd}{\dd x} \sqrt{\|\beta(x)\|}\leq \infty \quad \frac{\dd}{\dd x} \|\alpha(x)\|<\infty & 
\forall x \in[0,\delta)\,
\end{cases}
\ee
where we have set $a=0$ to match the notation in the main text.

\subsection{Optimization of the bound}
\label{app: bound optimization}
In this section for all $0<x,y < \infty$ we solve the following optimization
\be
\mathrm{f}(x,y) :=\max_{R\geq 0} \left(\arccos\left(\exp[- R x] \right) - \sqrt{ 2 R y} \right).
\ee
Define a new variable $\zeta:= \exp\left[-R x \right] \in [0,1]$  which allows us to rewrite the maximization as 
\begin{align}\nonumber
\mathrm{f}(x,y) &= \mathrm{g}(\nicefrac{y}{x}) := \max_{\zeta \in [0,1]} g(\zeta; Z) \\
g(\zeta; Z) &:= \left(\arccos\left(\zeta \right) -  \sqrt{2 Z \ln(\nicefrac{1}{\zeta})} \right)
\end{align}
with $0<Z=\nicefrac{y}{x}<\infty$. Note that the function $g(\zeta; Z)$ assumes the values $-\infty$ 
and $0$ on the boundaries of the domain ($\zeta=0,1$ respectively), and is smooth on the open interval $(0,1)$. 
Therefore its maximum is attained at the boundary or at a local extremum satisfying 
\begin{align}
\frac{\dd}{\dd \zeta} g(\zeta;Z)=0 \quad &\Leftrightarrow \quad
\frac{1}{\sqrt{1-\zeta^2}} = \frac{\sqrt{2 Z}}{2 \zeta  \sqrt{ \log \left(\frac{1}{\zeta }\right)}}
\quad \Leftrightarrow \quad  \frac{Z}{\zeta^2} -Z + \log(Z) = \log(\frac{Z}{\zeta^2}).
\end{align}
Exponentiating both sides of the equation gives
\be
\exp(\frac{Z}{\zeta^2}) \exp(-Z) Z  =\frac{Z}{\zeta^2} \quad \Leftrightarrow \quad 
( -\frac{Z}{\zeta^2}) \exp(-\frac{Z}{\zeta^2}) = - Z \exp(-Z),
\quad \Leftrightarrow \quad 
y \,e^y = - Z \exp(-Z)
\label{app:Lambert_thing}
\ee
after the change of variable $y= -\frac{Z}{\zeta^2}$. Note that the right hand side in 
the \eqnref{app:Lambert_thing} satisfies $-\frac{1}{e}\leq - Z \exp(-Z) \leq 0$, and the equation is known to 
have two real solutions given by the two branches $(k=0,-1)$ of the Lambert W function $W_k$
\be
y= W_0( - Z \exp(-Z))\, , \qquad y= W_{-1}( - Z \exp(-Z))\, .
\ee
Transforming the variables back gives
\be
\zeta = \sqrt{-\frac{Z}{W_0( - Z \exp(-Z))}}\,  \qquad \zeta = \sqrt{-\frac{Z}{W_{-1}( - Z \exp(-Z))}}\, .
\ee
As $\zeta\in[0,1]$ only two solutions are possible
\be
\zeta = 1 \qquad \text{or} \qquad   \zeta = \sqrt{-\frac{Z}{W_{-1}( - Z \exp(-Z))}} \quad \text{if} \quad Z\leq 1.
\ee
The solution $\zeta=1$ corresponds gives the trivial value $g(1;Z)=0$. Hence, the unique nontrivial solution is 
only possible if $Z\leq 1$, i.e.,  
\be
\mathrm{g}(Z) = \begin{cases}
0 & Z \geq 1 \\
\arccos \zeta(Z) - \sqrt{2Z \ln(\frac{1}{\zeta(Z)})} & Z < 1\, .
\end{cases}
\ee 
It follows that the solutions of $A(Z)= \frac{1}{2} \mathrm{g}(Z)$ are given by 
\be\label{app: function A}
A(Z) = \begin{cases}
0 & Z \geq 1 \\
\frac{1}{2}\left(\arccos \zeta(Z) - \sqrt{2Z \ln(\frac{1}{\zeta(Z)})}\right) & Z < 1
\end{cases} \qquad \text{with} \qquad \zeta(Z) := \sqrt{-\frac{Z}{W_{-1}( - Z \exp(-Z))}}\, .
\ee 
To see that $A(Z)$ (or $\mathrm{g}(Z)$) is a monotonically decreasing function of $Z$, note that $g(\zeta;Z)$ is 
monotonically decreasing in $Z$ for all $\zeta$.

\section{Proof of corollary \ref{cor:no robustness}}
\label{app: zero robustness}

Following the beginning of the proof in the main text, consider the curve 
\be
\{\cE_{x}\, \vert\, x \in [0,\delta)\}\subset\eucal{C} \quad \text{with} \quad \cE_x[\cdot] = \cN[ e^{\ii H x} 
\cdot e^{-\ii H x}]
\ee
inside our set of channels. We now show that we can choose $H$ (not proportional to identity) such that 
$\beta(x)$ in Eq.~\eqref{eq:alphabetas} can be set to zero.
For the natural Kraus representation $K_k(x)= L_k e^{\ii H x}$ of the channels $\cE_x$ we find 
$\dot{K}_k(x)= \ii L_k e^{\ii H x} H$ and 
\begin{equation}
\beta(x) = \sum_{k} \dot {K}_k(x)^\dag K_k(x) = - \ii H e^{-\ii H x} \sum_k L_k^\dag L_k  e^{\ii H x} = -\ii H 
\neq 0, 
\end{equation}
which does not give the desired result. However, exploiting the gauge freedom we can chose any other Kraus 
representation of the form
\be
\bm K(x) =   \hat u(x)  \bm L e^{\ii H x}
\ee
where we have collected the $n$ Kraus operators in a column vector $\bm L = (L_1,\dots L_n)$ and introduced an 
$n\times n$ unitary matrix  $\hat u(x)$ that may depend on $x$.  The derivatives are now given by 
\begin{align}
\dot {\bm K}(x) = \ii  \hat u(x) (\hat h \bm L + \bm L H)e^{\ii H x}
\end{align}
where $\hat h = -\ii \hat{u}^\dag(x) \dot{ \hat{u}}(x)$ is a hermitian $n\times n$ matrix. For the operator 
$\beta(x)$ we now find
\begin{align}
\beta(x)= \dot {\bm K}^\dag \bm K = -\ii e^{-\ii H x}(H +\bm L^\dag \hat h \bm L) e^{\ii H x} \, .
\end{align}
To finish the proof we need to chose $H$ and $\hat h$ such that the last expression becomes zero, i.e. such that 
$H = -\bm L^\dag \hat h \bm L$. Since we are free to chose any $H$ non-proportional to identity it remains to 
show that one can always select $\hat h$ such that 
\be
G =\bm L^\dag \hat h \bm L
\ee
is not a multiple of identity.

To do so, note that there must be two Kraus operators, say $L_1$ and $L_2$, which are linearly independent since 
otherwise $\cN$ is unitary.  Then we can set $\hat h_{ij}=0$ for $i,j\neq 1,2$, and  choose any
\be
G \in {\rm span}_{\mathds{R}}\{L_1^\dag L_1, L_2^\dag L_2, L_1^\dag L_2 +L_2^\dag L_1,\ii (L_1^\dag L_2 -L_2^\dag 
L_1)\}.
\ee
We now consider two cases separately. If either $L_1^\dag L_1$ and $ L_2^\dag L_2$ are not multiples of identity 
the proof can be completed by choosing $G=L_1^\dag L_1$ or $L_2^\dag L_2$. If both $L_1^\dag L_1$ and 
$L_2^\dag L_2$ are multiples of identity then they are of the form $L_i = a_i U_i$ with $a_1,a_2\neq 0$ and 
$U_1\neq e^{\ii \varphi} U_2$ since they must be linearly independent. Hence we have $L_1^\dag L_2 = a_1^* a_2 
U_1^\dag U_2= a V$ with $a\neq 0$ and unitary $V=U_1^\dag U_2$ linearly independent from $\id$. We thus have 
\be
G \in {\rm span}_{\mathds{R}}\{\id, V+V^\dag,\ii (V -V^\dag)\}.
\ee
As $V =\sum_{i=1}^d e^{\ii \lambda_i}\ketbra{i}$ it must have at least two distinct complex eigenvalues 
$e^{\ii \lambda_0}$ and $e^{\ii \lambda_1}$. In the two-dimensional subspace corresponding to these eigenvalues 
we have
\begin{align}\nonumber
V+V^\dag=
 2\left(\begin{array}{cc}
 \cos( \lambda_0) & \\
 &\cos( \lambda_1) \\
 \end{array}\right)\\
 \ii( V- V^\dag) =
 - 2 \left(\begin{array}{cc}
 \sin( \lambda_0) & \\
 &\sin( \lambda_1) \\
 \end{array}\right)\,
 \end{align}
which can not be both proportional to identity. Hence one can always chose $G$ non-
proportional to identity, and the proof is complete.\hfill$\qedsymbol$

\section{The error-mitigation process $\{\proc{M}_s\}$ of Fig.~\ref{fig:M}}
\label{app: EMproc}

In this appendix we discuss the error mitigation process $\{\proc{M}_s\}_{s=0}^1$ shown in Fig.~\ref{fig:M}. To 
characterize the process it is sufficient to understand how it transforms all extremal CP maps $\mathcal{K}
[\cdot] = K \cdot K^\dag$. Hence we compute the CP maps $\proc{M}_0[\mathcal{K}]$ and $\proc{M}_1[\mathcal{K}]$
for all $\mathcal{K}$. Following the quantum circuit in Fig.~\ref{fig:M} it is easy to see that these CP maps 
are of the form $\proc{M}_s[\mathcal{K}] = K_s \cdot K_s^\dag$, with Kraus operators given by
\begin{align}
    K_s = \bra{s}_A {\rm CNOT}_{A\to S}{\rm CNOT}_{S\to A} (K\otimes \id)_{SA}{\rm CNOT}_{S\to A} \ket{0}_A,
\end{align}
where $S$ denotes the above (system) qubit and $A$ denotes the bottom (auxillilary) qubit. Straightforward matrix 
multiplication gives
\be
{\rm CNOT}_{A\to S}{\rm CNOT}_{S\to A} (K\otimes \id)_{SA}{\rm CNOT}_{S\to A} \ket{0}_A
= 
     \left(\begin{array}{cc}
    \bra{0}K \ket{0}& \\
    & \bra{1}K \ket{1}
    \end{array}\right)_S \ket{0}_A +
    \left(\begin{array}{cc}
    \bra{1}K \ket{0}& \\
    & \bra{0}K \ket{1}
    \end{array}\right)_S \ket{1}_A
\ee
where the computational basis is used for the qubit $S$.  After the measurement of the auxiliary qubit we find 
$\proc{M}_s[\mathcal{K}][\cdot] = K_s \cdot K_s^\dag$ with 
\be
 K_0=\left(\begin{array}{cc}
    \bra{0}K \ket{0}& \\
    & \bra{1}K \ket{1}
    \end{array}\right)\\ \quad \text{and} \quad
    K_1 =  \left(\begin{array}{cc}
    \bra{1}K \ket{0}& \\
    & \bra{0}K \ket{1}
    \end{array}\right).
\ee

Finally, we apply this expression for the channels $\cA_\theta = \mathcal{X}_p \circ\cU_\theta, \cB_\theta = 
\cU_\theta \circ \mathcal{X}_p$ in \eqnref{eq: C tilde mnoisy}) and the amplitude-damping 
channel $\cE_\gamma$ in Eq.~\eqref{eq: AD Kraus} to obtain 
\begin{align}
    \proc{M}_0[\cA_\theta] &= (1-p)\, \cU_\theta  \qquad\qquad  \qquad \qquad \qquad \qquad \qquad 
    \proc{M}_1[\cA_\theta] = p \,\cU_\theta\\
    \proc{M}_0[\cB_\theta] &= (1-p)  \, \cU_\theta \qquad \qquad \qquad \qquad \qquad \qquad  \qquad 
    \proc{M}_1[\cB_\theta] = p \, \cU_{-\theta}\\
     \proc{M}_0[\cE_\gamma][\cdot] &=  \left(\begin{array}{cc}
    1& \\
    & \sqrt{1-\gamma}
    \end{array}\right)\cdot \left(\begin{array}{cc}
    1& \\
    & \sqrt{1-\gamma}
    \end{array}\right) \quad  
    \,\, \proc{M}_1[\cE_\gamma][\cdot] =  \left(\begin{array}{cc}
    0& \\
    & \gamma  
    \end{array}\right)\cdot \left(\begin{array}{cc}
    0& \\
    & \gamma
    \end{array}\right).
\end{align}

\section{Cloning Pauli-noise channels}
\subsection{Derivation of Eq.~\eqref{eq: opt F coh}}
\label{app: bitflip fid}

In this section we make explicit the computation of the fidelity $F_{\rm CJ}(\proc{P}[\cN_{\bm p}^{\times N}], \cN_{\bm p}^{\otimes M})$ for the measure-and-prepare cloning process of the Pauli-noise channels, 
$\proc{P}[\cN_{\bm p}^{\times N}]= \sum_{\bm j} \proc{E}_{\bm j}[\cN_{\bm p}^{\times N}]\, \widehat{\cN}^{(M)}_{\bm j}$, with 
\be
\widehat{\cN}^{(M)}_{\bm j} = \left(\bigotimes_{i=1}^N \cP_{j_i}\right)\otimes 
\widehat{\mathcal N}_{\bm j}^{(M-N)},    
\ee
$\cP_{j}[\cdot] =\sigma_j \cdot \sigma_j$ and $Q(\bm j|N):=\proc{E}_{\bm j}[\cN_{\bm p}^{\times N}]= \prod_{k=0}^{3}(p_k)^{t_k(\bm j)}$.
Using $\cN_{\bm p}= \sum_{j=0}^3 p_j \cP_j$  we can write the fidelity as 
\begin{align} \nonumber
F_{\rm CJ}(\proc{P}[\cN_{\bm p}^{\times N}], \cN_{\bm p}^{\otimes M}) &= F_{\rm CJ}\left(\sum_{\bm j}\! Q(\bm j|N) 
\left(\bigotimes_{i=1}^N \cP_{j_i}\right) \otimes \widehat{\mathcal N}_{\bm j}^{(M-N)}, (\sum_{j=0}^3 p_j \cP_j)^{\otimes M} \right) \\ 
& =
F\left(\sum_{\bm j}\! Q(\bm j|N) \left(\bigotimes_{i=1}^N \mathtt{CJ}[\cP_{j_i}]\right) \otimes 
\mathtt{CJ}[\widehat{\cN}_{\bm j}^{(M-N)}],\sum_{\bm j'} Q(\bm j'|M) \left(\bigotimes_{i=1}^M 
\mathtt{CJ}[\cP_{j_i'}]\right) \right)\, .
\end{align}
Using $F(\rho,\sigma)= \tr |\sqrt \rho \sqrt \sigma|$, 
$\sqrt{\mathtt{CJ}[\cP_{j}]}\sqrt{\mathtt{CJ}[\cP_{j'}]} =\delta_{jj'} \, \mathtt{CJ}[\cP_{j}]$ 
and $\sqrt{ \sum_{j=0}^3 p_j \mathtt{CJ}[\cP_{j}] }= \sum_{j=0}^3 \sqrt{p_j} \, \mathtt{CJ}[\cP_{j}]$ 
we obtain 
\begin{align}
F_{\rm CJ}(\proc{P}[\cN_{\bm p}^{\times N}], \cN_{\bm p}^{\otimes M})
& = \tr \left|\sum_{j_i,j_i'} \sqrt{Q(\bm j|N)Q(\bm j'|M)} \bigotimes_{i=1}^N \sqrt{\mathtt{CJ}
[\mathcal{P}_{j_i}]}\sqrt{ \mathtt{CJ}[\mathcal{P}_{j_i'}]} \otimes 
\sqrt{\mathtt{CJ}[\widehat{\mathcal N}_{\bm j}^{(M-N)}]}
\sqrt{\bigotimes_{i=N+1}^M  \mathtt{CJ}[\mathcal{P}_{j_i'}]}\right|
\\
& = \tr \left| \sum_{j_1,\dots j_N}  Q(\bm j|N) \bigotimes_{i=1}^N \mathtt{CJ}[\mathcal{P}_{j_i}]   \otimes
\sqrt{\mathtt{CJ}[\widehat{\mathcal N}_{\bm j}^{(M-N)}]}
 \sqrt{ (\sum_{j=0}^3 p_j \mathtt{CJ}[\cP_{j}])^{\otimes(M-N)} }
\right|
\\
&= \sum_{\bm t } {\rm Pr}_{\rm Mult}(\bm t|N,\bm p)  F_{\rm CJ}\left( \widehat{\mathcal N}_{\bm j}^{(M-N)},\cN_{\bm p}^{\otimes (M-N)}\right).
\end{align}

For the choice $\widehat{\mathcal N}_{\bm j}^{(M-N)}=\cN_{\hat {\bm p}(\bm t)}^{\otimes (M-N)}$ we obtain 
\begin{align}
F_{\rm CJ}(\proc{P}[\cN_{\bm p}^{\times N}], \cN_{\bm p}^{\otimes M})
& = \sum_{\bm t} {\rm Pr}_{\rm Mult}(\bm t|N,\bm p) \, F_{\rm CJ}(\cN_{\hat {\bm p}(\bm t)} ,\cN_{\bm p})^{M-N}\, .
\end{align}

\subsection{Computing the asymptotic cloning fidelity of the coherent process}
\label{app: BF}
In this section we lower bound the cloning fidelity of the Pauli-noise channels
(Section~\ref{sec:BF})
\begin{align}
     {\rm F}_{\rm CJ}^{\eucal{C}|\rm M\&P} =  \min_{\bm p } \sum_{\bm t} {\rm Pr}_{\rm Mult}(\bm t|N,\bm p) \, F_{\rm CJ}(\cN_{\hat {\bm p}(\bm t)} ,\cN_{\bm p})^{M-N}
\end{align}
for the specific choice of the estimator $\hat{p}_k(\bm t)= \frac{t_k}{N}$. To do so we will lower bound the final 
fidelity for all values of the parameter $\bm p$, using two approaches.

First, consider the following lower bound given by Jensen's inequality
\begin{align}
F_{\rm CJ}(\proc{P}[\cN_{\bm p}^{\times N}], \cN_{\bm p}^{\otimes M}) 
&\geq \left(\sum_{\bm t} {\rm Pr}_{\rm Mult}(\bm t|N,\bm p) \, F_{\rm CJ}(\cN_{\hat {\bm p}(\bm t)} ,\cN_{\bm p}) \right)^{M-N} \\ \label{eq: bound BF F 1}
&= \mathds{E}\left[\sum_{k=0}^3 \sqrt{p_k \, \hat p_k(\bm t)}\right]^{M-N},
\end{align}
where we used Eq.~\eqref{eq:FCJ_bitflip} for the single copy fidelity. Here the four random variables $N \hat p_k(\bm t)\sim 
{\rm Bin}(N,p_k)$  are binomially distributed. To lower bound the last 
expression we can thus use the following inequality for a binomially distributed random variable $X\sim {\rm Bin}
(N,p)$
\be
 \mathds{E}[\sqrt{X}] \geq \sqrt{pN}\left(1-\frac{1-p}{2 p N}\right),
\label{eq: bin bound}
\ee
which is derived in \appref{app: binom r.v.} following \cite{mathoverflow2013}. Plugging \eqnref{eq: bin bound} 
into \eqnref{eq: bound BF F 1} one immediately obtains
\begin{align}
F_{\rm CJ}(\proc{P}[\cN_{\bm p}^{\times N}], \cN_{\bm p}^{\otimes M}) 
&\geq \left(1- \sum_{k=0}^3 \frac{1-p_k}{2 N}\right)^{M-N}=  \left(1-  \frac{3}{2 N}\right)^{M-N} \, \forall \bm p \quad \implies \quad \mathsf{F}_{\rm CJ}^{\eucal{P}} \geq \left(1-\frac{3}{2 N}\right)^{M-N}. 
\end{align}

Finally, considering a smaller family of Pauli-noise channels $\eucal{P}=\{\cN_{\bm p}| \sum p_j =1, p_j \geq \epsilon\}$ for 
some $\epsilon>0$, we obtain a tight expression for the asymptotic worst-case alignment fidelity
\be
\sum_{\bm t} {\rm Pr}_{\rm Mult}(\bm t|N,\bm p) \, F_{\rm CJ}(\cN_{\hat {\bm p}(\bm t)} ,\cN_{\bm p})   = 1 - \frac{3}{8 N}\, .
\ee
To do so we note that in the large $N$ limit the distribution of $\hat {\bm p}(\bm t)$ concentrates around the mean value 
$\mathds{E}[\hat {\bm p}(\bm t)]= \bm p$ and we can expand $F_{\rm CJ}(\cN_{\hat {\bm p}},\cX_{\bm p}) = 1 -\sum_k \frac{(\hat p_k-p_k)^2}{8 p_k} + 
\mathcal{O}(\|\hat {\bm p} -\bm p\|^3)$. Then
\begin{align}
\sum_{\bm t} {\rm Pr}_{\rm Mult}(\bm t|N,\bm p) \, F_{\rm CJ}(\cN_{\hat {\bm p}(\bm t)} ,\cN_{\bm p})&= 
1 -\sum_{k=0}^3   \frac{\sum_{\bm t} {\rm Pr}_{\rm Mult}(\bm t|N,\bm p)(\hat p_k N-p_k N)^2}{8 N^2 p_k}\\
&= 1 -\sum_{k=0}^3 \frac{(1-p_k)}{8 N} = 1-\frac{3}{8 N},
\label{app:second_moment}
\end{align}
where the sum in the first line is nothing more than the second moment of the binomial distribution ${\rm Bin}(N,p_k)$, given by $N p_k(1-p_k)$. In the asymptotic limit with $M=(1+\lambda)N$ we thus find for any restricted set of Pauli-channels the cloning fidelity satisfies
\be
\mathsf{F}_{\rm CJ}^{\eucal{P}} \geq \exp(-\frac{3}{8} \lambda).
\ee

\subsection{Square root of a binomial random variable}
\label{app: binom r.v.}

For completeness, in this section we summarize the demonstration of the bound 
\be \label{eq: h1}
 \mathds{E}[\sqrt{X}] \geq \sqrt{p N}\left(1-\frac{1-p}{2 p N}\right) 
\ee
for a binomial random variabel $X\sim {\rm Bin}(N,p)$, given in the discussion \cite{mathoverflow2013} on 
Mathoverflow. First, for $x\geq 0$ consider the inequality
\be
\sqrt x \geq 1- \frac{(1-x)}{2}-\frac{(1-x)^2}{2} \Longleftrightarrow 2\sqrt{x} \geq (3-x) x \Longleftrightarrow 
2- 3(\sqrt x) + (\sqrt x)^3 \geq  0 \Longleftrightarrow
(\sqrt x -1)^2 (\sqrt x +2) \geq 0.\ee
With its help, for any nonnegative random variable we can write 
\begin{align}
  \frac{\mathds{E}[\sqrt X]}{\sqrt{\mathds{E}[X]}} =\mathds{E}\left[\sqrt{\frac{X}{ \mathds{E}[X]}}\right] \geq 
  \mathds{E}\left[1 - 
   \frac{(1-X/ \mathds{E}[X])}{2}-\frac{(1-X/ \mathds{E}[X])^2}{2}\right]= 1-\frac{\mathds{E}[X^2]-\mathds{E}
   [X]^2}{2 \, \mathds{E}[X]^2}
\end{align}
Plugging in $\mathds{E}[X]= p N$ and $\mathds{E}[X^2]= p  \left(p N^2-p  N+N\right)$ gives 
\eqnref{eq: h1}.

\section{Cloning of amplitude-damping channels}

\subsection{The dummy process for amplitude-damping channels}
\label{app: AD dummy}
Recall that the Kraus operators of the amplitude-damping channels $\cE_\adp, \,\adp\in[0,1]$ are given by
\begin{equation}
    		K_0(\adp) = \begin{pmatrix} 
				1 & 0 \\ 
				0 & \sqrt{1-\adp}
			   \end{pmatrix}, \quad 
    		K_1(\adp) = \begin{pmatrix} 
				0 & \sqrt{\adp} \\ 
				0 & 0
			     \end{pmatrix}\, 
	\end{equation}
in the computational basis. In order to derive the best dummy cloning process we need to find the channel 
$\cE_{\rm dum}$ maximizing 
\be
\min_{\adp\in[0,1]} F_{\rm CJ} (\cE_\adp, \cE_{\rm dum}).
\ee
For simplicity, let us first relax this problem by only minimizing the Choi fidelity with repect to the extremal 
amplitude-damping channels $\cE_0 = \text{id}$ and $\cE_1$ (we show later that this actually leads to the optimal 
solution of the full problem), i.e.
\be
\min \left\{ F_{\rm CJ} (\text{id}, \cE_{\rm dum}), F_{\rm CJ} (\cE_1, \cE_{\rm dum})\right\}.
\ee

Consider the  Choi-Jamio\l{}kowski states associated to these three channels. For the target channels we have
\be
\mathtt{CJ}[\text{id}] =
\left( \begin{array}{cccc}
\frac{1}{2} & & & \frac{1}{2} \\
&0&&\\
&&0&\\
\frac{1}{2} & & & \frac{1}{2}
\end{array}
\right) \qquad
\mathtt{CJ}[\cE_1] =
\left( \begin{array}{cccc}
\frac{1}{2} & & & \\
&0&&\\
&&\frac{1}{2}&\\
 & & & 0
\end{array}
\right) \, .
\ee
The most general  Choi-Jamio\l{}kowski state that can be associated to the dummy channel is of the form 
\be
\mathtt{CJ}[\cE_{\rm dum}] =
\left( \begin{array}{cccc}
\frac{p}{2} & \dots& \frac{b}{2}& \frac{a}{2} \\
\dots&\frac{1-p}{2}&\dots&\dots\\
\frac{b^*}{2}&\dots&\frac{1-q}{2}&\dots\\
\frac{a^*}{2} &\dots &\dots & \frac{q}{2}
\end{array}\right),
\ee
with $0\leq p,q\leq 1$ and $|a|,|b| \leq 1$, where we only specified the elements that will be important for the 
following discussion. Note that the state must satisfy $\tr_S\mathtt{CJ}[\cE_{\rm dum}] = \frac{1}{2}\id$ which 
is readily guaranteed by the chosen parametrization. In addition, it must be  positive semi-definite $\mathtt{CJ}
[\cE_{\rm dum}]\geq 0$ which in particular implies $|a|^2\leq p q$ and $|b|^2\leq p(1-q)$. Since both target 
states are real, we can assume that $a$ and $b$ are also real without loss of generality.

Let us now compute the target fidelities, starting with the identity channel. We find
\begin{align}
F_{\rm CJ} ({\rm id}, \cE_{\rm dum}) &= \tr |\sqrt{\mathtt{CJ}[{\rm id}]} \sqrt{ \mathtt{CJ}[\cE_{\rm dum}] }| = 
\frac{1}{2} \sqrt{p+q+2 a} \leq \frac{1}{2} \sqrt{p+q+2 \sqrt{p q}}
\end{align}
where the last bound is saturated by setting $a=\sqrt{p q}$. For $\cE_1$ we find 
\begin{align}
F_{\rm CJ} (\cE_1, \cE_{\rm dum}) &= \tr |\sqrt{ \mathtt{CJ}[\cE_{1}] } \sqrt{ \mathtt{CJ}[\cE_{\rm dum}] }| \\
&= \frac{\sqrt{p+1-q-\sqrt{4 b^2+(p+1-q)^2}}+\sqrt{p-1-q+\sqrt{4 b^2+(p+1-q)^2}}}{2 \sqrt{2}} \\
&\leq \frac{\sqrt{p+1-q}}{2},
\end{align}
where the last inequality is tight for $b=0$. Note that the upper-bounds can be attained simultaneously, without 
compromising the positivity of $\mathtt{CJ}[\cE_{\rm dum}]$. Hence, this is the optimal choice for the parameters 
$a$ and $b$, and it remains to optimize over $p$ and $q$. As both expressions we computed are increasing with 
$p$, we set $p=1$ as the optimal choice. This leads to 
\begin{align}
     F_{\rm CJ} ({\rm id}, \cE_{\rm dum}) &= \frac{1}{2}\sqrt{1+2\sqrt{q}+q}\\
     F_{\rm CJ} (\cE_1, \cE_{\rm dum}) &= \frac{1}{2} \left(1+ \sqrt{1-q}\right)\, 
\end{align}
and their minimum occurs at $q=\frac{1}{2}$
\be
F_{\rm CJ} ({\rm id}, \cE_{\rm dum}) = F_{\rm CJ} (\cE_1, \cE_{\rm dum}) =\frac{2+\sqrt 2}{4}.
\ee
We have thus shown that the optimal  Choi-Jamio\l{}kowski states for the relaxed problem is
\be
\mathtt{CJ}[\cE_{\rm dum}] =
\left( \begin{array}{cccc}
\frac{1}{2} & & & \frac{1}{2 \sqrt 2} \\
&0&&\\
&&\frac{1}{4}&\\
\frac{1}{2 \sqrt 2} & & & \frac{1}{4}
\end{array}\right),
\ee
with the dummy channel corresponding to an AD channel with $\gamma=1/2$. 
Finally, it is straightforward to check that
\be
F_{\rm CJ} (\cE_\adp, \cE_{\rm dum}) \geq F_{\rm CJ} ({\rm id}, \cE_{\rm dum}) = F_{\rm CJ} (\cE_1, 
\cE_{\rm dum}) =\frac{2+\sqrt 2}{4}, 
\ee
which implies  that the dummy channel we constructed is also optimal for the original problem, and 
\be
\max_{\cE_{\rm dum}} \min_{\adp \in [0,1]} F_{\rm CJ} (\cE_\adp, \cE_{\rm dum})=\frac{2+\sqrt 2}{4}.
\ee

\subsection{Computing the fidelity of the coherent process for cloning AD channels}
\label{app: ad coherent}

Here we compute the cloning fidelity $F_{\rm CJ}(\proc{P}[\cE_\adp^{\times N}], \cE_\adp^{M})$ for the coherent 
AD cloning process discussed in Sec.~\ref{sec:coherentprocess}. In the main text we have established that the 
process returns the $M$-qubit CPTP map
\begin{align}\nonumber
\proc{P}[\cE_\adp^{\times N}] &=\! \sum_{s_1,\dots,s_N}\!\left( \mathcal{K}_{s_1|\gamma} \otimes\dots\otimes 
\mathcal{K}_{s_N|\gamma} \otimes (\cE_{\hat{\adp}(\abs{\bf s})})^{\otimes (M-N)} \right) \\
& = \! \sum_{s_1,\dots,s_M}\!\left( \mathcal{K}_{s_1|\gamma} \otimes\dots\otimes \mathcal{K}_{s_N|\gamma} \otimes 
\mathcal{K}_{s_{N+1}|\hat{\gamma}(\abs{\bf s})} \otimes\dots\otimes \mathcal{K}_{s_M|\hat{\gamma}
(\abs{\bf s})}\right)
\end{align}
where $\abs{\bf s}=\sum_{i=1}^N s_i$. Henceforth we shall write $\hat \gamma(\abs{\bf s})=\hat \gamma$ for short. 
Analogously the ideal channel $\cE_\adp^{\otimes M}$ can be expanded as 
\begin{align}
\cE_\adp^{\otimes M}
& = \! \sum_{s_1,\dots,s_M}\!\left( \mathcal{K}_{s_1|\gamma} \otimes\dots\otimes \mathcal{K}_{s_N|\gamma} \otimes 
\mathcal{K}_{s_{N+1}|\gamma} \otimes\dots\otimes \mathcal{K}_{s_M|\gamma}\right)\,.
\end{align}
For both channels we find that the corresponding Choi-Jamio{\l}kowski states read
\begin{align}\nonumber
    \mathtt{CJ}\big[\proc{P}[\cE_\gamma^{\times N}]\big] &= \sum_{s_1,\dots,s_M}\!\left( \Psi_{s_1|\gamma} 
    \otimes\dots\otimes \Psi_{s_N|\gamma} \otimes \Psi_{s_{N+1}|\hat \gamma} \otimes\dots\otimes \Psi_{s_M|\hat 
    \gamma}\right)\\
    \mathtt{CJ}\big[\cE_\gamma^{\otimes M}\big]&= \sum_{s_1,\dots,s_M}\!\left( \Psi_{s_1|\gamma} 
    \otimes\dots\otimes \Psi_{s_N|\gamma} \otimes \Psi_{s_{N+1}|\gamma} \otimes\dots\otimes \Psi_{s_M| 
    \gamma}\right)
\end{align}
where $\Psi_{s|\gamma}= \ketbra{\Psi_{s|\gamma}}$ with 
\be
\ket{\Psi_{s|\gamma}} = (K_s(\gamma) \otimes \id )\ket{\Phi^+} =\frac{1}{\sqrt 2}
\begin{cases}
    \ket{00}+ \sqrt{1-\gamma}\ket{11}& s=0 \\
    \sqrt{\gamma} \ket{01} & s=1
\end{cases}.
\ee
Crucially, because of their parity for different values for $s$, $\braket{\Psi_{0|\gamma}}{\Psi_{1|\gamma'}}=0$ 
for all $\gamma$ and $\gamma'$. This allows us to write
\begin{align}
   \left| \sqrt{\mathtt{CJ}\big[\proc{P}[\cE_\gamma^{\times N}]\big]} \sqrt{\mathtt{CJ}
   \big[\cE_\gamma^{\otimes M}\big]} \right|&= \sum_{s_1,\dots,s_M}\!\left( \Psi_{s_1|\gamma} \otimes\dots
   \otimes \Psi_{s_N|\gamma} \otimes \left|\sqrt{\Psi_{s_{N+1}|\hat \gamma}}\sqrt{\Psi_{s_{N+1}| \gamma}}\right| 
   \otimes\dots\otimes \left|\sqrt{\Psi_{s_{M}|\hat \gamma}}\sqrt{\Psi_{s_{M}| \gamma}} \right|\right).
\end{align}
Then, for ${\rm Pr}^{(1)}(s|\gamma) = \tr \Psi_{s|\gamma}= 
\begin{cases}
1-\frac{\gamma}{2} & s=0 \\
\frac{\gamma}{2} & s=1
\end{cases}
$, we get
\begin{align}\nonumber
    F_{\rm CJ}(\proc{P}[\cE_\adp^{\times N}], \cE_\adp^{M}) &= \tr  \left| \sqrt{\mathtt{CJ}\big[\proc{P}
    [\cE_\gamma^{\times N}]\big]} \sqrt{\mathtt{CJ}\big[\cE_\gamma^{\otimes M}\big]} \right| \\ \nonumber
    & = \sum_{s_1,\dots,s_M}\!\left( {\rm Pr}^{(1)}(s_1|\gamma)  \times\dots\times {\rm Pr}^{(1)}(s_N|\gamma)  
    \times \tr \left|\sqrt{\Psi_{s_{N+1}|\hat \gamma}}\sqrt{\Psi_{s_{N+1}| \gamma}}\right| \times\dots\times \tr 
    \left|\sqrt{\Psi_{s_{M}|\hat \gamma}}\sqrt{\Psi_{s_{M}| \gamma}} \right|\right) \\\nonumber
    & = \sum_{s_1,\dots,s_N}\!\left( {\rm Pr}^{(1)}(s_1|\gamma)  \times\dots\times {\rm Pr}^{(1)}(s_N|\gamma)  
    \times \left(\fcj(\cE_{\hat \gamma(t)},\cE_{\gamma})^{M-N}\right) \right)\\
    &= \sum_{t=0}^N\, {\rm Pr}_{\rm Bin}(t|N,\nicefrac{\gamma}{2})\, \fcj(\cE_{\hat \gamma(t)},
    \cE_{\gamma})^{M-N}\, 
\end{align}
with the binomial distribution ${\rm Pr}_{\rm Bin}(t|N,\nicefrac{\gamma}{2})$.

\section{Semidefinite programming techniques}\label{app:sdp}

In this appendix we give a more detailed exposition of the SDP approximation of the optimal cloning task 
presented in Sec.~\ref{sec:sdp} in the main text. Before we begin, we will establish a notation convention. In 
the text we defined the Choi-Jamilkowski isomorphism to be normalised. Here, we denote the non-normalised version 
with an overline, since some of the identities that we will use deal with non-normalised objects. For 
$\ket{\Phi^+}=\nicefrac{1}{\sqrt{d_1}}\, \sum_{i=0}^{d_1-1} \ket{ii}$
\begin{alignat}{2}
    {\mathtt{CJ}}[\cA] &:= ({\rm id} \otimes \cA) \left [\ketbra{\Phi^+}\right] \quad \in \cL(\cH_{\rm 1 }\otimes 
    \cH_{\rm 2 })\\
     \overline{\mathtt{CJ}}[\cA] &:= d_{\rm 1} \!\cdot\!{\mathtt{CJ}}[\cA].
\end{alignat}

Our goal now is to give an explicit form of the feasibility program of \eqnref{eq:pseudoopt} that can be executed on  a classical computer. To avoid unnecessary complications we will start with a simpler 
problem of  maximizing the fidelity $\fcj (\proc{P}[\cE^{\times N}],\cE^{\otimes M} )$ with 
respect to the process $\proc{P}$ for a fixed channel $\cE$. As argued in the main text, 
following ~\cite{Skrzypczyk2023}, this maximization can be cast as the following SDP, similar to the one given 
in the main text,
\begin{subequations}\label{eq:fidsdp}
\begin{align} 
\underset{{\proc{P}} }{\textbf{max}}\, \fcj (\proc{P}[\cE^{\times N}],\cE^{\otimes M} )
  =\; \underset{\proc{P},Y}{\textbf{max}}&\quad \tfrac{1}{2}\tr (Y +Y^\dagger)\\
     \textbf{subject to}&\quad
		\begin{pmatrix}
					 \mathtt{CJ}[\proc{P}[\cE^{\times N}]]& {Y} \\
					Y^\dagger  & \mathtt{CJ}[\cE^{\otimes M}]
				\end{pmatrix} \ge 0 \\
&\quad \textrm{process constraints on } \proc{P}\, .
\end{align}
\end{subequations}

\begin{wrapfigure}{h}{0.38\textwidth}
    \begin{center}
    \vspace{-0.4 cm}
    \includegraphics[width=0.3 \textwidth]{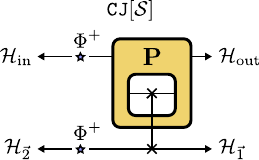}
    \end{center}
     \vspace{-0.2 cm}
    \caption{A circuit preparing~\cite{Taranto2025} the \ChJa state $\mathtt{CJ}[\mathcal{S}]$ representing the 
    process $\proc{P}:\cL(\cH_{\vec 1}\tightto\cH_{\vec 2})\tightto \cL(\cH_{\rm in}\tightto\cH_{\rm out})$. Each 
    source (star) prepares the maximally entangled state $\ket{\Phi^+}$ in $\cH_{\rm in}\otimes\cH_{\rm in}$, 
    respectively $\cH_{\vec{2}}\otimes\cH_{\vec{2}}$.}
    \vspace{-0.5 cm}
    \label{fig:CJ}
\end{wrapfigure}

What remains is to parametrize the variable $\proc{P}$ and specify its constraints.  The process $\proc{P}$ takes 
as input N copies of a CPTP map and returns as output another CPTP map, formally
\begin{align}
  \proc{P} :  \bigotimes_{i=1}^{N} \cL( \cH_{1_i} \tightto \cH_{2_i})  &\to \cL(\cH_{\rm in} \tightto 
  \cH_{\rm out})  \\
   \cE^{\otimes N} & \mapsto\proc{P}[\cE^{\times N}].
\end{align}
To avoid notational clutter we will use the notation $\cL(\cH_{\vec{1}}\,\tightto\, \cH_{\vec{2}})$ to mean 
$\bigotimes_{i=1}^{N} \cL( \cH_{1_i} \tightto \cH_{2_i})$, i.e. tensor products of linear maps 
$\cE^{(i)} \in \cL(\cH_{{1}_i}\tightto \cH_{{2}_i})$. Here, both $\cE^{\otimes N}$ and $\proc{P}[\cE^{\times N}]$ 
can be represented by their \ChJa operators 
\be
\overline{\mathtt{CJ}}[\cE^{\otimes N}]\in \cL(\cH_{\vec{1}} \otimes \cH_{\vec{2}}) \qquad \text{and} \qquad 
\overline{\mathtt{CJ}}[\proc{P} [\cE^{\times N}]]\in \cL (\cH_{\rm in} \otimes \cH_{\rm out}).\nonumber
\ee

In turn, every process $\proc{P}$ has a corresponding \textit{representing map}, which is the induced CP linear 
map on these operators
\begin{align}\label{eq:repP}
   \cS :
    \cL(\cH_{\vec{1}} \otimes \cH_{\vec{2}})
  &\to  \cL (\cH_{\rm in} \otimes \cH_{\rm out}) \\
  \overline{\mathtt{CJ}}[\cE^{\otimes N}] &\mapsto
    \cS( \overline{\mathtt{CJ}}[\cE^{\otimes N}]) := \overline{\mathtt{CJ}}[\proc{P} [\cE^{\times N}]].
\end{align}
By the \ChJa isomorphism, the action of the final channel $\proc{P} [\cE^{\times N}]$ on any input state 
in $\cL(\cH_{\rm in})$ can be obtained from the representing map via
\begin{align}\label{eq: CJ1app}
    \proc{P}[\cE^{\times N}][\,\cdot\,] = \text{tr}_{\rm in} \Big((\id_{\rm out} \otimes \,
    \cdot\,{}^{\rm T}])\,\cS(\overline{\mathtt{CJ}}[\cE^{\otimes N}]) \Big).
\end{align}

The representing map $\cS$ itself can be associated with its \ChJa operator 
$\overline{\mathtt{CJ}}[\cS] \in \cL(\cH_{\rm in}\otimes \cH_{\rm out}\otimes \cH_{\vec{1}}\otimes 
\cH_{\vec{2}})$,
so that its action on operators $R\in \cL(\cH_{\vec{1}}\otimes \cH_{\vec{2}})$ also follows from the \ChJa 
isomorphism
\begin{align}\label{eq:pact}
  \cS(R) &:=\tr_{\vec{1} \,\vec{2}}[(\id_{\rm in}\otimes \id_{\rm out}\otimes R^{\rm T})
    \overline{\mathtt{CJ}}[\cS]]\quad \in \cL(\cH_{\rm in}\otimes \cH_{\rm out}),
\end{align}
where the trace over $\vec{1}$ indicates all spaces $\cH_{1_i}$ and similarly for $\vec{2}$. Similarly to the 
case of channels, the \ChJa state ${\mathtt{CJ}}[\cS]$ representing the process 
$\proc{P}$, can be prepared by embedding the process in a simple circuit composed with sources of maximally 
entangled states and SWAP gates (see Fig.~\ref{fig:CJ} and Ref.~\cite{Taranto2025}). Substituting 
\eqnref{eq:pact} into \eqnref{eq: CJ1app} for the variable $\mathtt{CJ}[\proc{P}
[\cE^{\times N}]]$ our optimization variable appearing in \eqnref{eq:fidsdp}
\begin{align}\nonumber
     \mathtt{CJ}[\proc{P}[\cE^{\times N}]] & 
   =\tfrac{1}{d_{\rm in} d_{\rm out}}\cS(\overline{\mathtt{CJ}}[\cE^{\otimes N}]).\\
    & = \label{eq: rho app} 
   \tfrac{1}{{d_{\rm in}}{d_{\rm out}}}\tr_{\vec{1}\, \vec{2}}\Big(\id_{\rm in
   }\otimes \id_{\rm out }\otimes {\overline{\mathtt{CJ}}[\cE^{\otimes N}]}^{\rm T})\,\overline{\mathtt{CJ}}
   [\cS]\,\Big),
\end{align}
is now in terms of the operator $\overline{\mathtt{CJ}}[\cS]\simeq \cS \simeq \proc{P}$.

Now, it is crucial that the conditions for $\proc{P}$ to be a valid process---parallel (par), sequential (seq) 
or general (non-causal, nc) in Fig.~\ref{fig:parseqnc}---can be translated into simple linear and semidefinite 
constraints on the operator $\overline{\mathtt{CJ}}[\cS]$.  More precisely, one can write
\begin{align}\label{app:process const}
\textrm{process constraints on } \proc{P}\qquad \simeq \qquad  \begin{cases}
    \overline{\mathtt{CJ}}[\cS] \ge 0\\
   \tr \overline{\mathtt{CJ}}[\cS]  = d_{\rm in}d_{\vec{2}}\\
    \widetilde{P}^\bullet[\overline{\mathtt{CJ}}[\cS] ] = \overline{\mathtt{CJ}}[\cS] \qquad \bullet = 
    \{\rm par, seq, nc\},
    \end{cases}
\end{align}
where $\widetilde P^\bullet: \cL(\cH_{\rm in}\otimes \cH_{\rm out}\otimes \cH_{\vec{1}}\otimes 
\cH_{\vec{2}}\tightto \cH_{\rm in}\otimes \cH_{\rm out}\otimes \cH_{\vec{1}}\otimes \cH_{\vec{2}})$ are linear 
projective  maps that can be derived by a method introduced in Theorem~2 of Ref.~\cite{Milz2024characterising} 
(see the example in the next section), and we used the shorthand notation  $d_{\vec{2}}:=\Pi_i \,d_{2_i}$. Hence, 
using \eqnsref{eq: rho app}{app:process const} we can rewrite the optimization of \eqnref{eq:fidsdp} in 
a more explicit form
\begin{subequations}\label{eq:sdpprog}
\begin{alignat}{1} 
  \underset{{\proc{P}} }{\textbf{max}} \,
  \fcj (\proc{P}[\cE^{\times N}] , \cE^{\otimes M} )\;
  =\quad \underset{\overline{\mathtt{CJ}}[\cS], Y}{\textbf{max}}  &\quad\tfrac{1}{2}\tr (Y + Y^\dagger) \\
     \label{eq:matopt2}
     \textbf{subject to}
		&\begin{pmatrix}
		\tfrac{1}{{d_{\rm in}}{d_{\rm out}}}\tr_{\vec{1}\,\vec{2}}[(\id_{\rm in}\otimes \id_{\rm out}
            \otimes {\overline{\mathtt{CJ}}[\cE^{\otimes N}]}^{\rm T})\,\overline{\mathtt{CJ}}[\cS]]\;\; & {Y} \\
		Y^\dagger & \mathtt{CJ}[{\cE^{\otimes M}}]
		\end{pmatrix} \ge 0 \\
            \quad &\quad\tr   \overline{\mathtt{CJ}}[\cS] = d_{\rm in}d_{\vec{2}}\\
            \quad &\quad\overline{\mathtt{CJ}}[\cS] \ge 0\\
            \quad &\quad \widetilde{P}^\bullet[\overline{\mathtt{CJ}}[\cS]] = \overline{\mathtt{CJ}}[\cS]  
            \qquad \text{ for } \bullet =\{\rm par, seq, nc\}\,.
\end{alignat}
\end{subequations}
To execute it on a computer it remains to compute the desired  projective map $\widetilde{P}^\bullet$, which 
depend on the specific choice of $N$. 

Before discussing concrete examples, recall that originally we are not interested in maximizing the fidelity 
$ \fcj (\proc{P}[\cE^{\times N}] , \cE^{\otimes M} )$ for a fixed channel. Instead, we want to verify 
(\eqnref{eq:pseudoopt}) whether for some $\proc{P}$ a certain fidelity $ \fcj (\proc{P}[\cE_i^{\times N}] , 
\cE_i^{\otimes M} )>x$ is feasible for all channels $\cE_i$ from a discrete set $\eucal{N}$ with $|\eucal{N}|=H$. 
Hence, instead of the optimization given in \eqnref{eq:sdpprog} we are actually interested in the following  \emph{feasibility} SDP
\begin{subequations}\label{eq:sdpfeas}
\begin{alignat}{3} 
  \quad &\textbf{find} \quad \overline{\mathtt{CJ}}[\cS]\,,
  \{Y_i\}_{i=1}^H
  &&\\
     &\textbf{subject to}  
		&&\begin{pmatrix}
    \tfrac{1}{d_{\rm in}d_{\rm out}}\tr_{\vec{1}\vec{2}}[(\id_{\rm in}\otimes \id_{\rm out}\otimes 
    {\overline{\mathtt{CJ}}[\cE_i^{\otimes N}]}^{\rm T})\,\overline{\mathtt{CJ}}[\cS]]\;\; & {Y_i}\\
    Y_i^\dagger & \mathtt{CJ}[{\cE^{\otimes M}_i}]\end{pmatrix} \ge 0 
    \quad \forall i\\
    &\quad &&\tr   \overline{\mathtt{CJ}}[\cS] =  d_{\rm in}d_{\vec{2}}\\
    &\quad &&\overline{\mathtt{CJ}}[\cS] \ge 0 \\
    &\quad && \widetilde{P}^\bullet[\overline{\mathtt{CJ}}[\cS]] = \overline{\mathtt{CJ}}[\cS]  \qquad
    \text{ for } \bullet =\{\rm par, seq, nc\}\\
    &\quad && \tfrac{1}{2}\tr (Y_i+ Y_i^\dagger)  \ge x \quad \forall i.
\end{alignat}
\end{subequations}

\subsection{Example of process constraints}

\begin{figure}
    \centering
    \includegraphics[width=0.7\linewidth]{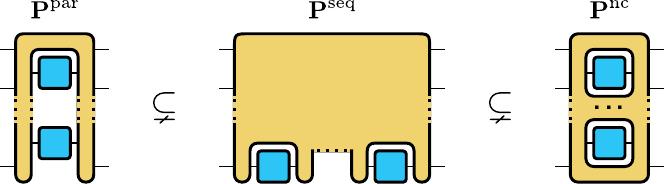}
    \caption{The yellow boxes represent parallel (par), sequential (seq) and general (non-causal, nc) processes.}
    \label{fig:parseqnc}
\end{figure}

As concrete examples, we give the projectors for the parallel ($\rm{par}$), sequential ($\rm{seq}$), and 
general (non-causal, $\rm{nc}$) processes (represented in Fig.~\ref{fig:parseqnc}) when the number of input channels is $N =2$, i.e., for
\begin{align}
\proc{P}: (\cL(\cH_{1_1}\rightarrow \cH_{2_1})
\otimes  
\cL(\cH_{1_2}\rightarrow \cH_{2_2}))\longrightarrow
\cL(\cH_{\rm in}\rightarrow \cH_{\rm out}),
\end{align}
in terms of the \textit{trace-and-replace} operator ${}_{A}{X}:= \tr_A X\otimes \tfrac{\id}{d_A}$, which is 
idempotent $_{A}({}_{A}X) = {}_{A}X$ and commutative ${}_{AB}X = {}_{BA}X$. The projectors read 
\begin{align}
 \widetilde{P}^{{\rm par}}[X] 
      &=
       X -  {}_{\rm out}X  +{}_{2\,\rm{out}}X 
      - {}_{12\,\rm{out}}X 
      + {}_{\rm{in} \,12\,\rm{out}}X \\
          \widetilde{P}^{\rm seq}[X] &: =
          X 
      - {}_{\rm{out}}X
      + {}_{2_2 \,\rm{out}}X
    - {}_{1_22_2\,\rm{out}}X
    + {}_{2_11_12_2\rm{out}}X
    - {}_{1_12_12_12_2\,\rm{out}}X
      + { }_{\rm{in}1_12_11_22_2\,\rm{out}}X\\
           \begin{split}
          \widetilde{P}^{\rm NC }[X] &: 
      = 
                X 
      - {}_{\rm{out}}X
      + {}_{2_2\,\rm{out}}X
    - {}_{2_1 2_2\,\rm{out}}X
    + {}_{1_2\,\rm{out}}X
    - {}_{1_2 2_2\,\rm{out}}X
    + {}_{1_2 2_1 2_2\,\rm{out}}X
    - {}_{1_1 2_1\,\rm{out}}X
    + {}_{1_1 2_1 2_2\,\rm{out}}X\\
   &\qquad\qquad\qquad\qquad\qquad \qquad\qquad\qquad - {}_{1_1 2_1 2_1 2_2\,\rm{out}}X
      + { }_{\rm{in}1_1 2_1 1_2 2_2\,\rm{out}}X.
      \end{split}
\end{align}

\subsection{Constraining measure-and-prepare processes}

Consider the physical realization (Fig.~\ref{fig:CJ}) of the \ChJa state $\mathtt{CJ}[\mathcal{S}]$ representing 
a process $\proc{P}$. It is obvious that for a measure-and-prepare process, the representing  \ChJa state is 
separable across the bi-partition $\cH_{\vec{1}}\otimes \cH_{\vec{2}}\,|\,\cH_{\rm in}\otimes \cH_{\rm out}$--- 
only classical information can be exchanged between these systems. Separability of $\mathtt{CJ}[\mathcal{S}]$ 
implies a positive partial transpose.  Hence all measure-and-prepare processes satisfy the SDP 
constraint 
\be
\proc{P}\,\text{ is M\&P}\quad \implies \quad \overline{\mathtt{CJ}}[\cS] ^{\rm T(\cH_{\rm in},
\cH_{\rm out})}\geq 0, 
\ee
which can be directly plugged in the SDP of Eq.~\eqref{eq:sdpfeas}. This is of course a relaxation of the 
measure-and-prepare set of processes, but can nevertheless be useful to upper-bound their performance in a given 
task.

\subsection{Cloning $1\to 2$ copies of the amplitude-damping map}
Numerically we tested the qubit amplitude-damping channel $\cE_\adp[\,\cdot\,] := K_0(\adp) \cdot 
K_0^\dag(\adp) +K_1(\adp) \cdot K_1^\dag(\adp)$ with Kraus operators given by 
\begin{align}\label{eq:adchan}
K_0 = \begin{pmatrix} 
	1 & 0 \\ 
	0 & \sqrt{1-\gamma}
	\end{pmatrix}, \quad 
K_1 = \begin{pmatrix} 
	0 & \sqrt{\gamma} \\ 
	0 & 0
	\end{pmatrix}.
	\end{align}
Note the natural asymmetry of the problem at the end points: 
\begin{align}
    \gamma = 0 \qquad &\Longrightarrow \quad \cE_0 = { \rm id},\\
    \gamma = 1 \qquad &\Longrightarrow \quad \cE_1 = \cT_{\ketbra{0}{0}}
\end{align}
where $\cT_{\ketbra{0}{0}}$ trashes any input and returns $\ketbra{0}{0}$. In other words, the extremal points 
are a unitary map and the \textit{trash-and-replace} map. These extremal points will prove to be the problematic 
ones in the numerics (see Table~\ref{tab:num}). For the superchannel that clones $1\to 2$ copies of the qubit 
amplitude-damping channel $d_{\rm in}=d_{\rm out} = 4$, the superchannel acts on the following spaces 
\begin{align}
     \proc{P}: (\cL(\cH^{\otimes 2})\to \cL(\cH^{\otimes 2}))    \longrightarrow (\cL(\cH^{\otimes 4})\to 
     \cL(\cH^{\otimes 4})),
\end{align}
and its corresponding respresenting map (the induced linear CP map on \textit{Choi} operators) by
\begin{align}
    \cS: \cL(\cH^{\otimes 2} \otimes \cH^{\otimes 2})
    \longrightarrow 
    \cL(\cH^{\otimes 4} \otimes \cH^{\otimes 4}), 
\end{align}
i.e., a map which takes dimension $d_{} = 4$ matrices on its input to $d_{} = 16$ matrices on its output. The 
Choi of this map is defined as the action on two copies of the input space
\begin{align}
  \overline{\mathtt{CJ}}[\cS] = \id\otimes \cS \big(\sum_{ij}\ket{i} \!\ket{j} \!\bra{i}\!\bra{j}\big) 
  = \sum_{ij}\ketbra{i}{j} \otimes \cS(\ketbra{i}{j}) \quad\in \cL(\cH^{\otimes 2} \otimes \cH^{\otimes 2}) 
  \otimes \cL(\cH^{\otimes 4} \otimes \cH^{\otimes 4}). 
\end{align}
Thus, the \ChJa operator of the representing map $\overline{\mathtt{CJ}}[\cS] $ for $1\to 2$ cloning is a matrix of 
dimension $d=2 \cdot 2\cdot 4\cdot 4 = 64$.
Table~\ref{tab:num} and Fig.~\ref{fig:num} we present the numeric results of $1\to 2$ cloning using the {\tt SeDuMi} solver. 

\begin{figure}[t!]
    \centering
        \includegraphics[width=0.48\textwidth]{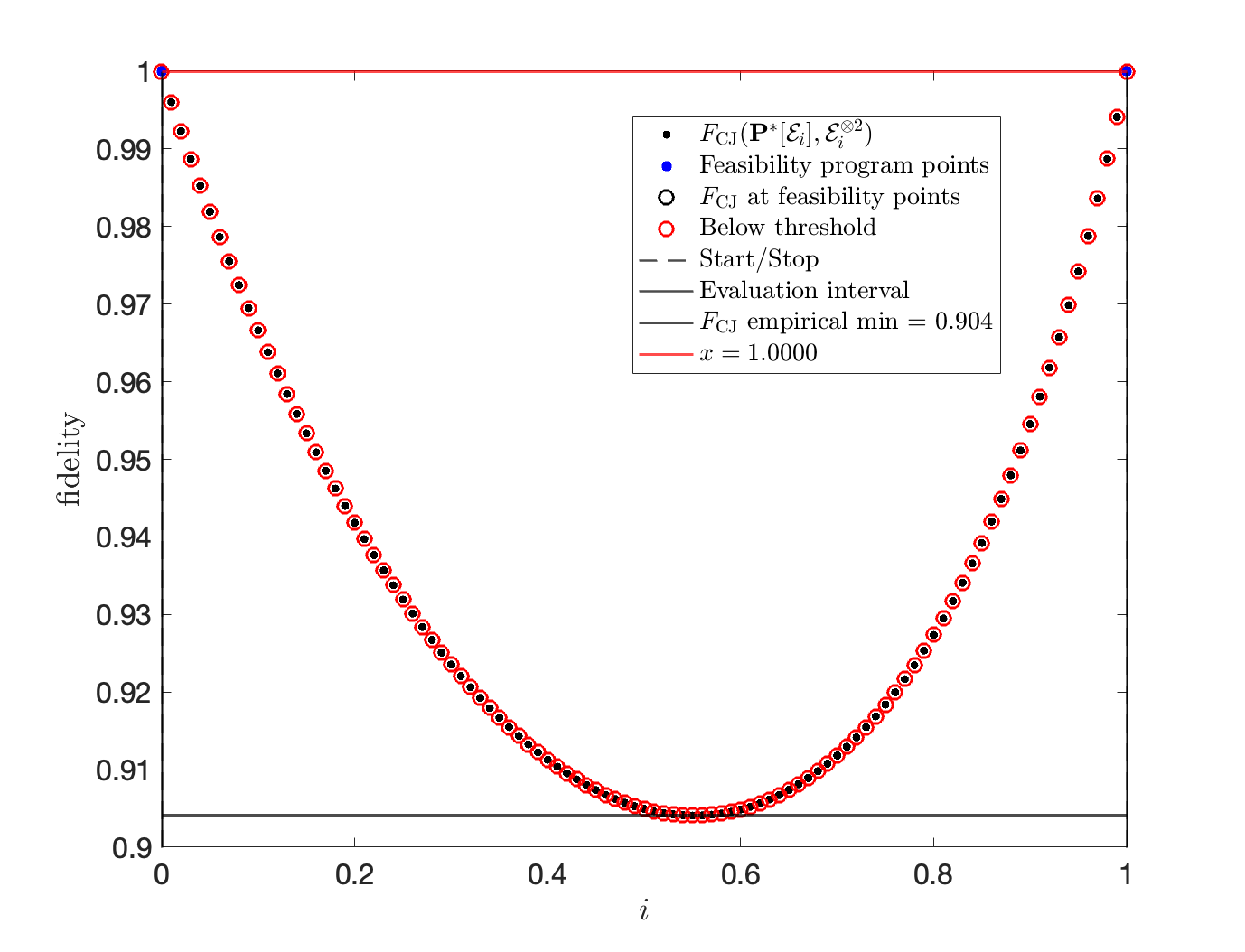}
        \includegraphics[width=0.48\textwidth]{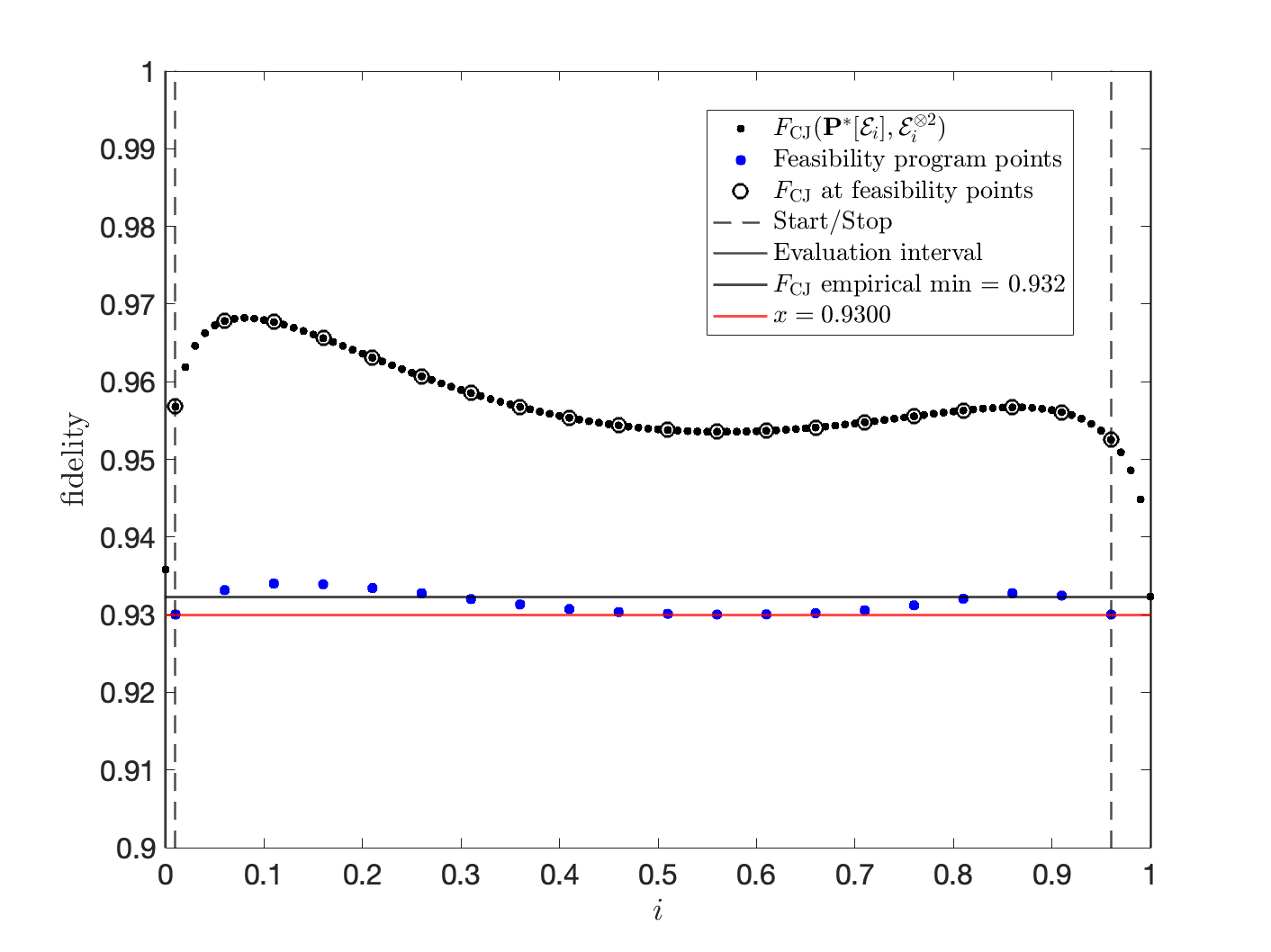}
    \caption{
    Two figures depicting optimal cloning processes for the amplitude damping map $\cE_\gamma$ found by the feasibility SDP in \eqref{eq:sdpfeas}, under different conditions. The CJ fidelity of the cloning process is evaluated on 101 AD maps on the interval $[0, 1]$ to create the plots.
    {(\bf Left)} The optimal process for the net consisting of two points $\cN = \{0, 1\}$. A feasibility threshold of $x= 1.0$ was set and the computer returned the optimal process---one which distinguishes perfectly between $\cE_0$ and $\cE_1$. All other points evaluate to a strictly lower fidelity. 
    $[0, 1]$ {\bf (Right)}
    The optimal process for the net $\cN$ consisting of 21 points in $[0.05, 0.96]$ with a feasibility threshold of $x=0.93$. The feasibility is seen to be a lower bound on the true fidelity and even points outside of the program evaluate to give fidelities that exceed the required threshold. Including the end points in the program results in infeasible runs or worse results, due to numerical instabilities. }
    \label{fig:num}
\end{figure}

\bibliography{super}
\end{document}